\documentclass[column,nofootinbib, superscriptaddress,10pt,aps,prx]{revtex4-2}
\usepackage[dvips]{graphicx} 
\usepackage{amsfonts}
\usepackage{csquotes} 
\usepackage{amssymb}
\usepackage{amscd}
\usepackage{amsmath}    
\usepackage{amsthm}
\usepackage{bm} 
\usepackage{bbm}
\usepackage{booktabs}
\usepackage{enumerate}
\usepackage{enumitem}
\usepackage{epsfig}
\usepackage{subfigure}
\usepackage{subfloat}
\usepackage{xcolor}	
\usepackage{physics}
\usepackage[most]{tcolorbox}
\usepackage{tikz}
\usetikzlibrary{quantikz}
\usepackage{tabularx}
\usepackage{float}
\usepackage{appendix}
\usepackage{makecell}
\usepackage{algpseudocode}  
\usepackage{dsfont}
\usepackage[colorlinks, linkcolor=blue, anchorcolor=blue, citecolor=blue]{hyperref}
\usepackage{MnSymbol}
\usepackage[linesnumbered,ruled,vlined]{algorithm2e}
\usepackage{natbib}
\usepackage{times}

\usepackage{tikz}
\usepackage{tikzpeople}
\usepackage{pgfplots}
\usetikzlibrary{quantikz}
\usetikzlibrary{arrows}
\usetikzlibrary{shapes,fadings,snakes}
\usetikzlibrary{decorations.pathmorphing,patterns}
\usetikzlibrary{calc}
\usetikzlibrary{positioning}
\definecolor{amber}{rgb}{1.0, 0.75, 0.0}

\newtheorem{theorem}{Theorem}
\newtheorem{fact}{Fact}
\newtheorem{lemma}{Lemma}
\newtheorem{corollary}{Corollary}
\newtheorem{claim}{Claim}

\newtheorem{definition}{Definition}

\newtheoremstyle{italichead_romanbody}
{\topsep}     
{\topsep}     
{\normalfont} 
{}            
{\itshape}    
{.}           
{0.5em}       
{}            

\theoremstyle{italichead_romanbody}

\makeatletter
\newcommand{\printappendixtoc}{%
	\section*{Contents}    
	\@starttoc{atoc}       
}

\newcommand{\l@atocsection}{\@dottedtocline{1}{0em}{2.3em}}
\newcommand{\l@atocsubsection}{\@dottedtocline{2}{3.5em}{2.3em}}
\makeatother

\makeatletter

\newcommand{\suppnote}[1]{%
	\renewcommand{\thesection}{\arabic{section}}%
	\refstepcounter{section}%
	\section*{Supplementary Note \arabic{section} -- #1}%
}

\makeatother

\graphicspath{{./figure/}}

\newcommand{\bE}{\mathbb{E}}

\newcommand{\cU}{\mathcal{U}}

\newcommand{\cA}{\mathcal{A}}
\newcommand{\cB}{\mathcal{B}}

\newcommand{\cD}{\mathcal{D}}

\newcommand{\cE}{\mathcal{E}}
\newcommand{\cO}{\mathcal{O}}

\newcommand{\Haar}{\mathrm{Haar}}

\newcommand{\poly}{\mathrm{poly}}
\newcommand{\negl}{\mathrm{negl}}

\renewcommand{\norm}[1]{\left\lVert#1\right\rVert}
\newcommand{\TV}[2]{D\left(#1, #2\right)}
\newcommand{\LWE}{\mathrm{LWE}}
\newcommand{\DLWE}{\mathrm{DLWE}}

\newcommand{\bmz}{{\bm{z}}}
\newcommand{\bmx}{{\bm{x}}}

\newcommand{\bms}{{\bm{s}}}

\newcommand{\ot}{\otimes}
\newcommand{\EE}[1]{\operatornamewithlimits{\mathbb{E}}_{#1}}

\definecolor{dred}{rgb}{.8,0.2,.2}
\definecolor{DRED}{rgb}{.8,0.2,.2}

\newtcolorbox[auto counter]{mybox}[2][]{
	enhanced,
	breakable,
	colback=blue!5!white,
	colframe=blue!75!black,
	fonttitle=\bfseries,
	title=Box \thetcbcounter: #2,#1
}

\begin{document}
	
	\title{No Cloning of Quantum Ensembles}	
	
	\author{Zhenyu Du}
	\thanks{These authors contributed equally to this work.}
	\affiliation{Center for Quantum Information, Institute for Interdisciplinary Information Sciences, Tsinghua University, Beijing 100084, China}
	
	\author{Siyuan Cheng}
	\thanks{These authors contributed equally to this work.}
	\affiliation{Center for Quantum Information, Institute for Interdisciplinary Information Sciences, Tsinghua University, Beijing 100084, China}
	
	\author{Qi Zhao}
	\email{zhaoqi@cs.hku.hk}
	\affiliation{QICI Quantum Information and Computation Initiative, School of Computing and Data Science, The University of Hong Kong, Pokfulam Road, Hong Kong}
	
	\author{Xiongfeng Ma}
	\email{xma@tsinghua.edu.cn}
	\affiliation{Center for Quantum Information, Institute for Interdisciplinary Information Sciences, Tsinghua University, Beijing 100084, China}
	
	\author{Xiao Yuan}
	\email{xiaoyuan@pku.edu.cn}
	\affiliation{Center on Frontiers of Computing Studies, Peking University, Beijing 100871, China}
	\affiliation{School of Computer Science, Peking University, Beijing 100871, China}
	
	\begin{abstract}
		Modern quantum physics now enables control of quantum systems at the level of individual trajectories, opening a new frontier that links quantum information theory, quantum many-body physics, and quantum thermodynamics, and uncovers novel non-equilibrium phenomena such as deep thermalization and measurement-induced entanglement.
		However, a central challenge remains: their characterization relies on measuring nonlinear properties of individual quantum states, a task tantamount to fine-grained cloning of a quantum ensemble.
		Here, the fundamental laws governing the cloning of quantum ensembles are investigated.
		First, a general no-cloning theorem for arbitrary ensembles is established from an information-theoretic perspective, even assuming multiple copies of the ensemble’s purification.
		It is then shown that this barrier can be unexpectedly circumvented for physical ensembles generated by finite-time evolutions.
		Nevertheless, these tasks are proven to remain computationally intractable, even when the full circuit description of state preparation is known.
		This stands in sharp contrast to the conventional no-cloning theorem, which relies on the state being unknown.
		Together, these results establish new fundamental principles of quantum mechanics, reveal intrinsic trade-offs among sample complexity, computational complexity, and quantum measurements, and highlight the necessity of problem-specific strategies for probing measurement-induced quantum phenomena.
	\end{abstract}
	
	\maketitle
	
	\section{Introduction}
	
	The notion of quantum ensemble—a probabilistic mixture of quantum states---plays central roles in both statistical physics and quantum information science~\cite{Landau1980StatisticalPhysics, Holevo1998Capacity}. 
	In conventional experiments, the inability to track microscopic information, together with unavoidable interactions with an inaccessible environment, necessitates describing the system by a mixed state, which represents the probabilistic mixture of a quantum ensemble. 
	Studies of the statistical properties of these ensembles have led to foundational concepts such as thermalization, with profound implications permeating many-body physics and information theory~\cite{Jaynes1957InformationTheory, Deutsch1991StatisticalClosedSystem, Srednicki1994ChaosThermalization, Rigol2008ThermalizationIsolated, Rahul2015ManyBodyLocalization}.
	
	Modern quantum experiments have established a new paradigm beyond this traditional setting. 
	State-of-the-art quantum platforms now allow full interrogation of many-body quantum systems, featuring high repetition rate, controllable unitary evolutions, and projective measurements~\cite{Bluvstein2024ReconfigurableAtomArrays, Norcia2023MidcircuitMeausurementRearrangement, Lis2023MidcircuitOperations, Graham2023MidcircuitMeasurements}. 
	This enables a shift from studying only the ensemble average to accessing its purification---a pure state that produces the ensemble via partial measurement, and further allows for tracking individual quantum trajectories (post-selected quantum states) as measurement outcomes are recorded. 
	Rich physical phenomena have hence been discovered by studying nonlinear properties of individual post-selected states in the ensemble, which are far beyond what a mixed-state description alone can capture. 
	Prominent examples include deep thermalization~\cite{Mark2024MaximumEntropy, Cotler2023EmergentDesign, Ho2022EmergentChaotic, Ippoliti2022SolvableModelDesignTime, Ippoliti2023DynamicalPurification, Claeys2022dualunitary, Choi2023RandomStateBenchmarking, Bhore2023ConstrainedSystem,  Chang2024ChargeConserving, Liu2024DeepThermalizationGaussian, Mok2024OptimalConversion, Cheng2025EmergentQEC, Chakraborty2025ComputationalDeepThermalization, Shrotriya2025NonlocalityDeepThermalization, Yu2025MixedStateDeepThermalization, Liu2025CoherenceDeepThermalization, Zhang2025HolographicDeepThermalization, Bejan2025MatchGateDeepThermalization}, measurement-induced entanglement~\cite{Li2018ZenoEffect, Li2019MIPT, Skinner2019MIPT, Cao2019MonitoringFermion, Chan2019UnitaryProjective, Bao2020TheoryMIPT, Choi2020QEC_MIPT, Jian2020MeasurementInducedCriticality, Gullans2020DynamicPurification, Zabalo2020CriticalProperties, Li2021ConformalInvariance, Ippoliti2021MeasurementOnlyDynamics, Alberton2021FermionTransition, Fan2021SelfOrganizedErrorCorrection, Agrawal2022MonitoredQuantumCircuits, Noel2022MeasurementInducedPhaseTrappedIon, Koh2023MeasurementInducedEntanglement, Hoke2023MeasurementInducedEntanglement, Kawabata2023MIPT_NonHermitian, Fuji2020CriticalityContinuousMonitoring, Ippoliti2024LearnabilityTransition, Kamakari2025CrossEntropySuperconducting}, magic~\cite{Niroula2024MagicPhaseTransition, Zhang2024MagicConcentration, Loio2025quantumstatedesignsmagic, Du2025CertifyingLocalizable}, and complexity~\cite{Du2025CertifyingLocalizable, Du2025SpacetimeComplexity, McGinley2025MeasurementInduced2D, Suzuki2025quantumcomplexity}.
	Collectively, the new paradigm of studying quantum ensembles is deepening our understanding of fundamental mechanisms such as thermalization~\cite{Mark2024MaximumEntropy} and quantum error correction~\cite{Bravyi2005Universal, Gullans2020DynamicPurification, Choi2020QEC_MIPT, Fan2021SelfOrganizedErrorCorrection, Cheng2025EmergentQEC}.
	
	Despite these advances, this emerging paradigm faces a central paradox: nonlinear properties are essential to the phenomena of interest, yet they are often difficult to access experimentally.
	Their observation typically requires multiple copies of a post-selected quantum state, but the intrinsically stochastic nature of quantum measurements makes the production of even two identical copies from an ensemble highly challenging.
	As a consequence, current experimental studies are confined to highly specialized regimes~\cite{Noel2022MeasurementInducedPhaseTrappedIon, Koh2023MeasurementInducedEntanglement, Hoke2023MeasurementInducedEntanglement, Choi2023RandomStateBenchmarking, Niroula2024MagicPhaseTransition, Kamakari2025CrossEntropySuperconducting}.
	This tension raises a fundamental question: do intrinsic constraints limit such ensemble-processing tasks? Addressing this question is essential both for understanding the true nature of these emerging phenomena and for clarifying the borderline of quantum information processing.
	
	In this work, the fundamental laws governing ensemble cloning are examined from both information-theoretic and computational perspectives. We first demonstrate that no sample-efficient algorithm exists for cloning states within an ensemble or probing their nonlinear properties, even when the algorithm is granted access to multiple copies of the ensemble's purification (see Fig.~\ref{fig:settings}). This result goes beyond the conventional no-cloning theorem~\cite{Wootters1982NoCloning}: while the traditional hardness arises from the indistinguishability of non-orthogonal quantum states, the hardness we identify further stems from a fundamental tradeoff between random measurements and sample complexity. 
	For instance, conventional cloning via state tomography requires only constant sample complexity for small systems.
	In contrast, the complexity of our cloning task can be exponentially large even for a single-qubit ensemble, provided the trajectory space is sufficiently large.
	This rigorously establishes that the nonlinear properties of quantum ensembles are, in general, \emph{information-theoretically} difficult to access.
	
	\begin{figure*}[!htbp]
		\centering
		\includegraphics[width=.8\linewidth]{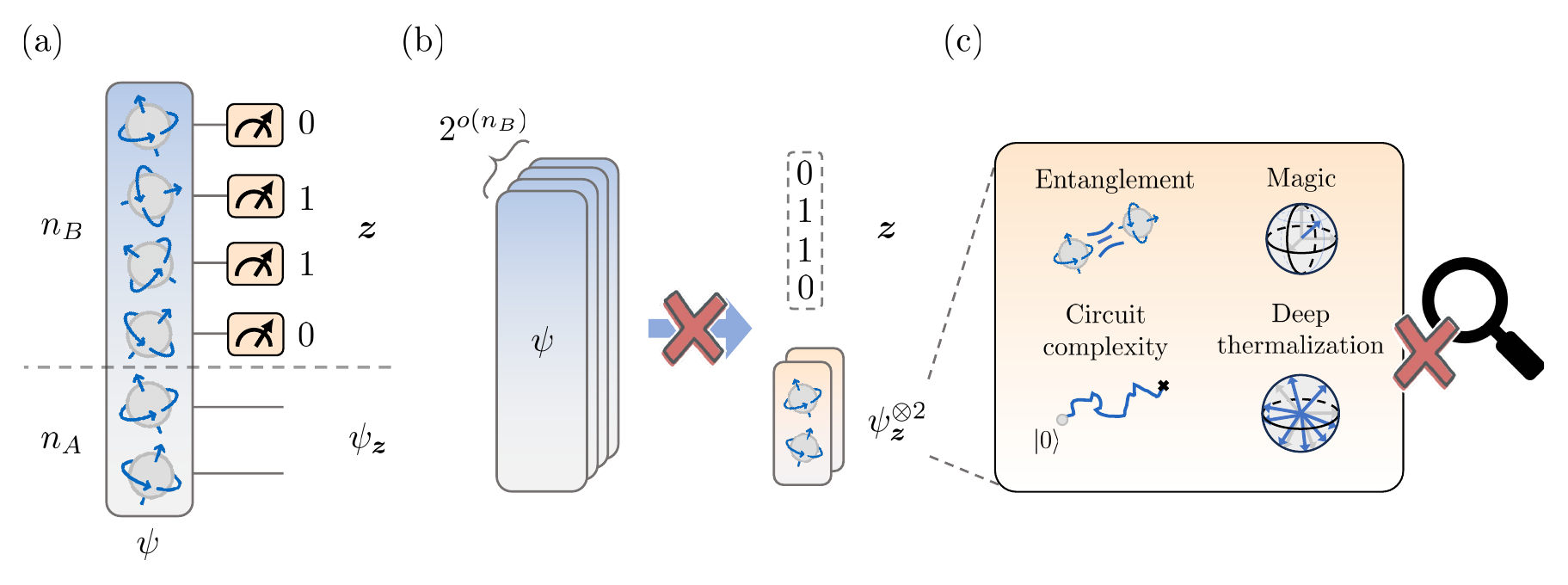}
		\caption{Settings of cloning a quantum ensemble and estimating its nonlinear properties. 
			(a) We consider an operational setting in which repeated copies of the pure state $\ket{\psi}_{AB} = \sum_{\bmz} \sqrt{p(\bmz)} \ket{\psi_{\bmz}}_A \otimes \ket{\bmz}_B$ are accessible. This state is a purification of the quantum ensemble $\{(p(\bmz),\ket{\psi_{\bmz}})\}_{\bmz}$ on system $A$.  Local projective measurements on the environment $B$ in the computational basis then generate the ensemble on $A$. 
			(b) No-cloning of quantum ensembles. We prove that cloning states in the ensemble is hard on average, even with sub-exponential access to its purification. 
			(c) Hardness of estimating nonlinear properties. We prove this task is hard both information-theoretically and computationally in the worst case. This hardness precludes efficient algorithms for observing a wide range of measurement-induced quantum phenomena.
		}
		\label{fig:settings}
	\end{figure*}
	
	While the results above exclude universal algorithms, quantum ensembles appearing in the lab generally have an efficient preparation circuit~\cite{Poulin2011ConvenientIllusion}. 
	We show that this prior knowledge indeed enables a sample-efficient algorithm for ensemble cloning, effectively bypassing the information-theoretic barrier. 
	Yet, the algorithm still requires exponential computational time for classical post-processing.
	This immediately prompts the question: is it computationally hard to clone ensembles generated from such physical procedures?    
	This leads to another central result of our work: we show that the ensemble cloning and estimation tasks remain \emph{computationally} hard, even with prior knowledge that the purified state is prepared by a polynomial-size quantum circuit, or even given a complete classical description of that circuit. 
	This finding stands in stark contrast to the conventional no-cloning theorem, which relies strictly on the premise that the quantum state is unknown. 
	Our results hence reveal a new computational barrier in quantum information processing and many-body physics by establishing unclonability in the presence of complete structural knowledge.
	
	Taken together, these results uncover fundamental laws governing the ensemble-cloning task that arise from mechanisms extending beyond the traditional no-cloning principle. These include: (1) information-theoretic hardness, whereby sub-exponential access to an unknown purified state is insufficient to extract nonlinear ensemble properties; and (2) computational hardness, whereby even when the purified state is known and efficiently preparable, its nonlinear ensemble properties remain protected by a computationally hard problem, precluding their extraction by any efficient algorithm.
	
	Our work has immediate implications for measurement-induced phenomena, such as deep thermalization and measurement-induced entanglement.
	While these phenomena are well-studied in various solvable models, our results show that it is intrinsically hard to obtain their experimental signatures in the worst case. 
	This hardness holds broadly: (1) for state-agnostic algorithms, such as those used for probing measurement-induced phenomena in an arbitrary unknown quantum system, and (2) for state-aware algorithms, used for observing or verifying those phenomena even when assisted with full circuit-level knowledge. 
	Our findings thus prove that, for the first time, observing measurement-induced phenomena beyond classically predictable regimes is generally hard. This motivates further research into devising efficient strategies and identifying physically relevant regimes where efficient characterization remains feasible despite generic hardness.
	
	Moreover, our work identifies a constraint on recent proposals that utilize measurement-based randomness to derandomize quantum
	circuits~\cite{Turner2016Derandomizing, Cotler2023EmergentDesign}. While such methods have practical benefits---as reconfiguring quantum circuits is often more demanding than repeating experiments in a fixed setting---we reveal a major drawback: a single random circuit cannot be used twice.
	Taken together, our results delineate a new boundary on the power of quantum computers.

	\section{Results}
	\subsection{Quantum Ensembles}
	A quantum ensemble $\mathcal{E} = \{(p(\bmz), \ket{\psi_{\bmz}})\}_{\bmz}$ is a probability distribution over quantum states. 
	In this work, we focus on ensembles of pure states, while our no-go results readily extend to ensembles of mixed states. 
	In principle, any ensemble $\mathcal{E}$ can be purified by a pure state $\ket{\psi}_{AB} = \sum_{\bmz} \sqrt{p(\bmz)} \ket{\psi_{\bmz}}_A \otimes \ket{\bmz}_B$ on a larger $n$-qubit system $AB$ with system $A$ and environment $B$. The sizes of  
	subsystems $A$ and $B$ are $n_A$ and $n_B$, respectively, with the whole system size $n = n_A + n_B$.
	Local projective measurements on $B$ in computational basis then generate the ensemble on system $A$ (Fig.~\ref{fig:settings}a). Therefore, we refer to $\mathcal{E}$ as the projected ensemble of $\psi_{AB}$.
	
	We consider the scenario where copies of the pure state $\psi_{AB}$ are accessible, which aligns with modern quantum platforms that feature high-fidelity control, read-out, and high-rate repetition. 
	This setting is widely adopted in the study of quantum simulation, such as deep thermalization~\cite{Ho2022EmergentChaotic, Ippoliti2022SolvableModelDesignTime, Claeys2022dualunitary, Cotler2023EmergentDesign} and measurement-induced phase transitions~\cite{Li2018ZenoEffect, Li2019MIPT, Skinner2019MIPT}. 
	Beyond its foundational role, the purification also finds useful applications ranging from derandomizing quantum circuits~\cite{Turner2016Derandomizing} to device benchmarking~\cite{Choi2023RandomStateBenchmarking} and characterization~\cite{Tran2023MeasuringArbitrary, Mok2024OptimalConversion}.
	
	The task of ensemble cloning consists of using multiple copies of $\psi_{AB}$ to produce the ensemble $\{(p(\bmz), \psi_{\bmz}^{\otimes 2})\}$, where the measurement outcomes $\bmz$ are sampled from the same distribution as in the projected ensemble $\mathcal{E}$. Mathematically, this corresponds to preparing the classical–quantum state
	\begin{equation}
		\sigma^{(2)}_{\psi} \coloneqq \sum_{\bm z} p(\bm z)\, \psi_{\bm z}^{\otimes 2} \otimes \ketbra{\bm z},
	\end{equation}
	where the first register holds the states and the second stores the measurement outcome. 
	This should be distinguished from generating two independent samples from the projected ensemble, which corresponds to resampling rather than cloning a single sampled projected state and is straightforward given two copies of $\psi_{AB}$.
	We require the output ensemble to follow the same probability distribution $p(\bm z)$, mirroring the requirements for averaging over the ensemble. This requirement can be relaxed, as we discuss later.
	The cloning task is not only a natural primitive for probing higher-order moments of the ensembles, such as subsystem purity or the variance of an observable, but also operationally significant for protocols that need to reuse a randomly sampled state twice or more, such as when the ensemble forms pseudorandom ensembles~\cite{Joseph2003PseudoRandomUnitary, Turner2016Derandomizing, Cotler2023EmergentDesign}. 
	
	We also consider the task of estimating nonlinear properties of an ensemble. 
	We define the second moment state of the projected ensemble as $\rho_{\psi}^{(2)} \coloneqq \sum_{\bm z} p(\bm z)\, \psi_{\bm z}^{\otimes 2}$. 
	A second-order nonlinear quantity averaged over $\mathcal{E}$ is then given by
	\begin{equation}
		o_{\psi}^{(2)} \coloneqq \tr[O\,\rho_{\psi}^{(2)}]
	\end{equation}
	for a given observable $O$ acting on $\mathcal{H}_A^{\otimes 2}$. 
	The above definitions straightforwardly extend to higher-order $\rho_\psi^{(k)}$ and $o_{\psi}^{(k)}$ for $k\ge 2$. While here we focus on the simplest nontrivial case, $k=2$, the no-cloning theorems apply straightforwardly to general $k\ge 2$.
	Estimating nonlinear properties is expected to be easier than directly cloning the state, as it only requires outputting a scalar value rather than preparing a high-dimensional state. 
	Indeed, the hardness of estimation implies the hardness of cloning, but not vice versa.
	
	The estimation task provides the framework for observing measurement-induced phenomena.
	For example, $o_{\psi}^{(2)}$ captures the ensemble-averaged subsystem purity, a key signature of measurement-induced entanglement. 
	To see this, partition $A$ into $A = L \cup R$. The purity of a given projected state $\ket{\psi_z}$ is $\tr(\rho_{z,L}^2)$, where $\rho_{z,L} \coloneqq \tr_R[\psi_z]$ is the reduced density matrix on $L$. 
	The ensemble-averaged purity, $\mathbb{E}_{z}[\tr(\rho_{z,L}^2)]$, is then precisely $o_{\psi}^{(2)}$ for the swap operator $O_E \coloneqq \mathbb{S}_{L_1L_2} \otimes I_{R_1R_2}$.
	Similarly, measurement-induced magic can be quantified by $o_{\psi}^{(4)}$ using the observable $O_M \coloneqq 4^{-n_A}\sum_{P} P^{\otimes 4}$ acting on $\mathcal{H}_A^{\otimes 4}$, where the summation runs over all $n_A$-qubit Pauli operators~\cite{Oliviero2022MeasuringMagic}.
	
	Moreover, deep thermalization is defined by the convergence of the higher-order moment states $\rho_\psi^{(k)}$ to their thermal counterparts $\tau^{(k)} \coloneqq \bE_{\phi \sim \mu} \phi^{\otimes k}$,
	where the ensemble $\mu$ is described by the Haar-random distribution on system $A$ at the infinite temperature, or by the Scrooge ensemble at finite temperature~\cite{Mark2024MaximumEntropy}. 
	Certifying this convergence, $\rho_\psi^{(k)} \approx \tau^{(k)}$, is equivalent to showing $o_{\psi}^{(k)} \approx \tr[O\tau^{(k)}]$ for all normalized $O$. Therefore, the hardness of estimating $o_{\psi}^{(k)}$ directly implies a fundamental barrier to verifying deep thermalization.
	
	Having outlined the relevant settings, we now proceed to establish the fundamental laws of the ensemble cloning tasks from both information-theoretic and computational perspectives.
	We first identify the sample complexity barriers that arise in the absence of prior knowledge. Next, we show how to overcome these limitations by leveraging prior information about the state preparation. 
	Finally, we establish that despite this gain in sample efficiency, these tasks remain computationally intractable even when a full classical description of the circuit is provided.
	
	\subsection{Information-theoretical hardness}

	First, we study the sample complexity required to achieve these tasks, given access to copies of the purification of an \emph{unknown} ensemble. This setting is not only standard in quantum information processing tasks, but also highly relevant in practice, as characterizing an unknown state generally requires exponential resources~\cite{Haah2016OptimalTomo}.
	
	We note that previous works have already highlighted some limitations in this setting.
	The standard no-cloning theorem implies the hardness of cloning an individual state given access to a single copy~\cite{Wootters1982NoCloning}.
	More recently, it has been proven that distinguishing an unknown ensemble $\{p(\bmz), \psi_{\bmz}\}$ from its probabilistic mixture $\{p(\bmz), \rho\}$, where $\rho = \sum_\bmz p(\bmz)\psi_{\bmz}$, is generally hard when one can only sample directly from the ensemble~\cite{McGinley2024PostselectionFreeLearning}. Yet, a fully controlled quantum device may access multiple copies of the purification and perform coherent joint operations across them. A priori, it is plausible that such highly entangled operations could extract nonlinear information that is inaccessible to direct-sampling strategies~\cite{Chen2022Memory, Huang2022QuantumAdvantage}.
	
	Our result goes beyond these known limitations. We show that it is hard to clone states in an ensemble or estimate their nonlinear properties, even when one is given access to sub-exponential many copies of the ensemble's purification and is allowed to perform arbitrary entangled operations across those copies. 
	Therefore, our result rules out efficient quantum algorithms with a substantially stronger access model~\cite{Chen2022Memory, Liu2025ExponentialSeparationPurification}. 
	This setting aligns with the capabilities of advanced quantum devices. The no-go result is formulated as follows:
	
	\begin{theorem}[No-cloning of quantum ensembles, informal]\label{thm:summary}
		Any quantum algorithm that takes copies of an unknown purification state $\ket{\psi}_{AB}$ of an ensemble $\mathcal{E}$ as input and aims to clone individual states (preparing $\sigma_{\psi}^{(2)}$) or estimate nonlinear properties (outputting $o_{\psi}^{(2)}$ for certain observables $O$ satisfying $\norm{O}_{\infty} \le 1$) with a small constant error requires $2^{\Omega(n_B)}$ sample complexity.
		
		This lower bound holds in the average case (over Haar-random $\psi_{AB}$) for the cloning task, and in the worst case (over $\psi$ and $O$) for the estimation task.
	\end{theorem}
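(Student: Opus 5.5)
The strategy is to reduce both tasks to a distinguishing problem between two families of purifications. In the first, the projected states carry genuine fine-grained structure. In the second, the ensemble on $A$ is indistinguishable at the level of copies of $\psi_{AB}$, but the second moment $\rho^{(2)}_\psi$ differs by a constant. Concretely, compare a Haar-random $\ket{\psi}_{AB}$ with the state $\ket{\phi}_{AB}=\sum_{\bmz}\sqrt{p(\bmz)}\ket{\phi_{\bmz}}_A\ket{\bmz}_B$. Here the $\ket{\phi_{\bmz}}$ are drawn i.i.d.\ Haar on $A$, but the phases or labels are scrambled in a way that destroys the correlation between $\bmz$ and the specific state a cloner must reproduce. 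The cleanest version uses the random-phase trick: multiply by a random diagonal unitary $D=\sum_{\bmz} e^{i\theta_{\bmz}}\ketbra{\bmz}$ on $B$. This leaves the projected ensemble unchanged, so the moments $\rho^{(k)}_\psi$ are invariant. What actually carries the information about each $\psi_{\bmz}$ is the weight-$\sqrt{p(\bmz)}\sim 2^{-n_B/2}$ branch.

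The key quantitative step is to bound the trace distance between $\bE_\psi[\psi_{AB}^{\otimes t}]$ and a reference state in which each branch $\bmz$ is re-randomized independently. Expanding $\psi^{\otimes t}$ in the $B$ basis, a term involving labels $\bmz_1,\dots,\bmz_t$ only "sees" a given projected state $\psi_{\bmz}$ through the copies where $\bmz_i=\bmz$. For $t\ll 2^{n_B/2}$, the birthday bound says that with probability $1-O(t^2 2^{-n_B})$ all $t$ labels are distinct. On that collision-free subspace, a Haar $\psi_{AB}$ restricted to distinct labels is statistically identical to independent $\ket{\phi_{\bmz}}$'s. Equivalently, $t$ copies of a Haar state on $2^n$ dimensions look like the symmetric-subspace projector, and the collision-free part is close to that of a random product of branches. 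I would make this precise with the standard symmetric-subspace computation, comparing $\Pi_{\rm sym}/\dim$ on $\mathcal{H}^{\otimes t}$ to the analogous mixture. This gives a distance $O(t^2/2^{n_B})$.

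For the cloning lower bound, consider the final output $\sigma$ of an algorithm. Measuring the output label register yields some $\bmz$. In the collision-free regime, the algorithm's input contains at most one copy of the branch $\psi_{\bmz}$, essentially a single copy of a Haar-random state on $A$, and any information about $\psi_{\bmz}$ arrives with amplitude $2^{-n_B/2}$. Producing $\psi_{\bmz}^{\otimes 2}$ with constant fidelity from one copy of a Haar state is bounded by the optimal universal cloner fidelity $2/(d_A+1)<1$, which is bounded away from $1$ even when $d_A=2$. The fidelity of the algorithm's output with $\sigma^{(2)}_\psi$ is therefore at most this constant plus $O(t^2 2^{-n_B})$, forcing $t=2^{\Omega(n_B)}$. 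For the worst-case estimation bound, take $O=\mathbb{S}_{A_1A_2}$ and compare a Haar $\psi_{AB}$, whose projected states are pure so that $o^{(2)}=1$, with a state whose branch states are mixed. Alternatively, compare it with $\ket{\psi}_{AB}\otimes$ an extra purifying register hidden inside $B$. A cleaner option is to compare a Haar state with $n_B$ environment qubits against one in which half of the label bits are traced into a hidden random function, making the projected states $\rho$-like with purity about $2/(d_A+1)$. Since both families agree on $t$-copy statistics up to $O(t^2 2^{-n_B})$, estimating $o^{(2)}$ within a small constant requires $t=2^{\Omega(n_B)}$.

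The main obstacle will be the indistinguishability step itself. I must show that $t$ coherent copies of $\psi_{AB}$, with arbitrary entangled measurements across them, reveal no correlation between branches beyond $O(t^2 2^{-n_B})$. I would attack this first via the Weingarten/symmetric-subspace expression for the $t$-th moment of a Haar state, splitting the label tuples into collision-free and colliding parts and bounding the colliding part in trace norm by its total weight.
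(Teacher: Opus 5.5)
Your estimation half has a genuine gap. With $O=\mathbb{S}_{A_1A_2}$ one has $o^{(2)}_\psi=\sum_{\bmz}p_\psi(\bmz)\tr(\psi_{\bmz}^2)=1$ for \emph{every} pure $\ket{\psi}_{AB}$. Measuring $B$ in a basis on a pure state on $AB$ always leaves a pure conditional state on $A$, so this estimation problem is trivial: output $1$. No pure purification can have ``mixed branch states''. A purifying register placed inside $B$ is itself measured in the computational basis, and a register left unmeasured is not part of the model.

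Your other candidate second families fail for a different reason. Scrambling phases, relabeling $\bmz$, or re-randomizing each branch independently all induce exactly the same distribution over projected ensembles as a Haar $\psi_{AB}$. They therefore give no gap in $o^{(2)}_\psi$ for any $O$.

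The missing ingredient is a pair of \emph{pure-state} families that are statistically close given $t$ copies but have very different projected ensembles. The paper uses random subset states $\ket{S}=|S|^{-1/2}\sum_{\bm u\in S}\ket{\bm u}$ with $|S|=m\approx c\,2^{n_B/2}$. By the known bound $D\bigl(\bE_S(\ket{S}\bra{S})^{\otimes t},\bE_{\phi\sim\Haar(n)}\phi^{\otimes t}\bigr)=O(t^2/d+t/\sqrt m+mt/d)$, they are close to Haar. Yet with probability at least $1-m(m-1)/(2d_B)$ their $B$-labels do not collide, so every projected state is a computational-basis state. Taking $O=\sum_{\bmx}(\ket{\bmx}\bra{\bmx})^{\otimes 2}$ gives $\bE\,o^{(2)}\ge 1-m(m-1)/(2d_B)$ for subset states versus $\bE\,o^{(2)}\le 2/(d_A+1)$ for Haar. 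A Markov-plus-threshold test then turns any estimator into a distinguisher, forcing $t=\Omega(\sqrt m)=\Omega(2^{n_B/4})$.

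Your cloning half follows the paper's route: a birthday-bound restriction to collision-free label tuples, random diagonal phases on $B$ to kill coherence between tuples that are not permutations of one another, and a final single-copy Haar cloning bound. Your Werner constant $2/(d_A+1)$ is in fact tighter than the paper's $\frac{2+4d_A}{(d_A+2)(d_A+1)}$. However, the decisive step is asserted rather than proved: ``the output label $\bmz$ indexes a branch of which the input holds at most one copy, hence the fidelity is at most the universal-cloner value.'' This is where the real work lies, for two reasons.
\begin{itemize}
\item After dephasing, the input still carries coherence among permutations of each collision-free tuple.
\item The channel chooses which label to output \emph{adaptively}, after processing all $t$ branches jointly. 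A priori, selecting among many independent single copies could beat single-copy cloning.
\end{itemize}

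The paper closes this gap in two steps.
\begin{itemize}
\item A simulation argument shows that the dephased input can be generated from independent Haar states $\phi_1,\dots,\phi_{d_B}$ on $A$. It uses that the branch states of a Haar $\psi_{AB}$ are i.i.d.\ Haar and independent of $p_\psi$, and it uncomputes the $S_t$ register via distinctness of the labels. This reduces the task to select-and-clone.
\item A permutation-plus-local-unitary twirl shows that, for the twirled channel, the probability of selecting label $i$ is independent of the input. Hence $d_B\Psi_1$ is a genuine channel cloning a single Haar state with the same average fidelity.
\end{itemize}

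Relatedly, for cloning the target $\sigma^{(2)}_\psi$ depends on $\psi$. Comparing the averaged input $\bE_\psi\psi^{\otimes t}$ to a reference state is therefore not meaningful by itself. The argument must keep input and target correlated, as the paper does through the figure of merit $\sum_{\bmz}\bra{\psi_{\bmz}}^{\otimes 2}\bra{\bmz}\,\cdot\,\ket{\psi_{\bmz}}^{\otimes 2}\ket{\bmz}$.
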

	
	Our proofs utilize the symmetry of the Haar-random unitary and permutation groups, as well as the distinguishability between Haar-random states and random subset states~\cite{Tudor2023SubsetStates, Fernando2024SubsetStates}. These are detailed in Supplementary Notes~2 and~3, respectively.
	The average-case result therefore shows that cloning is hard for typical ensembles induced by Haar-random purifications, a natural distribution over ensembles in the studies of both traditional thermalization~\cite{Zyczkowski2001InducedMixedState, Popescu2006EntanglementStatistical} and measurement-induced settings such as deep thermalization~\cite{Cotler2023EmergentDesign}.
	We remark that average-case hardness does not apply to the estimation of nonlinear properties. For a typical Haar-random state, the nonlinear value $o_{\psi}^{(2)}$ for a specific observable $O$ concentrates around the known thermal value $\tr[O\tau^{(2)}]$ because typical states are already deeply thermalized~\cite{Cotler2023EmergentDesign}, making the estimation task trivial on average. 
	Nonetheless, many physically relevant ensembles, such as those arising in monitored quantum circuits, are not purified by typical random states. 
	Consequently, the established worst-case hardness rules out universal algorithms for the ensemble estimation task.
	
	The hardness of cloning leads to several profound implications. (1) In scenarios such as measurement-based circuits or deep thermalization, where purification states find applications in derandomizing random circuits~\cite{Turner2016Derandomizing, Cotler2023EmergentDesign, Tran2023MeasuringArbitrary, Mok2024OptimalConversion}, we show that a single random state from the projected ensemble cannot be reused. 
	Consequently, such measurement-induced randomness cannot be utilized in randomized measurement protocols that require the repeated application of the same random instance~\cite{Brydges2019Probing}. This highlights a drawback compared to using controllable unitary circuits.
	(2) Many current experiments probe ensemble properties via post-selection~\cite{Koh2023MeasurementInducedEntanglement, Hoke2023MeasurementInducedEntanglement, Niroula2024MagicPhaseTransition, Hou2025MachineLearningEffects}, a process that is essentially equivalent to cloning states. 
	Our results show that the difficulty of such a process is intrinsic.
	(3) It generalizes the traditional no-cloning theorem ($t=1$, $n_B=0$) by quantifying an intrinsic tradeoff between the number of measurements $n_B$ and the sample complexity $t$. 
	While the traditional theorem relies on the input state being available in only finite copies, our result reveals a novel measurement-induced mechanism that forbids cloning even when the algorithm has access to multiple copies of the purification.
	See Supplementary Note~2 for a detailed tradeoff between the approximation error, sample complexity, and measurements.
	
	The hardness of estimating nonlinear properties stems from a distinguishing game. While global states drawn from random subset ensembles and Haar-random ensembles are statistically hard to distinguish, their projected states are drastically different: projected states from the subset ensemble are simple computational-basis states, whereas those from the Haar-random ensemble are highly random and entangled.
	Hence, any algorithm capable of estimating $o_{\psi}^{(2)}$ of certain observables $O$ to constant additive accuracy would solve this hard distinguishing problem, implying it must use exponentially many copies.  
	This result establishes that no universal algorithm can estimate a two-copy observable efficiently as a linear observable on reduced density matrices. 
	While Theorem~\ref{thm:summary} establishes a worst-case result over the observable $O$, this hardness extends more broadly to properties that distinguish between computational-basis ensembles and projected ensembles of Haar-random states:
	\begin{corollary}[Hardness of observing measurement-induced quantum phenomena,  informal] \label{col:hardness_phenomena_informal}
		For any property that distinguishes ensembles of computational-basis states from projected ensembles of Haar-random states (such as entanglement, magic, circuit complexity, or forming a state design), estimating this property over an unknown measurement-induced ensemble requires a sample complexity exponential in the number of measured sites in the worst case. 
	\end{corollary}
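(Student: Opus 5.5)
We prove Corollary~\ref{col:hardness_phenomena_informal} by reducing a distinguishing task to property estimation. The task is to tell apart $t$ copies of a Haar-random state on $AB$ from $t$ copies of a random subset state. The latter is built from a uniformly random subset $S\subseteq\{0,1\}^{n}$ of suitable size $s$, as $\ket{\psi_S}=|S|^{-1/2}\sum_{x\in S}\ket{x}$. We take as given the subset-state indistinguishability result underlying Theorem~\ref{thm:summary} (Supplementary Note~3). It says that the trace distance between $\mathbb{E}_S\psi_S^{\otimes t}$ and $\mathbb{E}_{\psi\sim\Haar}\psi^{\otimes t}$ is $O(t^2/s)+O(s t/2^{n})$ or similar. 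Choosing $s = 2^{\Theta(n_B)}$ makes this negligible unless $t = 2^{\Omega(n_B)}$.

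\textbf{First step: the subset ensemble is classical.} Write $x=(a,\bmz)$ with $a\in\{0,1\}^{n_A}$ and $\bmz\in\{0,1\}^{n_B}$. Then $\ket{\psi_S}_{AB}=\sum_{\bmz}\sqrt{p(\bmz)}\ket{\psi_{\bmz}}\ket{\bmz}$, where $\ket{\psi_{\bmz}}$ is the uniform superposition over $S_{\bmz}=\{a:(a,\bmz)\in S\}$. If $s\ll 2^{n_B/2}$, then with high probability over $S$ every fiber satisfies $|S_{\bmz}|\le 1$, by a birthday bound on collisions in the $\bmz$ coordinate. In that case every projected state is a computational-basis state, so the projected ensemble consists of computational-basis states. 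Equivalently, we can choose $S$ as a random subset whose $\bmz$-projections are distinct. This changes the distinguishing bound only by the collision probability.

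\textbf{Second step: the Haar ensemble is far from classical.} For Haar-random $\psi_{AB}$ with $n_B\ge n_A$ (or with $n_B$ large), the projected ensemble approximates a state design. We invoke the concentration of the moment states $\rho^{(k)}_\psi$ around $\tau^{(k)}$ from~\cite{Cotler2023EmergentDesign}, together with Lévy's lemma. These give that, with probability $1-e^{-\Omega(2^{n})}$, the value of the property $Q$ (entanglement, magic, complexity, design-ness) differs from its value on any basis-state ensemble by a constant gap $\Delta$. This is exactly the hypothesis of the corollary: $Q$ distinguishes the two families. We formalize the hypothesis as the existence of thresholds $q_0<q_1$ with $q_1-q_0\ge\Delta$. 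Every computational-basis ensemble has $Q\le q_0$, and Haar-projected ensembles have $Q\ge q_1$ with high probability.

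\textbf{Third step: the reduction.} Suppose an algorithm estimates $Q$ to accuracy $\Delta/3$ with success probability $2/3$ using $t$ copies. Run it on the unknown state and threshold its output at $(q_0+q_1)/2$. This distinguishes the two distributions with advantage at least $1/3-o(1)$. Since any $t$-copy procedure's advantage is bounded by the trace distance above, this forces $t=2^{\Omega(n_B)}$. Because the Haar and subset families achieve this gap, the bound is a worst-case statement over ensembles.

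\textbf{Main obstacle.} The delicate part is balancing $s$. It must be small enough, $s\ll 2^{n_B/2}$, for fibers to be singletons. It must also be large enough for subset states to be indistinguishable from Haar states with $t=2^{\Omega(n_B)}$ copies. Since the indistinguishability bound scales like $t^2/s$, choosing $s=2^{n_B/3}$ gives $t=2^{\Omega(n_B)}$, which is sufficient.

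A second point is that the gap for the Haar side must hold with high probability rather than merely in expectation. Otherwise the thresholding step fails. For observables like $O_E$ or $O_M$, we handle this with Lipschitz concentration. For general properties, we state it as part of the hypothesis.
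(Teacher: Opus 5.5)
Your proposal is correct and is essentially the paper's argument: you reduce estimation to distinguishing random subset states from Haar-random states, using that the subset states' projected ensembles consist of computational-basis states once the $B$-labels are collision-free, and you then invoke the subset-state indistinguishability bound (stated in the paper as $\mathcal{O}(t^2/d+t/\sqrt{m}+mt/d)$, which gives the same $t=\Omega(\sqrt{m})$ conclusion as your version). The differences are minor: the paper assumes only a gap between classical-ensemble values and the Haar \emph{expectation}, handling the Haar side with Markov's inequality as in Theorem~\ref{thm:universal_property} and taking $m=c\,2^{n_B/2}$ to get $t=\Omega(2^{n_B/4})$; so your high-probability concentration hypothesis is stronger than necessary, and your choice $s=2^{n_B/3}$ gives a smaller but still exponential exponent.
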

	The formal statement of this corollary is provided in Supplementary Note~3. These results directly imply the worst-case hardness of observing measurement-induced entanglement, magic, complexity, and deep thermalization. Notably, while deep thermalization represents a stronger and more general notion of equilibrium~\cite{Mark2024MaximumEntropy}, this generality comes at a price: it lacks the efficient verifiability of conventional thermalization, which can be checked by measuring local observables and comparing them to thermal predictions. 
	For the infinite-temperature case, our result directly implies the hardness of verifying emergent state designs~\cite{Cotler2023EmergentDesign, McGinley2024PostselectionFreeLearning, Nakata2025ComputationalComplexityDesigns}.
	
	Our results hold even when the distribution of states we considered is not identical to that in the original ensemble, as long as it is concentrated on typical trajectories that are representative of the ensemble (i.e., product states and highly entangled states). 
	Since the hardness of estimation implies the hardness of cloning, this establishes the hardness of ensemble cloning even when one only needs to clone a typical state of the ensemble.
	
	\subsection{Sample-efficient algorithm with prior knowledge}
	While the above results rule out universal, state-agnostic algorithms, practical experiments generally possess some degree of prior knowledge regarding the ensemble. This prompts another interesting question: can we use prior knowledge to surpass these fundamental limitations? We now proceed to answer this question. 
	
	We focus on two scenarios that are common in practice but not so specific as to limit our scope to the classically simulable regime:
	\begin{enumerate}[label={$(\arabic*)$}]
		\item The algorithm knows that the ensemble has a polynomial-size preparation circuit, but does not know the specific circuit. 
		\item The algorithm is given a full classical description of the circuit used to prepare the ensemble. 
	\end{enumerate}

	Prior knowledge (1) is a reasonable assumption for many ensembles encountered in experiments, where the system is initialized in some fiducial states (e.g., computational-basis states) and then undergoes a finite-time evolution~\cite{Poulin2011ConvenientIllusion}. 
	Consequently, the purification states of such ensembles can be modeled by finite-sized circuits consisting of $G = \poly(n)$ two-qubit gates. 
	We demonstrate that this form of prior knowledge is already sufficient to bypass the information-theoretic barrier established in Theorem~\ref{thm:summary}.
	In particular, one can learn a classical description of a state approximating the higher moments $\sigma_\psi^{(k)}$ of the ensemble using a number of copies of the purification polynomial in the circuit size $G$ and the moment order $k$. Once this classical description is obtained, one can prepare the cloned ensemble state or compute its nonlinear quantities, although the required classical post-processing may still be computationally intractable.
	
	\begin{theorem}[Sample-efficient algorithm for cloning ensembles of bounded gate complexity]\label{thm:sample-efficient-learning}
		Given $0 < \varepsilon < 1$ and $N$ copies of an $n$-qubit input state $\ket{\psi}_{AB} = U\ket{0}^{\otimes n}_{AB}$, where the unitary $U$ consists of $G$ two-qubit gates, one can obtain a classical description of a state $\tilde{\sigma}$ such that $D(\tilde{\sigma}, \sigma_{\psi}^{(k)}) \le \varepsilon$ provided that $N = \widetilde{\Theta}\left(\frac{k^2G}{\varepsilon^2}\right)$. Here, $\widetilde{\Theta}(\cdot)$ hides logarithmic factors.
	\end{theorem}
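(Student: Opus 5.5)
The plan is to reduce cloning to learning the global pure state $\ket{\psi}_{AB}$ in trace distance via hypothesis selection over a finite cover of bounded-gate-complexity circuits. The approach has three steps.

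\textbf{Step 1: learn the global state.} Let $\mathcal{S}_G$ be the set of states $V\ket{0}^{\otimes n}$, where $V$ is built from $G$ two-qubit gates. Each gate is specified by a choice of qubit pair, with $O(n^2)$ options, and a unitary in $\mathrm{U}(4)$. An $\eta$-net of $\mathrm{U}(4)$ in operator norm has size $(1/\eta)^{O(1)}$. Choosing $\eta = \delta/G$, the triangle inequality over the gates shows that the resulting finite family $\mathcal{N}$ is a $\delta$-cover of $\mathcal{S}_G$ in trace distance, with
\begin{equation}
\log|\mathcal{N}| = O\!\left(G\log(nG/\delta)\right).
\end{equation}
I would then run quantum hypothesis selection over $\mathcal{N}$, using either classical shadows with a minimum-distance/tournament selection or a Helstrom-type pairwise tournament. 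This outputs $\ket{\tilde\psi}\in\mathcal{N}$ with $D(\tilde\psi,\psi)\le O(\delta)$ using
\begin{equation}
N = O\!\left(\frac{\log|\mathcal{N}|}{\delta^2}\right) = \widetilde{O}\!\left(\frac{G}{\delta^2}\right)
\end{equation}
copies. This step only needs to be sample-efficient, so exhaustive enumeration of the net is acceptable and accounts for the exponential classical post-processing.

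\textbf{Step 2: propagate the error to the $k$-th moment.} The classical description is $\tilde{\sigma} := \sigma^{(k)}_{\tilde\psi}$, computed from $\tilde\psi$. The key observation is that
\begin{equation}
\sigma^{(k)}_\psi = \Phi_k(\psi),
\end{equation}
where $\Phi_k$ is the following operation: measure $B$ in the computational basis, record $\bmz$, and output $\psi_{\bmz}^{\otimes k}$. This $\Phi_k$ is not linear in $\psi$, since it is the map $p\,\psi_{\bmz}\mapsto p\,\psi_{\bmz}^{\otimes k}$. It is, however, linear in $\psi^{\otimes k}$. Take $k$ copies of $\psi_{AB}$, measure $B_1$, and postselect on the event that $B_2,\dots,B_k$ give the same outcome $\bmz$; the postselection probability $\sum_{\bmz} p(\bmz)^k$ is a normalization issue that has to be handled. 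A cleaner route is a direct continuity bound. Writing $\tilde p$ and $\tilde\psi_{\bmz}$ for the outcome distribution and projected states of $\tilde\psi$, and using the triangle inequality on the classical register,
\begin{equation}
D\bigl(\sigma^{(k)}_\psi,\sigma^{(k)}_{\tilde\psi}\bigr) \le \mathrm{TV}(p,\tilde p) + \sum_{\bmz} p(\bmz)\, D\bigl(\psi_{\bmz}^{\otimes k},\tilde\psi_{\bmz}^{\otimes k}\bigr).
\end{equation}
Since the projected states are pure, $D\bigl(\psi_{\bmz}^{\otimes k},\tilde\psi_{\bmz}^{\otimes k}\bigr)\le \sqrt{k}\,\sqrt{1-F_{\bmz}}$, where $F_{\bmz}$ is the fidelity between $\psi_{\bmz}$ and $\tilde\psi_{\bmz}$. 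By concavity of the square root together with a bound of the form $\sum_{\bmz}\sqrt{p(\bmz)\tilde p(\bmz)}\,F_{\bmz}^{1/2}\ge |\langle\psi|\tilde\psi\rangle|$, the average term is controlled by $\sqrt{k}\,O(\delta)$. The TV term is at most $D(\psi,\tilde\psi)$ by data processing. This gives a total error of $O(\sqrt{k}\,\delta)$, possibly $O(k\delta)$ with a cruder argument. Choosing $\delta = \varepsilon/k$ makes the moment error at most $\varepsilon$ with $N=\widetilde{O}(k^2G/\varepsilon^2)$, which matches the stated $\widetilde{\Theta}$ upper bound.

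\textbf{Step 3: the matching lower bound.} I would prove $\Omega(k^2G/\varepsilon^2)$ by packing: embed a Fano/Holevo-type family of $\exp(\Omega(G))$ states preparable with $G$ gates, with pairwise moment distance $\gtrsim\varepsilon$ at global distance $\sim\varepsilon/k$. A concrete family uses $k$-fold amplification, since rotating a projected state by angle $\theta$ changes its $k$-th tensor power by about $k\theta$.

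\textbf{Main obstacle.} I expect the main difficulty to be Step 2: getting a continuity bound from global trace distance to the conditional moment state that is independent of the outcome distribution and has the correct $k$-dependence. The conditional fidelities can be poor on low-probability outcomes, so they must be weighted carefully through the fidelity decomposition $|\langle\psi|\tilde\psi\rangle|\le\sum_{\bmz}\sqrt{p\tilde p}\,|\langle\psi_{\bmz}|\tilde\psi_{\bmz}\rangle|$. The first thing I would try is this Cauchy–Schwarz/fidelity-decomposition argument. A secondary issue is verifying that the hypothesis-selection step achieves $\log|\mathcal{N}|/\delta^2$ rather than $\log|\mathcal{N}|/\delta^4$ for pure-state trace distance.
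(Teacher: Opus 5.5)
Steps 1 and 2 form a correct proof of what the paper actually proves, which is the sufficiency direction. Your route differs from the paper's in both steps.

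\textbf{Step 1.} The paper does not build this step itself. It cites the bounded-gate-complexity learning theorem of Zhao et al.\ as a black box (trace distance $\varepsilon'$ from $\widetilde{\Theta}(G/\varepsilon'^2)$ copies), so your covering net and your $\delta^{-2}$ versus $\delta^{-4}$ worry never arise there. If you want Step 1 self-contained, the worry resolves. For two pure hypotheses the Helstrom projector is rank one. Random-Clifford classical shadows therefore estimate all $|\mathcal{N}|^2$ pairwise Helstrom probabilities to additive error $\delta$ from $O(\log|\mathcal{N}|/\delta^2)$ copies. A minimum-distance (Scheff\'e-type) selection then returns a net element within $O(\delta)$ of $\psi$.

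\textbf{Step 2.} The paper uses the same hybrid $\sum_{\bmz}p_\psi(\bmz)\phi_{\bmz}^{\otimes k}\otimes\ketbra{\bmz}$ but bounds the conditional term differently. Data processing through the dephasing channel on $B$ gives $\sum_{\bmz}p_\psi(\bmz)D(\phi_{\bmz},\psi_{\bmz})\le 2\varepsilon$. The telescoping bound $D(\phi_{\bmz}^{\otimes k},\psi_{\bmz}^{\otimes k})\le kD(\phi_{\bmz},\psi_{\bmz})$ then yields $(2k+1)\varepsilon$, which is where the $k^2$ comes from; none of these steps uses purity. Your fidelity route goes through and is sharper:
\begin{itemize}
\item Cauchy--Schwarz gives $\sum_{\bmz}p(\bmz)F_{\bmz}\ge\bigl(\sum_{\bmz}\sqrt{p(\bmz)\tilde p(\bmz)}\,F_{\bmz}^{1/2}\bigr)^2\ge|\langle\psi|\tilde\psi\rangle|^2\ge 1-\delta^2$.
\item Concavity then gives $\sum_{\bmz}p(\bmz)\sqrt{1-F_{\bmz}}\le\delta$.
\item With $1-F^k\le k(1-F)$, the total error is at most $(1+\sqrt{k})\,\delta$.
\end{itemize}
By exploiting purity you get $N=\widetilde{O}(kG/\varepsilon^2)$, which implies the stated bound.

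\textbf{Step 3.} Drop this step. The phrase ``provided that $N=\widetilde{\Theta}(\cdot)$'' is only a sufficiency claim, and the paper proves no lower bound. Worse, the lower bound you aim for is false. Your amplification heuristic fails in trace distance for pure states: if $D(\psi_{\bmz},\tilde\psi_{\bmz})=\sin\theta$, then $D(\psi_{\bmz}^{\otimes k},\tilde\psi_{\bmz}^{\otimes k})=\sqrt{1-\cos^{2k}\theta}\approx\sqrt{k}\,\theta$, not $k\theta$. Your own Step 2 already shows that $\widetilde{O}(kG/\varepsilon^2)$ copies suffice. That contradicts any $\Omega(k^2G/\varepsilon^2)$ lower bound once $k$ exceeds the hidden logarithmic factors.
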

	The proof is provided in Supplementary Note~4. 
	This theorem builds on the sample-efficient algorithm for learning quantum states of bounded gate complexity~\cite{Zhao2024BoundedGateComplexity}. In particular, one can learn the bounded-complexity purification up to trace distance $\varepsilon'$ using $\widetilde{\Theta}(G/\varepsilon'^2)$ copies. We then show that trace-distance error $\varepsilon'=\varepsilon/(2k+1)$ in the purification is sufficient to guarantee trace-distance error $\varepsilon$ for the $k$th-order moment of the ensemble. Combining these two ingredients establishes the theorem.

	These results indicate that quantum ensembles efficiently preparable in experiments can be sample-efficiently cloned, enabling the study of measurement-induced phenomena in practically relevant settings.
	In particular, the theorem guarantees ensemble cloning with only polynomial sample complexity, corresponding to a polynomial number of queries to the purification unitary.
	This perspective helps explain why measurement-induced phenomena can be observed using very few samples in current experiments~\cite{Gullans2020ScalableProbes, Li2023XEBbenchmarkMIPT, McGinley2024PostselectionFreeLearning, Garratt2024PostmeasurementPostselection, Kamakari2025CrossEntropySuperconducting, Du2025CertifyingLocalizable, Kim2025MachineLearningMIPT}.
	However, a central limitation of both the proposed protocol and existing experimental approaches is their reliance on classical post-processing that becomes intractable once the underlying circuits are no longer classically simulable.
	As a result, these methods may fail to scale to large system sizes without incurring exponential computational cost.
	This scalability bottleneck motivates a rigorous examination of the computational complexity barriers underlying ensemble-processing tasks.
	
	\subsection{Computational hardness with prior knowledge} 
	
	We now proceed to investigate the computational complexity of these tasks given prior knowledge (1) or the even stronger condition (2).
	It corresponds to the setting of current programmable quantum platforms, where the state preparation circuit is fully controllable and known.
	This is a much stronger form of prior knowledge, as the underlying ensemble is fully specified information-theoretically.
	As we will show, this prior knowledge is still insufficient to allow for computationally efficient algorithms for the ensemble processing tasks.
	
	Our hardness results rely on standard computational assumptions. The first part of our theorem assumes the existence of quantum-secure pseudo-random functions (QPRF) and quantum-secure pseudo-random permutations (QPRP), which are widely believed to exist~\cite{Zhandry2021QuantumRandomFunctions, Zhandry2025notequantumsecure}. The second part assumes the quantum hardness of the LWE problem, a widely-adopted assumption in post-quantum cryptography~\cite{regev2009lattices}. We summarize our result in the following theorem, leaving the complete proof in Supplementary Note~5.
	
	\begin{theorem}[No-go on estimating nonlinear properties with prior knowledge]\label{thm:no-go-prior}
		Let $0\leq\varepsilon< \tfrac{1}{10}$ and $0\leq\delta<\tfrac13$ be fixed constants. 
		Given a partition $A\mid B$ and an efficiently measurable observable $O$, suppose a quantum algorithm $\mathcal{A}$ estimates $o_{\psi}^{(2)}$ to within an additive error $\varepsilon$ with probability at least $1-\delta$.
		\begin{enumerate}[label={$(\arabic*)$}]
			\item Let $n_A \ge 1, n_B = \omega(\log n)$. If $\mathcal{A}$ is given copy access to a state $\ket{\psi}$ that is promised to be efficiently preparable, then $\mathcal{A}$ requires super-polynomial computational time, assuming the existence of QPRF and QPRP. 
			Here, the computational time includes the cost of accessing and processing the input copies, with each access counted as one unit of time.
			
			\item Let $n_A,n_B = \Omega(n^c)$ for some constant $0 < c \le 1$. If $\mathcal{A}$ takes as input the classical description of a polynomial-size circuit $C$ that prepares $\ket\psi = C\ket{0}^{\otimes n}$, then $\mathcal{A}$ requires super-polynomial computational time, assuming the LWE problem cannot be solved by any polynomial-time quantum algorithm.
		\end{enumerate}
	\end{theorem}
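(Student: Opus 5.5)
Both parts go by reduction: from an efficient estimator $\mathcal{A}$ we build an efficient distinguisher that breaks the relevant cryptographic primitive. We pick an observable that separates a "structured" ensemble from a "random" one. Concretely, take $O$ to be (a normalized version of) the projector onto the symmetric subspace of a single qubit of $A$ across the two copies, or a swap-based purity test restricted to a designated qubit $a\in A$. Then $o_\psi^{(2)}$ is close to $1$ when the projected states on $a$ are (nearly) pure computational-basis states, and bounded away from $1$ (e.g.\ $\le 1 - \Omega(1)$) in the contrasting case. The constants $\varepsilon<1/10$ and $\delta<1/3$ are chosen so that a gap of constant size, say $>2\varepsilon$, can be read off with success probability $>1/2+\Omega(1)$. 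Standard repetition and majority amplification then yield a non-negligible distinguishing advantage.

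\textbf{Part (1).} We use the pseudorandom analogue of the subset-vs-Haar distinguishing game behind Theorem~\ref{thm:summary}. The first ensemble consists of pseudorandom subset states $\ket{\psi}\propto\sum_{x\in S_k}\ket{x}_{AB}$, where the subset $S_k$ is specified by a QPRP applied to strings with a fixed prefix. These are efficiently preparable. Provided the support has at most one $A$-string for each $\bmz$ (arrange that $x=(f(\bmz),\bmz)$ with $f$ built from the QPRF/QPRP), each projected state is a computational-basis state. Hence each $\psi_{\bmz}$ is pure and product on $A$, and $o^{(2)}=1$ for the purity observable. The second ensemble consists of pseudorandom phase states $\sum_x (-1)^{F_k(x)}\ket{x}$ or Haar-like states built from QPRFs. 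These are also efficiently preparable, and for $n_B=\omega(\log n)$ their projected states on $a$ are close to Haar, so $o^{(2)}\approx$ the thermal value. Both families are computationally indistinguishable from Haar given polynomially many copies (known results on pseudorandom subset states and pseudorandom states). Therefore an efficient $\mathcal{A}$, whose cost counts each copy as a unit of time, would distinguish them.

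The main care points in Part (1) are as follows. We need $n_A\ge1$ so that there is a qubit to test. We need $n_B=\omega(\log n)$ so that the pseudorandomness security, and the information-theoretic subset/Haar indistinguishability for the truly random versions, scales as $\negl(n)$. Finally, the observable must have a constant gap. For the phase-state construction with $n_A=1$, the projected state on $a$ is $(\ket0\pm\ket1)/\sqrt2$, which has purity $1$. We therefore switch to a basis-sensitive observable instead: $O=\ketbra{00}+\ketbra{11}$, i.e.\ the probability that two copies measured in the $Z$ basis agree. This equals $1$ for basis states and $1/2$ for the phase states.

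\textbf{Part (2)} is where the main obstacle lies. Now the circuit is given classically, so pseudorandomness of states is useless and the hardness must come from a computational problem hidden in the measured trajectories. The plan is to use LWE-based trapdoor claw-free or injective function pairs, as in the Mahadev/Brakerski et al.\ constructions, so that the public circuit encodes the public key. The circuit prepares $\sum_{b,x}\ket{b,x}_A\ket{f_{k}(b,x)}_B$ and measures $B$, here $\bmz=y$. There are two modes. In claw-free mode, each $y$ has two preimages, and the projected state is $(\ket{0,x_0}+\ket{1,x_1})/\sqrt2$. In injective mode, each $y$ has one preimage, and the projected state is a computational-basis state. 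The key is to choose the observable so that it separates these two cases. A natural choice is $Z_b$-agreement on the bit register, giving $1$ versus $1/2$. Since $O$ must be efficiently measurable and act only on $\mathcal{H}_A^{\otimes2}$, this works. Key indistinguishability of the two modes under LWE then finishes the argument. The anticipated difficulties are the following. First, the functions are only approximately claw-free (Gaussian superpositions, so $f$ outputs distributions); we handle this by truncation, accepting a small constant loss absorbed by the $\varepsilon<1/10$ margin. Second, we must ensure the size constraints $n_A,n_B=\Omega(n^c)$ match the LWE parameters, padding with idle qubits as needed.
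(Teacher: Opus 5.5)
Your proposal is correct. Part (2) is the paper's argument in different packaging. The paper builds the claw/injective pair by hand as $f(b,\bmx,\bm u)=A^T\bmx+\bm u+b\bm y$, with box noise $\bm u\in[-\eta,\eta]^m$ stored in the $A$ register. It contrasts LWE $\bm y$ (claw mode) with uniform $\bm y$ (injective mode) and measures $b$-disagreement, which is the complement of your agreement test. Two details need fixing:
\begin{itemize}
\item Your ``$1$ versus $1/2$'' is really $\approx 1$ versus $1-p/2$, where $p$ is the probability that the measured outcome has two preimages. With $q=\poly(n)$ the paper only obtains $p\ge 1/2$, so the gap is $1/4$. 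That is a constant-factor loss rather than a small one, but it is still $>2\varepsilon$.
\item The hypothesis is hardness of \emph{search} LWE, while your argument gives a DLWE distinguisher. You need Regev's decision-to-search reduction (prime $q=\poly(n)$) to conclude.
\end{itemize}

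Part (1) genuinely differs in the contrasting ensemble. The paper uses $(H^{\otimes n_A}\otimes I)\ket{\psi_{f,g}}$ with the full-register test $\sum_{\bmx}(\ketbra{\bmx})^{\otimes 2}$, giving $1$ versus $1/d_A$. Since $\mathrm{Haar}(n)$ is invariant under $H^{\otimes n_A}\otimes I$, a single hybrid chain proves both ensembles pseudorandom: QPRP to random permutation, QPRF to random function, distinct-$B$-label subsets to uniform $m$-subsets, subset states to Haar. This uses nothing beyond QPRF/QPRP and the subset-state bound.

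You instead pair the structured subset states with binary-phase pseudorandom states and a one-qubit $Z$-agreement test. This gives exactly $1/2$ for every key, independent of the phases, and keeps the observable on two qubits. The price is importing the binary-phase PRS theorem (Brakerski--Shmueli) as an extra ingredient. Both routes rest on the same assumptions.

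Make three points explicit:
\begin{itemize}
\item Generate the $B$-labels with the QPRP on $\{0,1\}^{n_B}$, so distinctness holds for every key and $o^{(2)}=1$ exactly. You cannot transfer a collision bound from truly random to pseudorandom keys, because the collision fraction is not efficiently testable.
\item Known pseudorandom-subset-state results do not literally cover $\sum_{\bmz\in T_k}\ket{g_k(\bmz)}_A\ket{\bmz}_B$. After derandomizing the primitives, its support is uniform over $m$-subsets with pairwise distinct $B$-labels. That is within $m(m-1)/(2d_B)$ of a uniform $m$-subset. Choosing $m=n^{\omega(1)}$ with $m^2/d_B$ negligible is exactly where $n_B=\omega(\log n)$ enters.
\item Drop the purity observable and the claim that phase-state projections are close to Haar. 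For $n_A=2$, the one-qubit purity of a random phase state is $1$ with probability $1/2$. The $Z$-agreement test alone handles every $n_A\ge 1$.
\end{itemize}
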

	\noindent Here, efficiently measurable means $O = U \left(\sum_{\bm z} f(\bm z) \ketbra{\bm z}\right) U^{\dagger}$ where $f$ is an efficiently computable function and $U$ is a polynomial-size quantum circuit. This restriction isolates the intrinsic difficulty of the estimation problem itself from the complexity of the observable.
	By further strengthening the assumption of Theorem~\ref{thm:no-go-prior} to pseudo-random primitives that remain secure against sub-exponential-time (i.e. $2^{o(n)}$-time) quantum adversaries—such as those derived from the LWE problem~\cite{regev2009lattices, Zhao2024BoundedGateComplexity}—the lower bound of computational time in (1) can be extended to $2^{\Omega(n)}$, and the system-size requirement in (2) can be relaxed to $n_A, n_B \ge \widetilde{\Theta}(\log n)$.
	
	This effectively rules out any efficient universal algorithm that takes as input a two-copy observable $O$ and outputs the value $o_{\psi}^{(2)}$, even when provided with the full circuit description of the state. 
	Consequently, ensemble cloning is forbidden by computational barriers even in the presence of full state knowledge. 
	This reveals a mechanism distinct from the traditional no-cloning theorem, which relies strictly on the input state being unknown.
	
	Notably, our proof technique can also be used to show that estimating ensemble-averaged entanglement is at least as hard as solving the LWE problem, thereby establishing a computational barrier for observing measurement-induced entanglement.
	\begin{corollary}[Hardness of observing measurement-induced entanglement with prior knowledge]  
		\label{corollary:entanglement}
		Let $n_A, n_B \ge \widetilde{\Theta}(\log n)$. Even when provided with a full circuit description of the purification state $\ket{\psi}_{AB}$ of an ensemble, estimating the average entanglement over this ensemble requires $n^{\omega(1)}$ computational time in the worst case, assuming the LWE problem cannot be solved by any sub-exponential-time quantum algorithm.
	\end{corollary}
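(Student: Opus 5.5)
We reduce from LWE by building, from an LWE instance $(\mathbf{A},\mathbf{b})$, a polynomial-size circuit $C$ whose projected ensemble has low average entanglement when $\mathbf{b}$ is an LWE sample and high average entanglement when $\mathbf{b}$ is uniform. We follow the strategy behind Theorem~\ref{thm:no-go-prior}(2), specialized to the swap observable $O_E=\mathbb{S}_{L_1L_2}\otimes I_{R_1R_2}$. Since $O_E$ is efficiently measurable (swap test), the average purity $o^{(2)}_\psi$ is exactly the quantity that an entanglement-estimating algorithm outputs, up to a monotone conversion to Rényi-2 or related measures. An efficient estimator for average entanglement would therefore decide $\DLWE$, contradicting sub-exponential LWE hardness once the parameters are chosen so that the instance size is polylogarithmic in $n$.

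\textbf{Construction.} Encode an LWE sample into the environment $B$. Prepare a uniform superposition over a secret-like register and error register, compute $\mathbf{b}-\mathbf{A}\bms$-type quantities coherently, and write the results into $B$ (of size $\widetilde{\Theta}(\log n)$ qubits) and $A=L\cup R$. The goal is the following dichotomy.
\begin{itemize}
	\item \textbf{LWE case.} The measurement outcome $\bmz$ on $B$ essentially determines the trapdoor-like information, collapsing $A$ to a product, or computational-basis, state between $L$ and $R$. The average purity of $L$ is then close to $1$.
	\item \textbf{Uniform case.} The outcome on $B$ leaves a superposition over many preimages, which we arrange to be maximally correlated across $L$ and $R$, for example $\sum_{x}\ket{x}_L\ket{g(x)}_R$ with $g$ injective. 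The purity is then at most $2^{-\Omega(n_L)}$.
\end{itemize}
The two purities differ by a constant, well above $2\varepsilon$. By standard amplification of the success probability $1-\delta$, the estimator distinguishes the two cases with non-negligible advantage.

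\textbf{Steps in order.}
\begin{enumerate}
	\item Fix LWE parameters with dimension $m=\mathrm{polylog}(n)$, so that sub-exponential hardness in $m$ yields $n^{\omega(1)}$ hardness. Pad the circuit to $n$ qubits with idle qubits, so that $n_A,n_B\ge\widetilde{\Theta}(\log n)$.
	\item Build the circuit $C$ coherently from $(\mathbf{A},\mathbf{b})$. Compute the postselected states for each outcome $\bmz$, using the fact that the LWE error is bounded (discrete Gaussian truncated) so that the preimage structure is clean with overwhelming probability.
	\item Bound the average purity in each case, using counting and the statistical properties of uniform $\mathbf{b}$ versus noisy $\mathbf{A}\bms$.
	\item Convert the purity gap into a gap for the chosen entanglement measure, such as Rényi-2 entropy, or von Neumann entropy via $S\ge S_2$ together with a near-product bound, and conclude.
\end{enumerate}

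\textbf{Main obstacle.} The hardest step is step~2 together with step~3: designing the coherent encoding so that the same circuit template yields a clean entanglement dichotomy in both cases, with no trapdoor available. The difficulty lies in the LWE branch, where the collapse must be to a near-product state even though the circuit does not know $\bms$.

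The first attempt is to let $B$ hold $\mathbf{A}\mathbf{x}+t\mathbf{b}+\mathbf{e}'$ over a superposition of $(\mathbf{x},t)$, with $L$ holding $t$ and $R$ holding $\mathbf{x}$.
\begin{itemize}
	\item \textbf{LWE case.} Conditioning on the outcome fixes $\mathbf{x}+t\bms$ modulo noise. This leaves $t$ coherently correlated with $\mathbf{x}$, so we then apply a copy of $\mathbf{x}$ into a discarded, measured part to kill this correlation.
	\item \textbf{Uniform case.} The outcome pins down both $\mathbf{x}$ and $t$ by injectivity, which gives the opposite behavior from the desired dichotomy.
\end{itemize}
So the roles of the two cases must be swapped. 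This is acceptable, since any constant gap in either direction suffices for distinguishing.
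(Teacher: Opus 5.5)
Your high-level reduction matches the paper's. You encode a $\DLWE$ instance $(A,\bm y)$ into a polynomial-size circuit, extract a constant gap in average entanglement, and take the LWE dimension slightly super-logarithmic so that a $\poly$-time estimator gives a sub-exponential LWE solver. The construction you reach in your last paragraph, with roles swapped ($B$ holds $A^T\bm x+t\bm y+\bm e'$, $L$ holds $t$, $R$ holds $\bm x$), is essentially the paper's state $\sum_{b,\bm x,\bm u}\ket{b,\bm x,\bm u}_A\ket{f(b,\bm x,\bm u)}_B$ with $f(b,\bm x,\bm u)=A^T\bm x+\bm u+b\bm y$. In that state, LWE gives Bell-like entangled outcomes and uniform $\bm y$ gives product outcomes. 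However, the proposal never states or verifies this construction, and two of its explicit claims are wrong. First, the dichotomy in your Construction paragraph is backwards. For uniform $\bm y$ the two $t$-branches collide only on an exponentially small fraction of outcomes, so almost every outcome collapses to a computational-basis (product) state, not to $\sum_x\ket{x}_L\ket{g(x)}_R$ with purity $2^{-\Omega(n_L)}$. Second, copying $\bm x$ into a measured register ``to kill the correlation'' would make the LWE-case outcomes product as well and erase the gap; that step has to be dropped.

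The missing idea is hidden in ``conditioning fixes $\bm x+t\bms$ modulo noise.'' The $B$ register holds an exact value, so the $t=0$ and $t=1$ branches land on the same outcome only under a specific condition. The error register (the paper's $\bm u$, your $\bm e'$) must be a uniform superposition over $[-\eta,\eta]^m$, kept in $R$, and wide enough to flood $\bm e$. Then $f(0,\bm x+\bms,\bm u+\bm e)=f(1,\bm x,\bm u)$ whenever $\|\bm u+\bm e\|_\infty\le\eta$. With $\eta=\lceil 2\beta m\sqrt n+1\rceil$ and the Gaussian tail $\|\bm e\|_\infty\le\beta\sqrt n$, this holds with probability at least $(1-\tfrac{1}{2m})^m\ge\tfrac12$. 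At the same time, $\eta$ must be small relative to $q$, for two reasons:
\begin{itemize}
\item $A$ must be good ($\|A^T\bm x\|_\infty>2\eta$ for all $\bm x\neq 0$) except with probability $2^{-\Omega(n)}$. This makes each branch injective, so every outcome is either a basis state or $\tfrac{1}{\sqrt2}(\ket{0,\bm x,\bm u}+\ket{1,\bm x',\bm u'})$.
\item For uniform $\bm y$, the expected collision fraction must be at most $q^n\bigl((2\eta+1)/q\bigr)^m=2^{-\Omega(n\log q)}$.
\end{itemize}
The choice $m=2n\log q$ and $\beta\le q/(32n^{1.5}\log q)$ is exactly what opens this window, and your steps 2--3 do not identify it. You also need $\bm y\neq 0$, checked separately, so that the two branches differ on $R$ and the collided states are genuinely entangled. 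With $L$ the single $t$-qubit, entangled outcomes have $L$-purity exactly $\tfrac12$. So the entangled fraction $e_\psi=2o^{(2)}_\psi$, the average purity and the average entropy are all affine in one another, and the gap is $e_\psi\ge\tfrac12$ versus $e_\psi\le 2^{-\Omega(n)}$. Finally, ``dimension $\mathrm{polylog}$'' must be pinned down. The LWE dimension $n$ must be $\omega(\log n_{AB})$, while the $O(n\log^2 n)$-qubit instance must still fit into subsystems of size $\widetilde\Theta(\log n_{AB})$; the paper takes $S(n_{AB})=\log n_{AB}(\log\log n_{AB})^{2.1}$. You also need Regev's decision-to-search reduction, with repetition to make the failure probability exponentially small, to reach LWE itself.
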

	The detailed proof is provided in Supplementary Note~5. We anticipate that similar computational barriers hold for other nonlinear properties, including magic, circuit complexity, and deep thermalization.
	This implies that ``truly quantum'' signatures of measurement-induced phenomena—those beyond the classically-predictable regime—are generally computationally hard to observe in the state-aware scenario. 
	This explains why current sample-efficient schemes for estimating nonlinear properties of quantum ensembles highly rely on the classical simulability of the preparation circuits~\cite{Gullans2020ScalableProbes, Li2023XEBbenchmarkMIPT, McGinley2024PostselectionFreeLearning, Garratt2024PostmeasurementPostselection}. 
	
	Finally, our results demonstrate that estimating the simplest nonlinear properties for $k=2$ already gives computational capabilities beyond standard polynomial-time quantum algorithms.
	An intriguing avenue for future research is to determine whether estimating higher-order properties for $k > 2$ yields further computational power.
	Specifically, one might investigate whether access to the $k$-th moment for $k = \poly(n)$ provides computational power equivalent to post-selection (i.e., the complexity class \textsf{PostBQP}). 
	Establishing such a relationship would delineate a clear computational hierarchy based on the order of the ensemble moments.
	
	\section{Discussion}
	We have established fundamental information-theoretic and computational laws to cloning quantum ensembles and estimating their properties. Our work highlights that many novel quantum phenomena defined over ensembles are provably hard to observe efficiently in the worst case. This underscores that experimental feasibility should be an important consideration when exploring new physical concepts~\cite{Feng2025HardnessStrongToWeak, Schuster2025HardnessPhases}.

	While our general no-go theorems rule out efficient, universal algorithms, exceptions exist when the dynamics are carefully structured. 
	For instance, strategies based on spacetime duality can map monitored dynamics to unitary dual circuits, enabling efficient purity estimation of post-measurement states~\cite{Ippoliti2021PostselectionFreeSpacetimeDuality}.
	Nonetheless, these structural restrictions often eliminate critical behaviors and render the dynamics analytically solvable~\cite{Lu2021SpacetimeDualLocalization, Claeys2022dualunitary, Claeys2022ExactDynamics, Ippoliti2022SteadyStateSpacetimeDual}. Consequently, current efficient protocols are often constrained to scenarios where the dynamics are classically tractable~\cite{Noel2022MeasurementInducedPhaseTrappedIon, Koh2023MeasurementInducedEntanglement, Hoke2023MeasurementInducedEntanglement, Choi2023RandomStateBenchmarking, Niroula2024MagicPhaseTransition, Kamakari2025CrossEntropySuperconducting}.
	
	Meanwhile, other quantum ensembles that are not directly induced by measurements have also recently attracted attention. A prominent example is the temporal ensemble of a Hamiltonian, $\left\{(1/\tau, e^{-iHt} \ket{\psi_0})\right\}_{t = -\tau/2}^{\tau / 2}$, which plays a central role in studies of Hilbert-space ergodicity~\cite{Mark2024MaximumEntropy, Saul2024ErgodicityDesign}. 
	Unlike the measurement-induced projected ensemble, each state in this temporal ensemble can be efficiently prepared via Hamiltonian simulation. This enables efficient estimation of the nonlinear quantities considered here and potentially allows for the observation of new phenomena~\cite{Shaw2025ExperimentalErgodicity}. 
	Furthermore, exploring other physically motivated ensembles, particularly those arising from thermodynamics, and investigating how the hardness results established here relate to fundamental thermodynamic principles presents an intriguing avenue for future research.
	
	Finally, our results point to a central dilemma for nonlinear phenomena in quantum ensembles: either they are hard to access even for quantum computers, or they are accessible only in regimes where efficient classical methods are already available. This highlights the central challenge in the field: can we identify novel phenomena whose observation genuinely requires quantum processors and goes beyond classical computation? Answering this question is key to realizing meaningful quantum speedups in scientific discovery~\cite{Lloyd1996UniversalSimulator, Kim2023QuantumUtility, King2025BeyondClassicalComputation, Abanin2025OTOC}, and we hope our work will further motivate both impossibility and possibility results.
	
	\section*{Acknowledgments}
	We thank Wen Wei Ho for helpful discussions that inspired the idea behind this project and for his comments on the manuscript.
	X.Y. is supported by Beijing Natural Science Foundation Z250004, the National Natural Science Foundation of China NSAF (Grant No.~U2330201) and  Grant (No.~12361161602),  the Quantum Science and Technology-National Science and Technology Major Project (2023ZD0300200),   and Beijing Science and Technology Planning Project (Grant No.~Z25110100810000). 
	Z.D., S.C. and X.M. acknowledge the support from the National Natural Science Foundation of China Grants No.~12174216,  the Innovation Program for Quantum Science and Technology Grant No.~2021ZD0300804,  No.~2021ZD0300702, the CCF-QuantumCtek Superconducting Quantum Computing Special Cooperation Program (Grant No.~CCF-QC2025005), and the Turing AI Institute of Nanjing.
	Q.Z. acknowledges funding from National Natural Science Foundation of China (NSFC) via Project No. 12347104 and No. 12305030, Guangdong Basic and Applied Basic Research Foundation via Project 2023A1515012185, Hong Kong Research Grant Council (RGC) via No. 27300823, N\_HKU718/23, and R6010-23, Guangdong Provincial Quantum Science Strategic Initiative No. GDZX2303007. 
	
	\bibliography{ref}

	\clearpage
	\onecolumngrid
	\setcounter{section}{0}

	\begin{center}
		{\large \textbf{Supplementary Information}}
	\end{center}
	\section*{Contents} 
	\makeatletter
	\let\oldaddcontentsline\addcontentsline
	\renewcommand{\addcontentsline}[3]{%
		\def\target{#1}%
		\def\toclabel{toc}%
		\ifx\target\toclabel
		\oldaddcontentsline{atoc}{atoc#2}{#3}%
		\else
		\oldaddcontentsline{#1}{#2}{#3}%
		\fi
	}
	
	\@starttoc{atoc}
	\makeatother
	
	\vspace{1cm}


	\suppnote{Preliminaries on random unitaries and states}
	In this section, we introduce the preliminaries used in this work. We denote the $d$-dimensional Hilbert space as $\mathcal{H}_d$.
	The Haar-random distribution of unitaries and states on $n$-qubits are both denoted by $\mathrm{Haar}(n)$, which can be specified from the context. 
	The permutation group of $k$ elements is denoted $S_k$.
	For a permutation $\pi \in S_k$, the corresponding permutation operator $V_\pi$ on $\mathcal{H}_d^{\otimes k}$ is given by 
	\begin{equation}
		V_\pi(\ket{\psi_1}\cdots\ket{\psi_k}) = \ket{\psi_{\pi^{-1}(1)}} \cdots \ket{\psi_{\pi^{-1}(k)}},
	\end{equation}
	where $\ket{\psi_1}, \cdots, \ket{\psi_k} \in \mathcal{H}_d$.
	Define the operator $P_{\mathrm{sym}}^{(d,k)}$ over $\mathcal{H}_d^{\otimes k}$ as
	\begin{equation}
		P_{\mathrm{sym}}^{(d,k)} := \frac{1}{k!} \sum_{\pi \in S_k} V_\pi.
	\end{equation}
	
	\begin{fact}[{\cite[Theorem 22]{mele2024introduction}}]
		\label{fact:haar}
		For $n,k\in \mathbb N^+$ and $d=2^n$,
		\begin{equation}
			\EE{U\sim \mathrm{Haar}(n)}[U^{\otimes k}\ket{0}\bra{0}U^{\dagger\otimes k}]=\frac{P_{\mathrm{sym}}^{(d,k)}}{\tr(P_{\mathrm{sym}}^{(d,k)})}=\frac{\frac{1}{k!}\sum_{\pi\in S_k}V_d(\pi)}{\binom{k+d-1}{k}}.
		\end{equation}
	\end{fact}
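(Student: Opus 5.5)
The plan is the standard symmetrization argument. First I would use Haar invariance to show that $M\coloneqq\EE{U\sim\Haar(n)}[U^{\otimes k}\ketbra{0}U^{\dagger\otimes k}]$ lies in the symmetric subspace and commutes with the $k$-fold tensor action, and then apply Schur's lemma, which forces $M$ to be proportional to the projector onto that subspace.

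\textbf{Commutation.} For any fixed unitary $W$, left invariance of the Haar measure ($U\mapsto WU$) gives $W^{\otimes k} M W^{\dagger\otimes k}=M$. So $M$ commutes with every $W^{\otimes k}$.

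\textbf{Support.} Each $U^{\otimes k}\ket{0}=(U\ket{0})^{\otimes k}$ is a product of identical vectors, and these span only the symmetric subspace $\mathrm{Sym}^k(\mathcal{H}_d)$. Hence $V_\pi M=MV_\pi=M$ for every $\pi\in S_k$. Equivalently, $M=P_{\mathrm{sym}}^{(d,k)} M P_{\mathrm{sym}}^{(d,k)}$, using that $P_{\mathrm{sym}}^{(d,k)}$ is the orthogonal projector onto $\mathrm{Sym}^k$ (it is an average of a unitary group representation).

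\textbf{Irreducibility and Schur.} The key step, and the main obstacle, is that $\mathrm{Sym}^k(\mathcal{H}_d)$ is an irreducible representation of $U(d)$ under $W\mapsto W^{\otimes k}$. I would prove it directly. Vectors $\ket{\phi}^{\otimes k}$ span $\mathrm{Sym}^k$, for instance by polarization. The unitary group acts transitively on unit vectors $\ket\phi$. Any nonzero operator $X$ on $\mathrm{Sym}^k$ commuting with all $W^{\otimes k}$ is then determined by complex-linear extension; by differentiating $W^{\otimes k}$ along $\mathfrak{u}(d)$ and complexifying, $X$ commutes with all $A^{\otimes k}$ for $A\in GL(d)$. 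It is a standard fact that these operators span the full operator algebra on $\mathrm{Sym}^k$ (the double commutant / Schur–Weyl statement). Alternatively one can simply cite Schur–Weyl duality. Schur's lemma then gives $M=c\,P_{\mathrm{sym}}^{(d,k)}$.

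\textbf{Normalization.} Since $\tr M=1$, we get $c=1/\tr P_{\mathrm{sym}}^{(d,k)}$. The trace equals $\dim\mathrm{Sym}^k(\mathbb{C}^d)$, the number of multisets of size $k$ from $d$ basis elements, which is $\binom{k+d-1}{k}$. Substituting the definition $P_{\mathrm{sym}}^{(d,k)}=\frac1{k!}\sum_{\pi} V_\pi$ gives the stated formula.
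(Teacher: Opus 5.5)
Your proof is correct. The paper gives no argument for this fact; it imports it from \cite{mele2024introduction}. The standard proof there uses Schur--Weyl duality in commutant form. Because $M$ commutes with every $W^{\otimes k}$, one writes $M=\sum_{\pi\in S_k}c_\pi V_\pi$. The support property then gives $M=P_{\mathrm{sym}}^{(d,k)}M=\bigl(\sum_\pi c_\pi\bigr)P_{\mathrm{sym}}^{(d,k)}$, since $P_{\mathrm{sym}}^{(d,k)}V_\pi=P_{\mathrm{sym}}^{(d,k)}$, and the trace fixes the constant.

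You replace the commutant statement by irreducibility of $\mathrm{Sym}^k(\mathcal{H}_d)$ plus Schur's lemma. The two rest on the same representation theory, and your fallback is to cite Schur--Weyl anyway. The commutant route needs no separate irreducibility claim. Your route yields the slightly more general statement that any $U(d)$-invariant operator supported on $\mathrm{Sym}^k$ is a multiple of $P_{\mathrm{sym}}^{(d,k)}$.

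Your own justification of irreducibility is the loose part. The complexification and double-commutant detour is heavier than needed. Spanning plus transitivity alone do not give irreducibility, since having a cyclic vector does not make a representation irreducible.

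One extra observation makes your elementary ingredients suffice, with neither Schur's lemma nor Schur--Weyl:
\begin{itemize}
\item Take $W_\theta=I+(e^{\mathrm{i}\theta}-1)\ket{0}\bra{0}$ with $\theta=2\pi/(k+1)$. A basis vector $\ket{i_1\cdots i_k}$ has $W_\theta^{\otimes k}$-eigenvalue $e^{\mathrm{i}n_0\theta}$, where $n_0$ counts the entries equal to $0$. Hence the $e^{\mathrm{i}k\theta}$-eigenspace is spanned by $\ket{0}^{\otimes k}$ alone.
\item Commutation then forces $M\ket{0}^{\otimes k}=c\ket{0}^{\otimes k}$.
\item For any unit $\ket{\phi}=W\ket{0}$ you get $M\ket{\phi}^{\otimes k}=W^{\otimes k}M\ket{0}^{\otimes k}=c\ket{\phi}^{\otimes k}$.
\item These vectors span $\mathrm{Sym}^k$, and $M=MP_{\mathrm{sym}}^{(d,k)}$ vanishes on the orthogonal complement, so $M=c\,P_{\mathrm{sym}}^{(d,k)}$.
\end{itemize}
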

	Consider a pure $n$-qubit state $\ket{\psi}$ on the composite system $AB$, with subsystem sizes $n_A$ and $n_B$. Performing computational basis measurements on subsystem $B$ yields the projected ensemble $\mathcal{E}_{\psi} \coloneqq \{(p_{\psi}(\bmz), \psi_{\bmz})\}$, where, writing $\psi\coloneqq\ketbra{\psi}{\psi}$:	
	\begin{equation}
		\begin{split}
			p_{\psi}(\bm z) &= \tr\big[(I_A \otimes \ketbra{\bm z}_B)\psi_{AB}\big], \\
			\ket{\psi_{\bm z}} &= \frac{(I_A \otimes \bra{\bm z}_B)\ket{\psi}_{AB}}{\sqrt{p_{\psi}(\bm z)}}.
		\end{split}
	\end{equation}
	We have:
	
	\begin{lemma}
		\label{lem:second_moment}
		For $p_{\psi}(\bm{z}) = \tr\left[(I_A \otimes \ketbra{\bm{z}}_B) \psi_{AB}\right]$, we have
		\begin{equation}
			\EE{\psi\sim\Haar(n)}\sum_{\bmz\in[d_B]}p_\psi(\bmz)^2=\frac{d_A+1}{d_Ad_B+1}.
		\end{equation}
	\end{lemma}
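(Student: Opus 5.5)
The plan is to rewrite $p_\psi(\bm z)^2$ as a two-copy linear functional of $\psi$ and then apply Fact~\ref{fact:haar} with $k=2$. Since $p_\psi(\bm z)=\tr[(I_A\otimes\ketbra{\bm z}_B)\psi]$, we have
\begin{equation}
p_\psi(\bm z)^2=\tr\!\big[(I_A\otimes\ketbra{\bm z}_B)^{\otimes 2}\,\psi^{\otimes 2}\big].
\end{equation}
By linearity of expectation, and because a Haar-random state is $U\ket{0}$ with $U\sim\Haar(n)$,
\begin{equation}
\EE{\psi\sim\Haar(n)}\psi^{\otimes 2}=\frac{I+\mathbb{S}}{d(d+1)},
\end{equation}
where $d=d_Ad_B$ and $\mathbb{S}$ is the swap on the two copies of $\mathcal{H}_A\otimes\mathcal{H}_B$. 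Here the normalization $\binom{d+1}{2}=d(d+1)/2$ combines with the prefactor $1/2!$.

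Next I evaluate the two traces separately, using that both $I$ and $\mathbb{S}$ factorize over the $A$ and $B$ registers, i.e.\ $\mathbb{S}=\mathbb{S}_{A}\otimes\mathbb{S}_{B}$. Write $\Pi_{\bm z}=I_A\otimes\ketbra{\bm z}_B$.
\begin{itemize}
\item The identity term gives $\tr[\Pi_{\bm z}^{\otimes 2}]=(\tr I_A)^2\,(\tr\ketbra{\bm z})^2=d_A^2$.
\item The swap term gives $\tr[\Pi_{\bm z}^{\otimes 2}\mathbb{S}]=\tr[\Pi_{\bm z}^2]=\tr\Pi_{\bm z}=d_A$, using the identity $\tr[(X\otimes Y)\mathbb{S}]=\tr[XY]$.
\end{itemize}
Hence each outcome contributes $(d_A^2+d_A)/(d_Ad_B(d_Ad_B+1))$.

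Finally, I sum over the $d_B$ outcomes $\bm z$:
\begin{equation}
\EE{\psi\sim\Haar(n)}\sum_{\bm z}p_\psi(\bm z)^2=\frac{d_B\,d_A(d_A+1)}{d_Ad_B(d_Ad_B+1)}=\frac{d_A+1}{d_Ad_B+1},
\end{equation}
which is the claimed value.

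There is no real obstacle here. The only places requiring care are the normalization constant in Fact~\ref{fact:haar} for $k=2$ and correctly splitting the global swap into $A$ and $B$ parts.
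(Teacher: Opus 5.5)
Your proposal is correct and follows essentially the same route as the paper: you write $p_\psi(\bm z)^2$ as a two-copy trace and apply the Haar second moment $(I+\mathbb{S})/(d_Ad_B(d_Ad_B+1))$. You then evaluate the identity and swap contributions as $d_A^2$ and $d_A$ per outcome and sum over the $d_B$ identical summands.
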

	\begin{proof}
		\begin{align}
			\EE{\psi\sim\Haar(n)}\sum_{\bmz\in[d_B]}p_\psi(\bmz)^2
			&=\EE{\psi\sim\Haar(n)}\sum_{\bmz\in[d_B]}\tr\big[(I_A\otimes\ket \bmz\bra\bmz_B)^{\otimes 2}\psi^{\otimes 2}\big]\\
			&=\frac{1}{(d_Ad_B+1)d_Ad_B}\sum_{\bmz\in[d_B]}\tr\big[(I_A\otimes\ket \bmz\bra\bmz_B)^{\otimes 2}(I+\mathbb{S})\big]\\
			&=\frac{1}{(d_Ad_B+1)d_A}\tr\big[(I_A\otimes\ket 0\bra0_B)^{\otimes 2}(I+\mathbb{S})\big] \label{eq:identical_summand}\\
			&=\frac{1}{(d_Ad_B+1)d_A}(d_A^2+d_A)\\
			&=\frac{d_A+1}{d_Ad_B+1}, 
		\end{align}
		where in Eq.~\eqref{eq:identical_summand} we use that the summand is identical for every $\bmz \in [d_B]$, so the sum over $z$ contributes an overall factor of $d_B$.
	\end{proof}

	\suppnote{Hardness of cloning ensembles} \label{app:proof_average}
	
	We begin by introducing the necessary notation. Let $d_A=2^{n_A}$ and $d_B=2^{n_B}$ be the dimensions of subsystems $A$ and $B$. When measuring the environment $B$ across $t$ copies of the input state $\ket{\psi}$, we denote the sequence of classical outcomes by $\bmz^{(t)} = (\bmz^{(t)}_1,\cdots,\bmz^{(t)}_t) \in [d_B]^t$. The probability of this sequence is given by $p_\psi(\bmz^{(t)})=\prod_{i=1}^t p_\psi(\bmz^{(t)}_i)$. The corresponding post-measurement state on the $t$ copies of subsystem $A$ is the product state $\psi_{\bmz^{(t)}}=\bigotimes_{i=1}^t \psi_{\bmz^{(t)}_i}$.
	
	We first present a detailed theorem that gives the explicit tradeoff between sample complexity and measurements for the ensemble cloning task.
	\begin{theorem}[No-cloning of quantum ensembles with tradeoff between sample-complexity and measurements]\label{thm:average_case_state_detailed}
		For every quantum channel $\Lambda$,
		\begin{equation}
			\mathbb{E}_{\psi \sim \Haar(n)}\ \TV{\Lambda(\psi^{\otimes t})}{\sigma_\psi^{(2)}} \ge  1-\sqrt{\frac{t(t-1)}{2}\frac{d_A+1}{d_Ad_B+1}} -\frac{2+4d_A}{(d_A+2)(d_A+1)}.
		\end{equation}
		Here, $D(\omega, \tau) \coloneqq \frac{1}{2} \Vert \omega - \tau \Vert_1$ denotes the trace distance.
	\end{theorem}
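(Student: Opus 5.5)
The plan is to exploit a symmetry of the Haar measure on $AB$ that is finer than global unitary invariance. For any family $\{U_{\bmz}\}_{\bmz\in[d_B]}$ of unitaries on $A$, the controlled unitary $W=\sum_{\bmz}U_{\bmz}\otimes\ketbra{\bmz}_B$ is a fixed unitary on $AB$, so $W\ket{\psi}$ is again Haar distributed. Under $W$ we have $p_\psi(\bmz)\mapsto p_\psi(\bmz)$ and $\psi_{\bmz}\mapsto U_{\bmz}\psi_{\bmz}U_{\bmz}^\dagger$. We may therefore average the figure of merit additionally over independent Haar $U_{\bmz}$ without changing the left-hand side. Throughout, the bound will be reduced to the fidelity with the target: I will prove an upper bound on $\mathbb{E}_\psi \sqrt{F}$ (or on a suitable overlap) and then use $D\ge 1-\sqrt{F}$ type inequalities (Fuchs--van de Graaf). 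Since $\sigma^{(2)}_\psi$ is classical on the $\bmz$ register, one can first dephase the output's $\bmz$ register. This does not increase the distance to $\sigma_\psi^{(2)}$, so the output can be taken to be of the form $\sum_{\bmz} q(\bmz)\,\omega_{\bmz}\otimes\ketbra{\bmz}$.

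\textbf{Step 1: decompose the input.} Expand $\ket{\psi}^{\otimes t}=\sum_{\bmz^{(t)}}\sqrt{p_\psi(\bmz^{(t)})}\,\ket{\psi_{\bmz^{(t)}}}\ket{\bmz^{(t)}}$. Split $[d_B]^t$ into the collision-free strings, where all $\bmz_i^{(t)}$ are distinct, and the colliding strings. Let $\Pi_{\mathrm{col}}$ be the projector onto colliding strings on $B^{\otimes t}$. By a union bound over pairs, $\bra{\psi^{\otimes t}}\Pi_{\mathrm{col}}\ket{\psi^{\otimes t}}\le\binom{t}{2}\sum_{\bmz}p_\psi(\bmz)^2$, whose Haar expectation is $\binom{t}{2}\frac{d_A+1}{d_Ad_B+1}$ by Lemma~\ref{lem:second_moment}.

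Replace $\psi^{\otimes t}$ by its normalized collision-free projection $\tilde\psi_t$. The gentle-measurement lemma gives $D(\psi^{\otimes t},\tilde\psi_t)\le\sqrt{\bra{\psi^{\otimes t}}\Pi_{\mathrm{col}}\ket{\psi^{\otimes t}}}$. Since the trace distance is contractive under $\Lambda$, and by Jensen ($\mathbb{E}\sqrt{X}\le\sqrt{\mathbb{E}X}$), this accounts for the term $\sqrt{\tfrac{t(t-1)}{2}\tfrac{d_A+1}{d_Ad_B+1}}$. It remains to show $\mathbb{E}\,D(\Lambda(\tilde\psi_t),\sigma_\psi^{(2)})\ge 1-\frac{2+4d_A}{(d_A+2)(d_A+1)}$.

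\textbf{Step 2: reduce to one-copy-to-two-copy cloning.} In $\tilde\psi_t$, each projected state $\psi_{\bmz}$ occurs in every branch at most once. Twirling with the independent $U_{\bmz}$ removes coherences between branches carrying different sets of labels. For a fixed target label $\bmz$, the conditional input is then a mixture of two kinds of states: those containing a single copy of $U_{\bmz}\psi_{\bmz}U_{\bmz}^\dagger$, tensored with states independent of $U_{\bmz}$, and those containing no copy at all. The success overlap $\sum_{\bmz}\tr[\Lambda(\tilde\psi_t)\,(\psi_{\bmz}^{\otimes2}\otimes\ketbra{\bmz})]$ is thus bounded by the best average fidelity of a channel mapping one Haar-random copy of $\phi$ on $\mathcal{H}_{d_A}$ to $\phi^{\otimes2}$. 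The no-copy case is a special case, since such a channel may ignore its input. By Fact~\ref{fact:haar} with $k=3$, this equals $\max_\Lambda\tr[(\Lambda^{\dagger}\otimes\mathrm{id})\cdot]$ against $P^{(d_A,3)}_{\mathrm{sym}}/\binom{d_A+2}{3}$. This is a small SDP, solved by the symmetric (Werner-type) optimal cloner, and it gives an average fidelity of $2/(d_A+1)$. Converting this into a trace-distance statement, via the best measurement distinguishing $\phi^{\otimes2}$ from the output averaged over $\phi$, should produce exactly the correction $\frac{2+4d_A}{(d_A+2)(d_A+1)}$. That quantity is the overlap of $P^{(d_A,2)}_{\mathrm{sym}}$-type averages, computable from Fact~\ref{fact:haar} with $k=2$ and $k=3$.

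\textbf{Main obstacle.} I expect the main difficulty to be Step 2: making rigorous that the controlled-unitary twirl really decouples the target $\psi_{\bmz}$ from all other branches, despite the coherent superposition over $\bmz^{(t)}$ and the fact that the output label $\bmz$ is chosen by $\Lambda$ itself. A secondary difficulty is pinning down the precise constant $\frac{2+4d_A}{(d_A+2)(d_A+1)}$. To handle the first, I would handle the label choice by conditioning on the dephased output register and writing everything as a single linear functional in $\psi^{\otimes t}\otimes\psi_{\bmz}^{\otimes2}$. I would then average over $U_{\bmz}$ via Schur--Weyl duality, with the $(t+2)$-fold moment of $U_{\bmz}$ restricted to the collision-free sector, where it is only a three-fold moment.
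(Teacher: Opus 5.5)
Your Step~1 matches the paper's Lemma~\ref{lem:collision}. Your controlled-unitary invariance is a sound way to obtain what the paper gets from Lemmas~\ref{lem:dephasing} and~\ref{lem:reduction}: it removes coherences between different label sets, and it makes the $\psi_{\bmz}$ i.i.d.\ Haar given $p_\psi$. The remaining permutation coherence on a collision-free label set is the image of a product state under the isometry $\ket{\xi}\mapsto (t!)^{-1/2}\sum_{\sigma\in S_t}V_\sigma\ket{\xi}\ket{\sigma(\bmz^{(t)})}$. So for each target label $\bmz$ there is a fixed channel $\mathcal S_{\bmz}$ with relevant input $\mathcal S_{\bmz}(\phi)$, where $\phi=\psi_{\bmz}$ is Haar.

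\textbf{The gap.} The sentence ``the success overlap is thus bounded by the best average fidelity of a $1\to2$ cloner'' carries the whole reduction, and you flag it but do not resolve it. For fixed $\bmz$ you only obtain the completely positive, trace-non-increasing map $\Phi_{\bmz}(\phi)=\tr_B[(I\otimes\ketbra{\bmz})\,\Lambda(\mathcal S_{\bmz}(\phi))]$. This map is postselected on $\Lambda$ announcing $\bmz$, and the probability of that announcement may depend on $\phi$. Bounding each of the $d_B$ terms by the optimal cloning fidelity $F^\star$ and summing loses a factor $d_B$.

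The missing ingredient is a normalization. Set $w_{\bmz}\coloneqq\tr\Phi_{\bmz}(I_A/d_A)$. Then $\mathbb E_\phi\bra{\phi}^{\otimes2}\Phi_{\bmz}(\phi)\ket{\phi}^{\otimes2}\le w_{\bmz}F^\star$. Either Haar-twirl $\Phi_{\bmz}$ covariantly, which preserves the average and makes its trace input-independent, so $\Phi_{\bmz}/w_{\bmz}$ is a channel. Or, on the Choi side, use $\norm{J(\Phi_{\bmz})}_1=\tr J(\Phi_{\bmz})=d_A w_{\bmz}$. Moreover $\sum_{\bmz}w_{\bmz}=1$, because $\mathcal S_{\bmz}(I_A/d_A)$ is the fully averaged input. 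That is the same state for every $\bmz$, so the $w_{\bmz}$ are the output-label probabilities of a single state.

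This is exactly the content of the paper's Lemma~\ref{lem:symmetric}, which additionally symmetrizes over $S_{d_B}$ so that every $w_{\bmz}=1/d_B$, and then sets $\Lambda'=d_B\Psi_1$. A Schur--Weyl evaluation of the three-fold moment does not supply this by itself.

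\textbf{Smaller corrections.} The constant $\frac{2+4d_A}{(d_A+2)(d_A+1)}$ is not a trace-distance conversion of Werner's $2/(d_A+1)$. In the paper it is itself a (looser) upper bound on the cloning fidelity, namely $d_A\norm{M}_\infty$ via a triangle inequality over $S_3$ on the partially transposed symmetric projector (Lemma~\ref{lem:no_cloning}).

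The conversion to trace distance is just
$\TV{\tau}{\sigma_\psi^{(2)}}\ge 1-\sum_{\bmz}\bra{\psi_{\bmz}}^{\otimes2}\bra{\bmz}\tau\ket{\psi_{\bmz}}^{\otimes2}\ket{\bmz}$,
obtained by testing with $P=\sum_{\bmz}\psi_{\bmz}^{\otimes2}\otimes\ketbra{\bmz}$. Once the normalization is in place, your route gives the stronger bound with $2/(d_A+1)$, which implies the statement since $2/(d_A+1)\le\frac{2+4d_A}{(d_A+2)(d_A+1)}$.

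Do not route this through Fuchs--van de Graaf. The bound $1-\sqrt{F}$ would give only $1-\sqrt{2/(d_A+1)}$, which already falls below the claimed bound at $d_A=4$.
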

	If we require the average error to be small, e.g., $\mathbb{E}_{\psi \sim \Haar(n)} D(\Lambda(\psi^{\otimes t}), \sigma_\psi^{(2)}) \le 0.1$, then it necessitates that $t = \Omega(2^{n_B/2})$. This establishes the first part of Theorem~\ref{thm:summary}. 
	
	We now proceed to prove Theorem~\ref{thm:average_case_state_detailed}, which consists of four main steps. 
	\begin{enumerate}
		\item We first restrict to collision-free outcomes on $B$ across the $t$ copies of the purification $\psi$, and show that the contribution of collision events is negligible on average for Haar-random $\psi$. This is established in Lemma~\ref{lem:collision}.
		\item We then simplify the remaining state by removing irrelevant coherence. First, we eliminate the coherence between outcome sequences that cannot be transformed into one another by permutations. This is established in Lemma~\ref{lem:dephasing}.
		\item Next, Lemma~\ref{lem:reduction} removes the remaining coherence between outcome sequences that are related by permutations, and reduces the problem to a select-and-clone task for a product of independent Haar-random states on system $A$. Lemma~\ref{lem:symmetric} then further reduces this task to an average-case cloning problem for a single Haar-random state.
		
		\item Finally, we prove in Lemma~\ref{lem:no_cloning} an average-case no-cloning bound for a single Haar-random state. Combining this bound with the above reductions completes the proof of the theorem.
	\end{enumerate}
	We also discuss several possible generalizations of Theorem~\ref{thm:average_case_state_detailed} at the end of this section. 
	
	\subsection{Restricting to collision-free outcomes}
	
	The first step is to show that we can restrict our analysis to measurement outcomes that are collision-free.
	It suffices to consider the regime $t\le \sqrt{d_B}$, since for larger $t$ the lower bound in  Theorem~\ref{thm:average_case_state_detailed} is trivial. When $t$ is small, for a typical Haar-random state, the probability of a collision (i.e., a repeated outcome) on subsystem $B$ across $t$ copies is exponentially small in the environment size $n_B$. We therefore expect the original $t$-copy state $\psi^{\otimes t}$ to be very close to its projection onto the collision-free subspace.
	
	We formalize this by defining the set of collision-free outcome sequences as
	\begin{equation}
		R \coloneqq \{\bm z^{(t)}\in[d_B]^t : \bmz^{(t)}_i \neq \bmz^{(t)}_j \ \ \forall\,i\neq j\}.
	\end{equation}
	The projector onto this collision-free subspace of the environment is $\Pi_R\coloneqq I_A^{\otimes t} \otimes \sum_{\bmz^{(t)}\in R}\ket{\bmz^{(t)}}\bra{\bmz^{(t)}}_B$. We define the collision-free part of the state as the normalized projection:
	\begin{equation}
		\ket{\Phi^{(t)}_\psi} \coloneqq \frac{\Pi_R \ket{\psi}^{\otimes t}}{\|\Pi_R \ket{\psi}^{\otimes t}\|}.
	\end{equation}
	This state can be expressed as a superposition over only the collision-free outcomes:
	\begin{equation}
		\ket{\Phi^{(t)}_\psi}=\sum_{\bmz^{(t)}\in R}\sqrt{q_{\psi}(\bmz^{(t)})}\ket{\psi_{\bmz^{(t)}}}_A\ket{\bmz^{(t)}}_B.
	\end{equation}
	where 
	\begin{equation}\label{eq:prob_dist_no_collision}
		q_{\psi}(\bmz^{(t)}) \coloneqq \frac{p_{\psi}(\bmz^{(t)})}{\sum_{\bmz^{(t)} \in R} p_{\psi}(\bmz^{(t)})}
	\end{equation}
	is the conditional probability of obtaining the sequence $\bmz^{(t)}$ given that it is collision-free.
	
	The following lemma rigorously bounds the average distance between the original state and its collision-free part, confirming our intuition that they are exponentially close.
	\begin{lemma}
		\label{lem:collision}
		For $1 \le n_A \le n$, 
		\begin{equation}
			\EE{\psi\sim\Haar(n)}\left[\TV{\psi^{\ot t}}{\Phi^{(t)}_\psi}\right]\leq \sqrt{\frac{t(t-1)}{2}\frac{d_A+1}{d_Ad_B+1}}.
		\end{equation}
	\end{lemma}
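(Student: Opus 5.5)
We bound the trace distance between the two pure states $\psi^{\ot t}$ and $\Phi^{(t)}_\psi$ through their overlap. For pure states, $\TV{\psi^{\ot t}}{\Phi^{(t)}_\psi}=\sqrt{1-|\langle\psi^{\ot t}|\Phi^{(t)}_\psi\rangle|^2}$. Since $\ket{\Phi^{(t)}_\psi}$ is the normalized projection of $\ket{\psi}^{\ot t}$ by $\Pi_R$, the overlap equals $\|\Pi_R\ket{\psi}^{\ot t}\|$, whose square is $\sum_{\bmz^{(t)}\in R}p_\psi(\bmz^{(t)})$. Hence
\begin{equation}
\TV{\psi^{\ot t}}{\Phi^{(t)}_\psi}=\sqrt{\Pr_{\bmz^{(t)}\sim p_\psi}\big[\bmz^{(t)}\notin R\big]}.
\end{equation}
The right-hand side is the square root of the probability of a collision among $t$ i.i.d.\ samples from $p_\psi$. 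If $\Pi_R\ket{\psi}^{\ot t}=0$, then $\Phi^{(t)}_\psi$ is ill-defined, but the collision probability is then $1$ and the bound below is trivially $\ge 1$. This case can be handled by any convention.

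Next we apply Jensen's inequality, since the square root is concave:
\begin{equation}
\EE{\psi}\,\TV{\psi^{\ot t}}{\Phi^{(t)}_\psi}\le\sqrt{\EE{\psi}\Pr[\bmz^{(t)}\notin R]}.
\end{equation}
We then use a union bound over the $\binom{t}{2}$ pairs $i<j$. Each event $\{\bmz_i=\bmz_j\}$ has probability $\sum_{\bmz}p_\psi(\bmz)^2$, because the samples are independent with identical marginals. This gives
\begin{equation}
\Pr[\bmz^{(t)}\notin R]\le\frac{t(t-1)}{2}\sum_{\bmz}p_\psi(\bmz)^2.
\end{equation}

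Finally, Lemma~\ref{lem:second_moment} evaluates the Haar average $\EE{\psi}\sum_{\bmz}p_\psi(\bmz)^2=\frac{d_A+1}{d_Ad_B+1}$. Substituting this gives exactly the claimed bound. The only points needing care are identifying the overlap with the norm of the projection and the order of operations: apply Jensen first and then the union bound, which is legitimate since both steps only increase the bound. No step is a real obstacle. The main one is the pure-state trace-distance identity, which is standard.
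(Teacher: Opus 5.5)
Your proposal is correct and follows the paper's proof essentially step for step. Both arguments use the pure-state identity to write the trace distance as the square root of the collision probability, then a union bound over the $\binom{t}{2}$ pairs, concavity of the square root, and Lemma~\ref{lem:second_moment}. Your ordering (Jensen first, then the union bound as an inequality) and your remark on the degenerate case $\Pi_R\ket{\psi}^{\ot t}=0$ are slightly more careful than the paper's write-up, which states the final step as an equality.
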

	\begin{proof}
		Since both $\psi^{\ot t}$ and $\Phi^{(t)}_\psi$ are pure states, the trace distance can be written as
		\begin{equation}
			\begin{split}
				\TV{\psi^{\ot t}}{\Phi^{(t)}_\psi}&=\sqrt{1-|\langle\psi^{\ot t}|\Phi^{(t)}_\psi\rangle|^2} \\&=\sqrt{1-\sum_{\bmz^{(t)}\in R}p_{\psi}(\bmz^{(t)})}
				\\&=\sqrt{\sum_{\bmz^{(t)}\not\in R}p_{\psi}(\bmz^{(t)})}.
			\end{split}
		\end{equation}
		Now, we bound the sum over the collision outcomes. We have
		\begin{equation}\label{eq:collision_union}
			\begin{split}
				\sum_{\bmz^{(t)}\not\in R}p_{\psi}(\bmz^{(t)}) &\leq\sum_{1\leq i<j\leq t}\Pr_{\bmz^{(t)}\sim p_{\psi}}[\bmz^{(t)}_i=\bmz^{(t)}_j]\\
				&=\frac{t(t-1)}{2}\sum_{\bmz\in[d_B]}p_{\psi}(\bmz)^2,
		\end{split}\end{equation}
		where the inequality comes from the union bound. 
		Then, 
		\begin{equation}\begin{split}
				\EE{\psi\sim\Haar(n)}\TV{\psi^{\ot t}}{\Phi^{(t)}_\psi}&= \EE{\psi\sim\Haar(n)}\sqrt{\sum_{\bmz^{(t)}\not\in R}p_{\psi}(\bmz^{(t)})}\\
				&\leq \sqrt{\EE{\psi\sim\Haar(n)}\sum_{\bmz^{(t)}\not\in R}p_{\psi}(\bmz^{(t)})}\\
				&=\sqrt{\frac{t(t-1)}{2}\frac{d_A+1}{d_Ad_B+1}}.
		\end{split}\end{equation}
		where the inequality uses the concavity of the square-root function, and the last equality comes from Eq.~\eqref{eq:collision_union} and Lemma \ref{lem:second_moment}.     
	\end{proof}
	
	\subsection{Enforcing measurement and decoherence}
	While Lemma~\ref{lem:collision} reduces the problem to the cleaner setting in which the labels on $\mathcal{H}_B^{\otimes t}$ are collision-free, coherence between different labels may still remain.
	Now, we introduce a second simplification by showing that the coherence of sequences that cannot be transformed into each other via a permutation is useless. 
	This decoherence is achieved by measuring the environment register $B$, with measurement operators given by
	\begin{equation}
		\left\{\frac{1}{\sqrt{t!}}\sum_{\pi\in S_t}I_A^{\otimes t}\otimes \ketbra{\pi(\bmz^{(t)})}{\pi(\bmz^{(t)})}_B\right\}_{\bmz^{(t)}\in R}.
	\end{equation}
	After measurement, the unnormalized post-measurement state is given by
	\begin{equation}
		X_{\bmz^{(t)}}=\frac{q_\psi(\bmz^{(t)})}{t!}\sum_{\sigma,\pi\in S_t}\ket{\psi_{\sigma(\bmz^{(t)})}}\bra{\psi_{\pi(\bmz^{(t)})}}_A\ket{\sigma(\bmz^{(t)})}\bra{\pi(\bmz^{(t)})}_B.
	\end{equation}
	The dephased state is then defined as $\rho^{(t)}_\psi=\sum_{\bmz^{(t)}\in R}X_{\bmz^{(t)}}$. 
	
	The key intuition behind this simplification is that Haar randomness induces independent random phases for each label $\bmz^{(t)}$. Therefore, whenever two distinct labels $\bmz^{(t)}_1$ and $\bmz^{(t)}_2$ cannot be mapped to each other by a permutation, the off-diagonal term $\ket{\bmz^{(t)}_1}\bra{\bmz^{(t)}_2}$ acquires a random phase. Averaging over Haar-random states then eliminates such terms, thereby enforcing decoherence between these labels.
	The following lemma rigorously shows that any channel acting on $\Phi^{(t)}_\psi$ exhibits average-case performance no better than on $\rho^{(t)}_\psi$ for Haar-random $\psi$.
	\begin{lemma}
		\label{lem:dephasing}
		For every channel $\Lambda$, 
		\begin{equation}
			\EE{\substack{\psi\sim \mathrm{Haar}(n)}}  \TV{\Lambda(\rho^{(t)}_\psi)}{\sigma_\psi^{(2)}} \le \EE{\substack{\psi\sim \mathrm{Haar}(n)}}  \TV{\Lambda(\Phi^{(t)}_\psi)}{\sigma_\psi^{(2)}}.
		\end{equation}
	\end{lemma}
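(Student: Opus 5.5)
The approach is to realize the dephasing $\Phi^{(t)}_\psi \mapsto \rho^{(t)}_\psi$ as an average over a group of symmetries that leave the Haar measure invariant and fix the target $\sigma_\psi^{(2)}$ up to a relabeling. Convexity of the trace distance then gives the inequality.

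\emph{The symmetry.} Take a random diagonal phase unitary on $B$, $W_\theta=\sum_{\bmz}e^{i\theta_{\bmz}}\ket{\bmz}\bra{\bmz}_B$, with independent uniform phases $\theta_{\bmz}$. Since $(I_A\otimes W_\theta)\ket\psi$ is again Haar-distributed, we may replace $\psi$ by $\psi_\theta:=(I_A\otimes W_\theta)\psi(I_A\otimes W_\theta)^\dagger$ inside the expectation on the right-hand side. Three facts about $\psi_\theta$ are needed.
\begin{enumerate}
\item It has the same distribution $p_\psi$ and the same projected states up to global phases, so $\sigma^{(2)}_{\psi_\theta}=\sigma^{(2)}_\psi$ exactly.
\item The collision-free projection commutes with $W_\theta^{\otimes t}$, so $\Phi^{(t)}_{\psi_\theta}=(I\otimes W_\theta^{\otimes t})\Phi^{(t)}_\psi(I\otimes W_\theta^{\otimes t})^\dagger$.
\item The factor $W_\theta^{\otimes t}$ multiplies the off-diagonal term $\ket{\bmz^{(t)}}\bra{\bmz'^{(t)}}_B$ by $\exp\big(i\sum_j\theta_{\bmz_j}-i\sum_j\theta_{\bmz'_j}\big)$.
\end{enumerate}
For collision-free sequences, this phase averages to $1$ exactly when the multisets $\{\bmz_j\}$ and $\{\bmz'_j\}$ coincide, that is, when $\bmz'^{(t)}=\pi(\bmz^{(t)})$ for some $\pi\in S_t$. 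Otherwise it averages to $0$. The twirl $\mathbb{E}_\theta[\Phi^{(t)}_{\psi_\theta}]$ therefore keeps exactly the blocks indexed by permutation orbits. I will check that this agrees with $\rho^{(t)}_\psi=\sum_{\bmz^{(t)}\in R}X_{\bmz^{(t)}}$. The $1/t!$ normalization compensates for summing over all $t!$ representatives of each orbit, and $q_\psi$ together with the projected states $\psi_{\bmz_j}$ is permutation-covariant.

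\emph{Convexity step.} Write
\begin{equation}
\mathbb{E}_\psi D\big(\Lambda(\rho^{(t)}_\psi),\sigma^{(2)}_\psi\big)=\mathbb{E}_\psi D\big(\mathbb{E}_\theta\Lambda(\Phi^{(t)}_{\psi_\theta}),\mathbb{E}_\theta\sigma^{(2)}_{\psi_\theta}\big)\le \mathbb{E}_\psi\mathbb{E}_\theta D\big(\Lambda(\Phi^{(t)}_{\psi_\theta}),\sigma^{(2)}_{\psi_\theta}\big).
\end{equation}
The equality uses linearity of $\Lambda$ and fact 1; the inequality is joint convexity of the trace distance. By Haar invariance, $\psi_\theta$ has the same law as $\psi$ for every fixed $\theta$. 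The right-hand side therefore equals $\mathbb{E}_\psi D(\Lambda(\Phi^{(t)}_\psi),\sigma^{(2)}_\psi)$, which proves the lemma.

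\emph{Main obstacle.} The delicate part is fact 1, namely that the global phases $e^{i\theta_{\bmz}}$ cancel in $\sigma^{(2)}_\psi$. This holds because $\sigma^{(2)}_\psi$ contains only $\psi_{\bmz}^{\otimes 2}$, with the label $\bmz$ stored classically. The other point needing care is the bookkeeping that identifies the twirled state with $\rho^{(t)}_\psi$, in particular the normalization of the orbit blocks. If continuous phases cause measurability concerns, random signs $\theta_{\bmz}\in\{0,\pi\}$ do not suffice to kill all non-permutation cross terms. I would instead use phases drawn uniformly from $\mathbb{Z}_m$ with $m>t$, which does.
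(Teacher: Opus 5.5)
Your proof is correct and is essentially the paper's argument: the paper also attaches independent uniform relative phases $e^{\mathrm{i}\theta(\bmz)}$ to the $B$-labels (equivalent to your $W_\theta$). It then uses Haar invariance together with $\sigma^{(2)}_{\psi_{\vec\theta}}=\sigma^{(2)}_\psi$, applies Jensen's inequality (convexity of the trace distance), and expands the state to verify $\mathbb{E}_{\vec\theta}\,\Phi^{(t)}_{\psi_{\vec\theta}}=\rho^{(t)}_\psi$.

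One small correction to your closing aside. For collision-free sequences each $\theta_{\bmz}$ enters the exponent with coefficient in $\{-1,0,1\}$, so independent random signs $\theta_{\bmz}\in\{0,\pi\}$ already eliminate every cross term not related by a permutation. In any case, continuous phases raise no measurability issue.
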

	\begin{proof}
		Consider a quantum state $\ket{\psi}$ and a phase vector $\vec\theta = (\theta(\bm z))_{\bm z \in [d_B]}$ with $\theta(\bm z) \in [0, 2\pi)$. By adding relative phases to each projected state, we define
		\begin{equation}
			\ket{\psi_{\vec\theta}} \coloneqq \sum_{\bmz}\sqrt{p_\psi(\bmz)}e^{\mathrm{i}\theta(\bmz)}\ket{\psi_{\bmz}}\ket{\bmz}.
		\end{equation}
		Note that for every $\vec\theta$ and $\bm z$, we have $\ketbra{\psi_{\vec\theta,\bm z}} = \ketbra{\psi_{\bm z}}$. Hence, $\sigma_\psi^{(2)} = \sigma_{\psi_{\vec\theta}}^{(2)}$. Therefore, we can replace the input state with a state that averages over phases:
		\begin{equation}\begin{split}
				\EE{\substack{\psi\sim \mathrm{Haar}(n)}} \TV{\Lambda(\Phi^{(t)}_\psi)}{\sigma_\psi^{(2)}} &= \EE{\substack{\psi\sim \mathrm{Haar}(n)}} \EE{\vec\theta} \TV{\Lambda(\Phi^{(t)}_{\psi_{\vec\theta}})}{\sigma_\psi^{(2)}}\\
				&\geq \EE{\substack{\psi\sim \mathrm{Haar}(n)}} \TV{\Lambda\left(\EE{\vec\theta}\Phi^{(t)}_{\psi_{\vec\theta}}\right)}{\sigma_\psi^{(2)}}.
		\end{split}\end{equation}
		where the first equality comes from the invariance of the Haar measure under adding relative phases, and the second comes from Jensen's inequality. 
		
		Now we show that $\mathbb{E}_{\vec\theta}\Phi^{(t)}_{\psi_{\vec\theta}}$ is just $\rho_\psi^{(t)}$. Expanding the state, we have
		\begin{equation}\begin{split}
				\EE{\vec\theta} \Phi^{(t)}_{\psi_{\vec\theta}}
				&= \sum_{\bmz^{(t)},\tilde\bmz^{(t)}\in R}
				\sqrt{q_{\psi}(\bmz^{(t)})q_{\psi}(\tilde\bmz^{(t)})} \Bigl(\EE{\vec\theta} 
				e^{\mathrm{i}\sum_{i}[\theta(\bmz^{(t)}_i)-\theta(\tilde\bmz^{(t)}_i)]}\Bigr) \ketbra{\psi_{\bmz^{(t)}}}{\psi_{\tilde\bmz^{(t)}}}_A\ketbra{\bmz^{(t)}}{\tilde\bmz^{(t)}}_B.
		\end{split}\end{equation}
		If there exists an index $\bm z \in [d_B]$ such that $\bm z \in \bm z^{(t)}$ but $\bm z \notin \tilde{\bm z}^{(t)}$, then the coefficient $\bE_{\vec\theta} e^{\text{i}\sum_i[\theta(\bmz^{(t)}_i)-\theta(\tilde \bmz^{(t)}_i)]}$ will be $0$. 
		The only surviving terms are those where $\bm z^{(t)} = \pi(\tilde{\bm z}^{(t)})$ for some permutation $\pi \in S_t$. In this case, we have $\bE_{\vec\theta} e^{\text{i}[\sum_i\theta(\bmz^{(t)}_i)-\theta(\tilde \bmz^{(t)}_i)]}=1$. Therefore,
		\begin{equation}\begin{split}
				\EE{\vec\theta} \Phi^{(t)}_{\psi_{\vec\theta}}
				&= \sum_{\bmz^{(t)}\in R}\sum_{\pi\in S_t}
				q_{\psi}(\bmz^{(t)}) \ketbra{\psi_{\bmz^{(t)}}}{\psi_{\pi(\bmz^{(t)})}}_A\ketbra{\bmz^{(t)}}{\pi(\bmz^{(t)})}_B\\
				&= \sum_{\bmz^{(t)}\in R}\sum_{\sigma,\pi\in S_t}
				\frac{q_{\psi}(\bmz^{(t)})}{t!} \ketbra{\psi_{\sigma(\bmz^{(t)})}}{\psi_{\pi(\bmz^{(t)})}}_A\ketbra{\sigma(\bmz^{(t)})}{\pi(\bmz^{(t)})}_B\\
				&=\rho^{(t)}_\psi.
		\end{split}\end{equation}
		This completes the proof.
	\end{proof}
	
	\subsection{Reducing to cloning a random state}
	We now show that it is impossible to prepare $\sigma_{\psi}^{(2)}$ given $\rho^{(t)}_\psi$. 
	The key intuition is that, given the dephased input $\rho^{(t)}_{\psi}$, we can only sample a no-collision sequence $\bm z^{(t)}$ with the corresponding state $\ket{\psi_{\bmz^{(t)}}}$, meaning we only have access to a single copy of each projected state. However, the goal requires producing two identical copies of these states. This turns the problem into an average-case cloning task.	
	
	This reduction is achieved via the following Lemmas \ref{lem:reduction} and \ref{lem:symmetric}. 
	We begin with Lemma~\ref{lem:reduction}. Although $\rho_{\psi}^{(t)}$ has already been dephased between labels $\bmz^{(t)}$ that are not related by permutations, it still retains coherence between different labels that can be transformed into each other via permutations. Lemma~\ref{lem:reduction} shows that even this remaining coherence can be removed. The key idea is to construct another channel $\Lambda'$ that takes $d_B$ independent Haar-random product states as input and uses them to simulate the action of $\Lambda$ on $\rho_{\psi}^{(t)}$. This reduces the problem to a select-and-clone task: one of these independent input states is selected, its label is recorded in an ancilla register, and the selected state is then cloned.
	
	\begin{lemma}
		\label{lem:reduction}
		For every $1 \le t \leq d_B$ and channel $\Lambda$, if
		\begin{equation}
			\EE{\substack{\psi\sim \mathrm{Haar}(n)}}  \TV{\Lambda(\rho^{(t)}_\psi)}{\sigma_\psi^{(2)}}
			\le \delta,
		\end{equation}
		then there exists a channel $\Lambda'$ such that
		\begin{equation}
			\label{eq:cond1}
			\EE{\phi_1,\cdots,\phi_{d_B}\sim \mathrm{Haar}(n_A)}
			\Biggl[\sum_{i=1}^{d_B}
			\bra{\phi_i}^{\otimes 2}\bra{i}\;
			\Lambda'\bigl(\bigotimes_{j=1}^{d_B}\phi_j\bigr)\;
			\ket{\phi_i}^{\otimes 2}\ket{i}
			\Biggr]
			\ge 1-\delta.
		\end{equation}
	\end{lemma}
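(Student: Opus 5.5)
The plan is to take $\Lambda' = \Lambda\circ\mathcal{P}$, where $\mathcal{P}$ is a preparation channel that turns $\bigotimes_{j}\phi_j$ into exactly $\rho^{(t)}_\psi$ for a Haar-random $\psi$. This rests on the fact that a Haar-random state on $AB$ factorizes as follows:
\begin{equation}
\ket{\psi}_{AB}=\sum_{\bmz}\sqrt{p_\psi(\bmz)}\,e^{\mathrm{i}\theta(\bmz)}\ket{\psi_{\bmz}}_A\ket{\bmz}_B .
\end{equation}
I will justify the factorization by realizing $\ket{\psi}$ as a normalized complex Gaussian vector. Its $d_B$ blocks $(I_A\otimes\bra{\bmz})\ket{\psi}$ are then independent Gaussian vectors in $\mathcal{H}_{d_A}$. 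Each block's direction is Haar-random on $A$ and independent of its norm, and normalization only rescales the norms. Hence the projected states $\{\psi_{\bmz}\}_{\bmz}$ are i.i.d.\ Haar-random on $n_A$ qubits and jointly independent of the distribution $p_\psi$. Since $\rho^{(t)}_\psi$ depends on $\psi$ only through $p_\psi$ (via $q_\psi$) and the projectors $\psi_{\bmz}$, the phases are irrelevant. So sampling $p$ from its induced marginal law, independently of $\phi_1,\dots,\phi_{d_B}\sim\Haar(n_A)$, and identifying $\psi_{\bmz}=\phi_{\bmz}$, reproduces the joint law of $(p_\psi,\{\psi_{\bmz}\})$ under $\psi\sim\Haar(n)$.

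Given this, $\mathcal{P}$ proceeds in three steps. First, it classically samples $p$ from the marginal above, then samples a collision-free sequence $\bmz^{(t)}\in R$ from $q$ as in Eq.~\eqref{eq:prob_dist_no_collision}; the set $R$ is nonempty because $t\le d_B$. Second, it takes the $t$ distinct input registers $\phi_{\bmz^{(t)}_1},\dots,\phi_{\bmz^{(t)}_t}$, each of which is available exactly once because there are no collisions, and discards the rest. Third, it prepares the label register in $\frac{1}{\sqrt{t!}}\sum_{\sigma\in S_t}\ket{\sigma(\bmz^{(t)})}_B$ and applies, controlled on $\sigma$, the permutation $V_\sigma$ to the $A$ registers. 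I will verify that the result is $\frac{1}{\sqrt{t!}}\sum_{\sigma}\ket{\psi_{\sigma(\bmz^{(t)})}}_A\ket{\sigma(\bmz^{(t)})}_B$, whose projector is $X_{\bmz^{(t)}}/q(\bmz^{(t)})$. Averaging over the sampled $\bmz^{(t)}$ therefore gives exactly $\rho^{(t)}_\psi$.

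It remains to convert the trace-distance guarantee into the fidelity-type quantity in Eq.~\eqref{eq:cond1}. Let
\begin{equation}
P\coloneqq\sum_{i}\phi_i^{\otimes 2}\otimes\ketbra{i}{i},
\end{equation}
which is a projector because the terms live on orthogonal label sectors. The left-hand side of Eq.~\eqref{eq:cond1} is $\mathbb{E}\,\tr[P\,\Lambda'(\bigotimes_j\phi_j)]$. Under the identification $\phi_{\bmz}=\psi_{\bmz}$ we have $\tr[P\sigma^{(2)}_\psi]=1$, so
\begin{equation}
\tr[P\,\Lambda(\rho^{(t)}_\psi)]\ge 1-\TV{\Lambda(\rho^{(t)}_\psi)}{\sigma^{(2)}_\psi}.
\end{equation}
Taking expectations, and using that the joint law of $(p,\{\phi_{\bmz}\})$ matches that of $\psi\sim\Haar(n)$, yields $1-\delta$.

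The main obstacle I expect is the independence and Haar-distribution claim for the projected states, including care that $p_\psi$ has no atoms at $0$ so that each $\psi_{\bmz}$ is almost surely well defined. The Gaussian-vector argument should handle this cleanly. A secondary point to check is that the controlled-permutation construction gives the right operator ordering, i.e., whether $V_\sigma$ permutes the $A$ registers consistently with $\sigma$ acting on the label tuple.
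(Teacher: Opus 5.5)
Your proposal is correct and follows the paper's proof. It uses the same simulation channel as the paper: sample $p\sim\nu$, sample a collision-free $\bmz^{(t)}$ from $q$, coherently symmetrize the selected inputs together with their labels, then apply $\Lambda$. It also uses the same projector $\sum_{\bmz}\psi_{\bmz}^{\otimes 2}\otimes\ketbra{\bmz}{\bmz}$ in the variational form of the trace distance. Your controlled-permutation preparation is equivalent to the paper's $\sigma$-register-then-uncompute step, and your Gaussian-vector argument spells out a fact the paper uses without proof at the step ``using the definition of $\nu$'': the projected states are i.i.d.\ Haar and independent of $p_\psi$.
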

	\begin{proof}
		For any state $\tau$ on system $AB$, we use the variational expression for the trace distance, which gives:
		\begin{equation}\begin{split}
				\TV{\tau}{\sigma_\psi^{(2)}}
				&=\max_{0\le P\le I}\tr\bigl(P(\sigma_\psi^{(2)}-\tau)\bigr)\\
				&\ge \sum_{\bmz\in[d_B]} \bra{\psi_\bmz}^{\otimes 2}\bra{\bmz}\sigma_\psi^{(2)}\ket{\psi_\bmz}^{\otimes 2}\ket{\bmz}
				-\sum_{\bmz\in[d_B]} \bra{\psi_\bmz}^{\otimes 2}\bra{\bmz}\tau\ket{\psi_\bmz}^{\otimes 2}\ket{\bmz}\\
				&= 1-\sum_{\bmz\in[d_B]} \bra{\psi_\bmz}^{\otimes 2}\bra{\bmz}\tau\ket{\psi_\bmz}^{\otimes 2}\ket{\bmz}.
		\end{split}\end{equation}
		where the inequality is obtained by choosing $P =  \sum_\bmz\ket{\psi_\bmz}\bra{\psi_\bmz}^{\otimes 2}\ket{\bmz}\bra{\bmz}$. 
		Substituting $\tau=\Lambda(\rho^{(t)}_{\psi})$, we obtain
		\begin{equation}
			\label{eq:cond2}
			\EE{\psi\sim\mathrm{Haar}(n)}
			\sum_{\bmz\in[d_B]} \bra{\psi_\bmz}^{\otimes 2}\bra{\bmz}\Lambda(\rho^{(t)}_\psi)\ket{\psi_\bmz}^{\otimes 2}\ket{\bmz}
			\ge 1-\delta.
		\end{equation}
		
		Now we construct the channel $\Lambda'$ satisfying Eq.~(\ref{eq:cond1}). The idea is that $\Lambda'$ first maps its input $\phi_1,\cdots,\phi_{d_B}$ to a state $\tilde \rho$ that is distributed in the same way as the input state $\rho_\psi^{(t)}$ of $\Lambda$. It then applies $\Lambda$ to $\tilde \rho$ and outputs the result. By Eq. (\ref{eq:cond2}), the channel $\Lambda'$ satisfies the desired condition.
		
		More explicitly, the channel $\Lambda'$ is constructed as follows:
		\begin{enumerate}
			\item $\Lambda'$ takes the product state $\bigotimes_{i=1}^{d_B}\phi_i$ as input.
			
			\item Let $\nu$ denote the distribution over probability distributions $p_\psi$ induced by a Haar-random state $\psi$. In other words, $p_0 \sim \nu$ means that one first draws $\psi$ from the Haar measure and then sets $p_0 = p_\psi$. The channel $\Lambda'$ samples a probability distribution $p_0$ from $\nu$, and computes the corresponding $q_0$ as given in Eq.~\eqref{eq:prob_dist_no_collision}. This step simulates the distribution function $q_\psi$ associated with $\rho_\psi^{(t)}$.
			
			\item Sample a sequence $\bm z^{(t)} \sim q_0$ from the set $R$. This step simulates the set of classical registers of $\rho_\psi^{(t)}$.
			
			\item Prepare the uniform superposition $\frac{1}{\sqrt{t!}} \sum_{\sigma \in S_t} \ket{\sigma}$ over the permutation group $S_t$, and use $\sigma$ to permute the states $\{\phi_i\}_i$ and labels $\bmz^{(t)}$, resulting in the state
			\begin{equation}
				\frac{1}{\sqrt{t!}}\sum_{\sigma\in S_t}\ket{\sigma}\ket{\phi_{\sigma(\bmz^{(t)})}}\ket{\sigma(\bmz^{(t)})}.
			\end{equation}
			This step constructs a simulated version of $\rho_\psi^{(t)}$, up to an entangled ancillary register $\ket{\sigma}$.
			
			\item Erase the $\ket{\sigma}$ register using $\ket{\sigma(\bmz^{(t)})}$ and $\bmz^{(t)}$. In particular, since the entries of $\bmz^{(t)}$ are pairwise distinct, one can construct a circuit $f_{\bmz^{(t)}}$ that computes $\sigma$ from $\sigma(\bmz^{(t)})$ given $\bmz^{(t)}$. The circuit $f_{\bmz^{(t)}}$ acts as
			\begin{equation}
				\ket{\sigma(\bmz^{(t)})}\ket{y}\mapsto \ket{\sigma(\bmz^{(t)})}\ket{y\oplus \sigma}.
			\end{equation}
			We then apply $f_{\bmz^{(t)}}$ to the third and first registers, which resets the first register to $\ket{0}$. Denote the resulting state by
			\begin{equation}
				\ket{\tilde\rho}:=\frac{1}{\sqrt{t!}}\sum_{\sigma\in S_t}\ket{\phi_{\sigma(\bmz^{(t)})}}\ket{\sigma(\bmz^{(t)})}.
			\end{equation}
			The state $\tilde\rho$ is now distributed in the same way as the input $\rho_\psi^{(t)}$ of $\Lambda$.
			
			\item Finally, apply $\Lambda$ to $\tilde\rho$ and output the resulting state:
			\begin{equation}
				\Lambda(\tilde\rho)=\Lambda\left(\frac{1}{t!}\sum_{\sigma,\pi\in S_t}\ketbra{\phi_{\sigma(\bmz^{(t)})}}{\phi_{\pi(\bmz^{(t)})}}_A\ketbra{\sigma(\bmz^{(t)})}{\pi(\bmz^{(t)})}_B\right).
			\end{equation}
		\end{enumerate}
		The result of $\Lambda'$ can be written as
		\begin{equation}\begin{split}
				\Lambda'(\bigotimes_{j=1}^{d_B}\phi_j)&\coloneqq \EE{\tilde\rho} \Lambda(\tilde\rho)\\
				&=\Lambda\left(\EE{p_0\sim\nu}\ \sum_{\bmz^{(t)}\in R}\frac{q_0(\bmz^{(t)})}{t!}\sum_{\sigma,\pi\in S_t}\ketbra{\phi_{\sigma(\bmz^{(t)})}}{\phi_{\pi(\bmz^{(t)})}}_A\ketbra{\sigma(\bmz^{(t)})}{\pi(\bmz^{(t)})}_B\right).
		\end{split}\end{equation}
		We now verify that $\Lambda'$ indeed satisfies the claimed condition. 
		\begin{align}
			&\EE{\phi_1,\cdots,\phi_{d_B}\sim \mathrm{Haar}(n_A)}[
			\sum_{l\in[d_B]}\bra{\phi_l}^{\ot 2}\bra l\Lambda'(\bigotimes_{j=1}^{d_B}\ket{\phi_j}\bra{\phi_j})\ket{\phi_l}^{\ot 2}\ket{l}
			]\\
			&=\EE{\phi_1,\cdots,\phi_{d_B}\sim \mathrm{Haar}(n_A)}
			\sum_{l\in[d_B]}\bra{\phi_l}^{\ot 2}\bra{l}
			\Lambda\left(\EE{p_0\sim\nu}\ \sum_{\bmz^{(t)}\in R}\frac{q_0(\bmz^{(t)})}{t!}\sum_{\sigma,\pi\in S_t}\ketbra{\phi_{\sigma(\bmz^{(t)})}}{\phi_{\pi(\bmz^{(t)})}}_A\ketbra{\sigma(\bmz^{(t)})}{\pi(\bmz^{(t)})}_B\right)
			\ket{\phi_l}^{\ot 2}\ket{l}\\
			&=\EE{\phi_1,\cdots,\phi_{d_B}\sim \mathrm{Haar}(n_A)}
			\EE{p_0\sim\nu}\ \sum_{l\in[d_B]}\bra{\phi_l}^{\ot 2}\bra{l} 
			\Lambda\left(\sum_{\bmz^{(t)}\in R}\frac{q_0(\bmz^{(t)})}{t!}\sum_{\sigma,\pi\in S_t}\ketbra{\phi_{\sigma(\bmz^{(t)})}}{\phi_{\pi(\bmz^{(t)})}}_A\ketbra{\sigma(\bmz^{(t)})}{\pi(\bmz^{(t)})}_B\right)
			\ket{\phi_l}^{\ot 2}\ket{l}\\
			&=\EE{\psi\sim \mathrm{Haar}(n)}
			\sum_{l\in[d_B]}\bra{\psi_l}^{\ot 2}\bra{l}
			\Lambda\left(\sum_{\bmz^{(t)}\in R}\frac{q_\psi(\bmz^{(t)})}{t!}\sum_{\sigma,\pi\in S_t}\ketbra{\psi_{\sigma(\bmz^{(t)})}}{\psi_{\pi(\bmz^{(t)})}}_A\ketbra{\sigma(\bmz^{(t)})}{\pi(\bmz^{(t)})}_B\right)
			\ket{\psi_l}^{\ot 2}\ket{l} \label{eq:nu_Haar_random}\\
			&=\EE{\psi\sim \mathrm{Haar}(n)}
			\sum_{l\in[d_B]}\bra{\psi_l}^{\ot 2}\bra{l}
			\sum_{\bmz^{(t)}\in R}\Lambda\left(X_{\bmz^{(t)}}\right)
			\ket{\psi_l}^{\ot 2}\ket{l}\\
			&=\EE{\psi\sim \mathrm{Haar}(n)}
			\sum_{l\in[d_B]}\bra{\psi_l}^{\ot 2}\bra{l}
			\Lambda\left(\rho_\psi^{(t)}\right)
			\ket{\psi_l}^{\ot 2}\ket{l}\\
			&\geq \, 1-\delta.
		\end{align}
		Here, Eq.~\eqref{eq:nu_Haar_random} follows by setting $\ket{\psi}\coloneqq \sum_{\bmz\in[d_B]}\sqrt{p_0(\bmz)}\ket{\phi_\bmz}_A\ket{\bmz}_B$ and using the definition of $\nu$. This completes the proof.
	\end{proof}
	
	Lemma~\ref{lem:symmetric} shows that any channel achieving nontrivial performance on this select-and-clone task for $d_B$ independent Haar-random states can be converted into another channel with the same average performance for cloning a single Haar-random state.
	\begin{lemma}
		\label{lem:symmetric}
		For every $\delta\in(0,1)$, if there exists a channel $\Lambda$ such that
		\begin{equation}\label{eq:symmetric_expectation}
			\EE{\phi_1,\cdots,\phi_{d_B}\sim \mathrm{Haar}(n_A)}[\sum_{i=1}^{d_B}\bra{\phi_i}^{\ot 2}\bra i\Lambda(\bigotimes_{j=1}^{d_B}\phi_j)\ket{\phi_i}^{\ot 2}\ket{i}]\geq \delta,
		\end{equation}
		then there exists a channel $\Lambda'$ such that
		\begin{equation}\label{eq:symmetric_requirement}
			\EE{\phi\sim \mathrm{Haar}(n_A)}[\bra{\phi}^{\ot 2} \Lambda'(\phi)\ket{\phi}^{\ot 2}]\geq \delta.
		\end{equation}
	\end{lemma}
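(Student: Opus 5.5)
The plan is to build $\Lambda'$ not by embedding $\phi$ into one slot, which only recovers $\delta/d_B$, but by summing the per-slot contributions into a single completely positive map and then repairing its trace by a unitary twirl. The key observation is that the objective in Eq.~\eqref{eq:symmetric_expectation} is linear in the channel, and that after the relevant symmetrization the trace-preservation defect disappears.

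\textbf{Step 1 (per-slot maps).} For each $i\in[d_B]$ I define a map on $\mathcal{H}_{d_A}$ by
\begin{equation*}
\Lambda_i(\rho)\coloneqq \bigl(I\otimes\bra{i}\bigr)\,\Lambda\Bigl(\tfrac{I}{d_A}^{\otimes(i-1)}\otimes\rho\otimes\tfrac{I}{d_A}^{\otimes(d_B-i)}\Bigr)\,\bigl(I\otimes\ket{i}\bigr).
\end{equation*}
This is obtained from $\Lambda$ by feeding maximally mixed states into the other slots and projecting the label register onto $\ket{i}$, so it is completely positive. Because $\phi_j\sim\mathrm{Haar}(n_A)$ are independent and $\mathbb{E}_{\phi_j}\phi_j=I/d_A$, linearity of $\Lambda$ gives the identity
\begin{equation*}
\mathbb{E}_{\phi_1,\dots,\phi_{d_B}}\bra{\phi_i}^{\otimes 2}\bra{i}\Lambda\bigl(\textstyle\bigotimes_j\phi_j\bigr)\ket{\phi_i}^{\otimes 2}\ket{i}=\mathbb{E}_{\phi}\bra{\phi}^{\otimes 2}\Lambda_i(\phi)\ket{\phi}^{\otimes 2}.
\end{equation*}
Setting $M\coloneqq\sum_i\Lambda_i$, Eq.~\eqref{eq:symmetric_expectation} therefore becomes $\mathbb{E}_\phi\bra{\phi}^{\otimes2}M(\phi)\ket{\phi}^{\otimes2}\ge\delta$. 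The map $M$ is completely positive, but in general it is not trace preserving. What we do know is its trace on the maximally mixed state: $\operatorname{tr}M(I/d_A)=\sum_i\operatorname{tr}\bigl[\bra{i}\Lambda\bigl((I/d_A)^{\otimes d_B}\bigr)\ket{i}\bigr]=1$.

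\textbf{Step 2 (twirl).} Next I set
\begin{equation*}
\bar M(\rho)\coloneqq\mathbb{E}_{U\sim\mathrm{Haar}}\,U^{\otimes 2}\,M(U^\dagger\rho U)\,U^{\dagger\otimes 2}.
\end{equation*}
This map is still completely positive, and the fidelity objective is unchanged because Haar measure is invariant under $\phi\mapsto U\phi U^\dagger$. The map $\bar M$ is covariant, so its adjoint satisfies $\bar M^\dagger(I)=U\bar M^\dagger(I)U^\dagger$ for every $U$. By Schur's lemma this forces $\bar M^\dagger(I)=c\,I$. Evaluating on $I/d_A$ fixes $c=\operatorname{tr}\bar M(I/d_A)=\operatorname{tr}M(I/d_A)=1$. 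Hence $\bar M$ is trace preserving, and I take $\Lambda'\coloneqq\bar M$.

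Operationally, $\Lambda'$ does the following: sample $U$, apply $U^\dagger$, then realize $M$ as a channel. The last part is possible only after twirling. Concretely, $\bar M$ is a CPTP map, so it has a Stinespring dilation directly; no explicit circuit for the non-trace-preserving $M$ is needed. With this choice, $\Lambda'$ satisfies Eq.~\eqref{eq:symmetric_requirement} with the same $\delta$.

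\textbf{Main obstacle.} The delicate point is the trace-preservation argument in Step 2. One must check three things: that the twirl acts consistently on the input space $\mathcal{H}_{d_A}$ and the output space $\mathcal{H}_{d_A}^{\otimes2}$; that no label register survives in the output; and that $\bar M^\dagger(I)$ commutes with every $U$ on the single-copy input space. It is the last of these that makes Schur's lemma give a scalar multiple of the identity. I would verify this commutation directly from the definition of the adjoint, $\operatorname{tr}[\bar M^\dagger(I)\rho]=\mathbb{E}_U\operatorname{tr}[M(U^\dagger\rho U)]$, which is manifestly invariant under $\rho\mapsto V\rho V^\dagger$.
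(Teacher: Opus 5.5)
Your proof is correct. It also produces the same channel as the paper, by a shorter route.

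\textbf{The paper's route.} The paper first symmetrizes $\Lambda$ over the permutation group $S_{d_B}$ and over $d_B$ independent local Haar unitaries. It then:
\begin{itemize}
\item uses the local-unitary invariance of the symmetrized $\bar\Lambda$ to replace the unselected slots by $I/d_A$;
\item uses permutation invariance to show that all slot maps $\Psi_i$ coincide;
\item takes $\Lambda'=d_B\Psi_1$ and checks trace preservation through the covariance of $\Psi_1$.
\end{itemize}

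\textbf{Your route.} You skip the permutation symmetrization entirely by summing the unsymmetrized slot maps into $M=\sum_i\Lambda_i$. You get the $I/d_A$ replacement directly from the independence of the $\phi_j$ and $\mathbb{E}\phi_j=I/d_A$. You then need only one global twirl. Unwinding the paper's definitions, $d_B\Psi_1$ equals your $\bar M$, up to the irrelevant choice $U\leftrightarrow U^\dagger$. So the two arguments rest on the same mechanism: a CP map normalized on $I/d_A$, made trace preserving by twirling. Yours avoids the bookkeeping of permutation conventions and the $d_B$-fold twirl.

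\textbf{Trace preservation.} Your Schur's-lemma step is sound, but it is more direct to write $\tr\bar M(\rho)=\tr M\bigl(\mathbb{E}_U U^\dagger\rho U\bigr)=\tr(\rho)\,\tr M(I/d_A)=\tr\rho$.

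\textbf{One caution.} Your ``operational'' description (sample $U$, then apply $M$) is not a convex mixture of channels. For fixed $U$, the map $\rho\mapsto U^{\otimes2}M(U^\dagger\rho U)U^{\dagger\otimes2}$ is not trace preserving; only the average is. The existence argument via a Stinespring dilation of the CPTP map $\bar M$, which you fall back on, is the right justification.
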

	
	\begin{proof}
		Let $S_{d_B}$ be the permutation group over $[d_B]$. For $\pi\in S_{d_B}$, define the permutation channels
		\begin{equation}\begin{split}
				\mathcal V_\pi:\bigotimes_{i=1}^{d_B}\phi_i
				&\mapsto \bigotimes_{i=1}^{d_B}\phi_{\pi^{-1}(i)},\\
				\mathcal W_\pi:\ketbra{i}{i} &\mapsto \ketbra{\pi(i)}{\pi(i)}.
		\end{split}\end{equation}
		
		Given a channel $\Lambda$, define
		\begin{equation}
			\Gamma_\pi(\Lambda) := \mathcal W_\pi \circ \Lambda \circ \mathcal V_\pi .
		\end{equation}
		By a change of variables $\phi_i' \coloneqq \phi_{\pi^{-1}(i)}$, we see that the expectation in Eq.~\eqref{eq:symmetric_expectation} is invariant when replacing $\Lambda$ with $\Gamma_\pi(\Lambda)$.
		
		Let $U(\mathcal{H})$ be the set of unitaries acting on $\mathcal{H}$. For a sequence of unitary operators ${\rm U}=(U_1,\ldots,U_{d_B})\in U({\cal H}_A)^{d_B}$, denote by $\mathcal U_i:\sigma\mapsto U_i\sigma U_i^\dagger$ the unitary channels induced by $U_i$, and the adjoint unitary channel $\mathcal U_i^\dagger:\sigma\mapsto U_i^\dagger\sigma U_i$ induced by $U_i^\dagger$.
		Denote $\mathcal E_i:\sigma\mapsto \ketbra{i}{i}\sigma\ketbra{i}{i}$.
		Define the channel
		\begin{equation}
			\Theta_{\rm U}(\Lambda)
			:= \biggl(\sum_{i=1}^{d_B} \mathcal U_i^\dagger \otimes \mathcal U_i^\dagger \otimes \mathcal E_i\biggr)
			\circ \Lambda \circ \bigotimes_{i=1}^{d_B} \mathcal U_i.
		\end{equation}
		Similarly, by defining $\ket{\phi_i'} \coloneqq U_i\ket{\phi_i}$,  we see that the expectation in Eq.~\eqref{eq:symmetric_expectation} remains invariant when replacing $\Lambda$ with $\Theta_{\rm U}(\Lambda)$.
		
		Averaging over both symmetries, define
		\begin{equation}
			\bar \Lambda:=\frac{1}{d_B!}\sum_{\pi\in S_{d_B}}\EE{\mathrm U\sim \Haar(n_A)^{d_B}} \Theta_{\rm U}(\Gamma_\pi(\Lambda)).
		\end{equation}
		Then $\bar\Lambda$ satisfies $\Gamma_\pi(\bar\Lambda)=\bar\Lambda$ and $\Theta_{\rm U}(\bar\Lambda)=\bar\Lambda$ for all $\pi,{\rm U}$, while preserving the expectation in Eq.~\eqref{eq:symmetric_expectation}. Concretely,
		\begin{equation}\begin{split}
				\label{eq:twirling}
				&\EE{\phi_1,\cdots,\phi_{d_B}\sim \mathrm{Haar}(n_A)}[\sum_{i=1}^{d_B}\bra{\phi_i}^{\ot 2}\bra i\bar\Lambda(\bigotimes_{j=1}^{d_B}\phi_j)\ket{\phi_i}^{\ot 2}\ket{i}]
				=\EE{\phi_1,\cdots,\phi_{d_B}\sim \mathrm{Haar}(n_A)}[\sum_{i=1}^{d_B}\bra{\phi_i}^{\ot 2}\bra i\Lambda(\bigotimes_{j=1}^{d_B}\phi_j)\ket{\phi_i}^{\ot 2}\ket{i}].
		\end{split}\end{equation}
		
		For $i\in[d_B]$, define a map
		\begin{equation}
			\label{eq:twirled}
			\Psi_i(\sigma) :=
			\tr_3\Bigl[(I_A\otimes I_A \otimes \ketbra{i})\;
			\bar\Lambda\bigl((I_A/d_A)^{\otimes (i-1)}\otimes \sigma \otimes (I_A/d_A)^{\otimes (d_B-i)}\bigr)\Bigr].
		\end{equation}
		Each $\Psi_i$ is completely positive, and the permutation invariance of $\bar\Lambda$ implies $\Psi_i=\Psi_j$ for all $i,j\in[d_B]$.
		
		Let $\mu^{(i)}$ be a distribution over $U(\mathcal H_A)^{d_B}$, where each $U_j$ independently forms $\Haar(n_A)$ for $j\neq i$, and $U_i$ is always the identity operator $I_A$. Then, we have
		\begin{align}
			&\EE{\phi_1,\cdots,\phi_{d_B}\sim \mathrm{Haar}(n_A)}[\sum_{i=1}^{d_B}\bra{\phi_i}^{\ot 2}\bra i\bar \Lambda(\bigotimes_{j=1}^{d_B}\phi_j)\ket{\phi_i}^{\ot 2}\ket{i}]\\
			&=\EE{\phi_1,\cdots,\phi_{d_B}\sim \mathrm{Haar}(n_A)}[\sum_{i=1}^{d_B}\bra{\phi_i}^{\ot 2}\bra i \EE{\mathrm{U}\sim \mu^{(i)}}[\Theta_{\rm U}(\bar\Lambda)(\bigotimes_{j=1}^{d_B}\phi_j)]\ket{\phi_i}^{\ot 2}\ket{i}] \label{eq:unitary_invariable_lambda}\\
			&=\EE{\phi_1,\cdots,\phi_{d_B}\sim \mathrm{Haar}(n_A)}[\sum_{i=1}^{d_B}\bra{\phi_i}^{\ot 2}\bra i \EE{\mathrm{U}\sim \mu^{(i)}}[\biggl(\sum_{j=1}^{d_B} \mathcal U_j^\dagger \otimes \mathcal U_j^\dagger \otimes \mathcal E_j\biggr)
			\circ \bar\Lambda (\bigotimes_{j=1}^{d_B}U_j\phi_jU_j^\dagger)]\ket{\phi_i}^{\ot 2}\ket{i}]\\   
			&=\EE{\phi_1,\cdots,\phi_{d_B}\sim \mathrm{Haar}(n_A)}[\sum_{i=1}^{d_B}\bra{\phi_i}^{\ot 2}\bra i \EE{\mathrm{U}\sim \mu^{(i)}}[\bar\Lambda (\bigotimes_{j=1}^{d_B}U_j\phi_jU_j^\dagger)]\ket{\phi_i}^{\ot 2}\ket{i}]\label{eq:mu_i_twirling}\\   
			&=\EE{\phi_1,\cdots,\phi_{d_B}\sim \mathrm{Haar}(n_A)}[\sum_{i=1}^{d_B}\bra{\phi_i}^{\ot 2}\bra i \bar\Lambda((I_A/d_A)^{\otimes (i-1)}\otimes \phi_i\otimes (I_A/d_A)^{\otimes (d_B-i)})]\ket{\phi_i}^{\ot 2}\ket{i}]\label{eq:mu_i_twirling2}\\
			&=\EE{\phi_1,\cdots,\phi_{d_B}\sim \mathrm{Haar}(n_A)}[\sum_{i=1}^{d_B}\bra{\phi_i}^{\ot 2}\Psi_i(\phi_i)\ket{\phi_i}^{\ot 2}] \\
			&= d_B \EE{\phi_1\sim \mathrm{Haar}(n_A)}[\bra{\phi_1}^{\ot 2}\Psi_1(\phi_1)\ket{\phi_1}^{\ot 2}] \label{eq:equivalence_Psi}\\
			&\geq \delta.
		\end{align}
		Here, Eq.~\eqref{eq:unitary_invariable_lambda} comes from the unitary invariance of $\bar\Lambda$. Eq.~\eqref{eq:mu_i_twirling} comes from $U_i=I_A$ and $\bra i\mathcal E_j(\rho)\ket i\equiv 0$ for every $\rho$ and $j\neq i$. Eq.~\eqref{eq:mu_i_twirling2} comes from $\bE_{U_j\sim \Haar(n_A)}[U_j\phi_jU_j^{\dagger}]=I_A/d_A$ for every $j\neq i$. Eq~\eqref{eq:equivalence_Psi} comes from the equivalence of all $\Psi_i$. Therefore, 
		\begin{equation}
			\EE{\phi\sim \mathrm{Haar}(n_A)}[\bra{\phi}^{\ot 2}\Psi_1(\phi)\ket{\phi}^{\ot 2}]\geq \delta/d_B.
		\end{equation}
		
		We construct $\Lambda'=d_B\cdot\Psi_1$. Then, $\Lambda'$ satisfies Eq.~\eqref{eq:symmetric_requirement}. 
		The remaining task is to prove that $\Lambda'$ is indeed a channel. The complete positivity of $\Lambda'$ follows from the complete positivity of $\Psi_1$. To check trace preservation, note that unitary invariance of $\bar \Lambda$ implies
		\begin{equation}
			\Psi_1(\mathcal{U}(\sigma)) = (\mathcal{U} \otimes \mathcal{U}) (\Psi_1(\sigma)), \quad \text{for every } U\in U(\mathcal{H}_A).
		\end{equation}
		Hence for every state $\sigma\in {\cal D}({\cal H}_A)$,
		\begin{equation}\label{eq:trace_preseve}\begin{split}
				\tr\left[\Psi_1(\sigma)\right]
				&=\tr\left[\EE{U\sim \Haar(n_A)}\ (U^{\dagger}\otimes U^{\dagger})\ \Psi_1(U\sigma U^{\dagger})\ (U\otimes U) \right]\\
				&=\EE{U\sim \Haar(n_A)}\ \tr\left[(U^{\dagger}\otimes U^{\dagger})\ \Psi_1(U\sigma U^{\dagger})\ (U\otimes U)\right]\\
				&=\EE{U\sim \Haar(n_A)}\ \tr\left[\Psi_1(U\sigma U^{\dagger})\right]\\
				&=\tr\left[\Psi_1\left(\EE{U\sim \Haar(n_A)}\ U\sigma U^{\dagger}\right)\right]\\
				&=\tr\left[\Psi_1(I_A/d_A)\right].
		\end{split}\end{equation}
		By equivalence of all $\Psi_i$ and the definition of $\Psi_i$,
		\begin{equation}\begin{split}
				\tr\left[\Psi_1(I_A/d_A)\right]
				&=\frac{1}{d_B}\tr\left[\sum_{i=1}^{d_B}\Psi_i(I_A/d_A)\right]\\
				&=\frac{1}{d_B}\tr\left[\bar\Lambda\bigl((I_A/d_A)^{\otimes d_B}\bigr)\right]\\
				&=\frac{1}{d_B},
		\end{split}\end{equation}
		Therefore, 
		\begin{equation}
			\tr\left[\Lambda'(\sigma)\right] = d_B \tr\left[\Psi_1(I_A/d_A)\right] = 1, \quad \text{for every } \sigma\in {\cal D}({\cal H}_A).
		\end{equation}
		This completes the proof.
	\end{proof}
	
	\subsection{Hardness of cloning a random state}
	Finally, we establish that any quantum channel cannot clone a random state with high precision. This is stated in the following lemma:
	\begin{lemma}\label{lem:no_cloning}
		For every channel $\Lambda$,
		\begin{equation}
			\EE{\psi\sim \mathrm{Haar}(n_A)}[\bra{\psi}^{\ot 2} \Lambda(\psi)\ket{\psi}^{\ot 2}]\leq \frac{2+4d_A}{(d_A+2)(d_A+1)}.
		\end{equation}
	\end{lemma}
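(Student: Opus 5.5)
We will bound the optimal average cloning fidelity by rewriting it as a linear functional of the Choi operator of $\Lambda$ and then using the symmetry of the Haar measure. We write $d=d_A$ and let $J_\Lambda=\sum_{i,j}\ketbra{i}{j}\otimes\Lambda(\ketbra{i}{j})$ be the Choi operator on $\mathcal H_{\rm in}\otimes\mathcal H_A^{\otimes 2}$. Using $\Lambda(\psi)=\tr_{\rm in}[(\psi^{T}\otimes I)J_\Lambda]$, the figure of merit becomes
\begin{equation}
F(\Lambda)=\tr\Bigl[J_\Lambda\,\EE{\psi\sim\Haar(n_A)}\bigl(\bar\psi\otimes\psi\otimes\psi\bigr)\Bigr].
\end{equation}
Next we partially transpose on the input factor, i.e.\ $F=\tr[J_\Lambda^{T_{\rm in}}\,\EE{\psi}\psi^{\otimes 3}]$. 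Fact~\ref{fact:haar} with $k=3$ then gives $\EE{\psi}\psi^{\otimes 3}=P_{\mathrm{sym}}^{(d,3)}/\binom{d+2}{3}$. Hence
\begin{equation}
F(\Lambda)=\binom{d+2}{3}^{-1}\tr\bigl[J_\Lambda\,(P_{\mathrm{sym}}^{(d,3)})^{T_{\rm in}}\bigr].
\end{equation}

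The remaining task is to maximize $\tr[J\,M]$ over $J\ge0$ with $\tr_{\rm out}J=I_{\rm in}$, where $M=(P_{\mathrm{sym}}^{(d,3)})^{T_{\rm in}}$. The crude bound $\tr[JM]\le \|M\|_\infty \tr J=d\|M\|_\infty$ is the first thing we would try. Writing $6P_{\mathrm{sym}}=\sum_{\pi\in S_3}V_\pi$ and taking the partial transpose, the two permutations fixing the input slot ($I$ and the swap $\mathbb S_{23}$) are unchanged. The four permutations moving the input slot become operators built from the unnormalized maximally entangled vector $\ket{\Omega}=\sum_i\ket{ii}$ between the input and one output, composed possibly with $\mathbb S_{23}$. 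These have norm $d$ on their support. This gives $M=\tfrac16[(I+\mathbb S_{23})+(\text{terms involving }\Omega)]$.

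We can restrict to the symmetric subspace of the two outputs, since projecting the output onto it can only increase the fidelity. There $I+\mathbb S_{23}=2P_{\mathrm{sym}}^{(d,2)}$. The $\Omega$-terms combine into $(I\otimes P_{\mathrm{sym}}^{(d,2)})\,(\ketbra{\Omega}_{12}+\ketbra{\Omega}_{13})\,(I\otimes P_{\mathrm{sym}}^{(d,2)})$ up to the appropriate factors. Its largest eigenvalue is computed by restricting to the vector $(I\otimes P_{\mathrm{sym}})\ket{\Omega}_{12}\ket{\phi}$ averaged appropriately, which is the Werner–Keyl-type computation giving the optimal $1\to2$ universal cloning value. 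Evaluating $d\,\|M\|_\infty/\binom{d+2}{3}$ should reproduce exactly $\frac{2+4d}{(d+2)(d+1)}$, since this matches $\frac{2}{d+1}\cdot\frac{(2d+1)}{(d+2)}$.

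The main obstacle is the exact eigenvalue computation for $M$ restricted to $\mathcal H\otimes\mathrm{Sym}^2$. If a direct calculation gets messy, we would use $U\otimes U\otimes U$ covariance (more precisely $\bar U\otimes U\otimes U$ before transposing). This lets us assume $J$ is invariant, so that $J$ decomposes into the few irreducible blocks of $\bar U\otimes U^{\otimes2}$ on $\mathcal H\otimes\mathrm{Sym}^2$. Only two such blocks appear: the one containing $\Omega\otimes\mathcal H$ and its complement. We would then solve the two-parameter linear program under the trace constraint $\tr_{\rm out}J=I$. In this formulation the optimum sits entirely on the block containing $\Omega$. Alternatively, we can bypass the Choi computation if the trace-norm bound is loose: we would then invoke the known optimal universal-cloner fidelity $\frac{2}{d+1}\cdot\frac{d+\frac12}{\,d+2\,}\cdot 2$ directly, and verify that it equals the stated expression.
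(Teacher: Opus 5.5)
Your route is the paper's own: the Choi operator, the partial transpose on the input, Fact~\ref{fact:haar} with $k=3$, and the H\"older bound $\tr[JM]\le\|J\|_1\|M\|_\infty=d\|M\|_\infty$. The gap is that you stop at what you call the main obstacle, and you misdescribe it. No exact eigenvalue computation is needed. The paper applies the triangle inequality to the six partially transposed permutation operators: $\bigl\|\sum_{\pi\in S_3}V_\pi^{T_1}\bigr\|_\infty\le 1+1+4d$, from the two terms of norm $1$ and the four $\Omega$-type terms of norm $d$ that you already identified. This gives $d\cdot\frac{2+4d}{(d+2)(d+1)d}=\frac{2+4d}{(d+2)(d+1)}$, which is exactly the claim. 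The stated right-hand side is therefore a loose bound, not an optimum.

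Your prediction that the exact computation returns $\frac{2+4d}{(d+2)(d+1)}$, and that this is the optimal $1\to2$ universal cloning fidelity, is false. Set $\ket{a_k}=\ket{\Omega}_{12}\ket{k}_3$ and $\ket{b_k}=\ket{\Omega}_{13}\ket{k}_2$. Then
$\sum_{\pi}V_\pi^{T_1}=2\,I\otimes P^{(d,2)}_{\mathrm{sym}}+\sum_k(\ket{a_k}+\ket{b_k})(\bra{a_k}+\bra{b_k})$.
Since $\langle a_k|b_l\rangle=\delta_{kl}$, the vectors $\ket{a_k}+\ket{b_k}$ are mutually orthogonal, symmetric under $2\leftrightarrow3$, and have squared norm $2(d+1)$. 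So the top eigenvalue is $2d+4$, and the bound becomes $d\cdot\frac{2d+4}{(d+2)(d+1)d}=\frac{2}{d+1}$.

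This is Werner's optimal global fidelity $d/\binom{d+1}{2}$. Indeed, the Choi operator of $\rho\mapsto\frac{2}{d+1}P_{\mathrm{sym}}(\rho\otimes I)P_{\mathrm{sym}}$ is exactly the projector onto that top eigenspace. It is strictly smaller than $\frac{2+4d}{(d+2)(d+1)}$ for $d\ge2$. The ``known optimal fidelity'' quoted in your fallback is misquoted for the same reason.

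Because $\frac{2}{d+1}\le\frac{2+4d}{(d+2)(d+1)}$, finishing your eigenvalue computation, or citing Werner's result correctly, would still prove the lemma (in fact a stronger version), so the argument is salvageable. As written, though, the decisive estimate is never carried out and its expected outcome is wrong. The one-line triangle-inequality step is what closes it.
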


	\begin{proof}
		Let $J(\Lambda)$ be the Choi-Jamiołkowski representation of the channel $\Lambda$ defined as
		\begin{equation}\begin{split}
				J(\Lambda)\coloneqq \sum_{i,j\in [d_A]}(\ket{i}\bra{j})_A\otimes (\Lambda(\ket{i}\bra{j}))_C.
		\end{split}\end{equation}
		For every $\rho$ on system $A$, we have
		\begin{equation}\begin{split}
				\Lambda(\rho)&=\sum_{i,j\in[d_A]}\Lambda(\ket{i}\bra{i}\rho\ket{j}\bra{j})\\
				&=\sum_{i,j\in[d_A]}\bra{i}\rho\ket{j}\tr_A(J(\Lambda)(\ket{j}\bra{i}\otimes I_C))\\
				&=\tr_A[J(\Lambda)(\rho^T\otimes I_C)].
		\end{split}\end{equation}
		For every $\psi\in {\cal H}_A$, we then have
		\begin{equation}\begin{split}
				\bra{\psi}^{\otimes 2}\Lambda(\psi)\ket{\psi}^{\otimes 2}
				&=\bra{\psi}^{\otimes 2}\tr_A[J(\Lambda)(\psi^T\otimes I_C)]\ket{\psi}^{\otimes 2}\\
				&=\tr[J(\Lambda)(\psi^T\otimes \psi^{\otimes 2})].
		\end{split}\end{equation}
		
		Define $M \coloneqq \bE_{\psi\sim \Haar(n_A)} \psi^{T} \otimes \psi^{\otimes 2}$. Then the quantity we aim to upper bound is
		\begin{equation}\begin{split}
				\mathbb{E}_{\psi \sim \Haar(n_A)}\!\left[\bra{\psi}^{\otimes 2}\Lambda(\psi)\ket{\psi}^{\otimes 2}\right]
				&= \mathbb{E}_{\psi \sim \Haar(n_A)} \tr\!\left(J(\Lambda)\bigl(\psi^{T} \otimes \psi^{\otimes 2}\bigr)\right)\\
				&=\tr(J(\Lambda) M)\\
				&\leq \|J(\Lambda)\|_1\cdot \|M\|_\infty\\
				&=d_A\cdot \|M\|_\infty.
		\end{split}\end{equation}
		Denote $T_1$ as the partial transpose acting on the first subsystem only. By Fact \ref{fact:haar}, we obtain
		\begin{equation}\begin{split}
				M &=\mathbb{E}_{\psi \sim \Haar(n_A)} \left[\psi^{\otimes 3} \right]^{T_1}\\
				&=\left[\frac{P_{\text{sym}}^{(d_A,3)}}{\tr(P_{\text{sym}}^{(d_A,3)})}\right]^{T_1}\\
				&=\frac{1}{{\binom{d_A+2}{3}}}\left(P_{\text{sym}}^{(d_A,3)}\right)^{T_1}.
		\end{split}\end{equation}
		Thus, we have
		\begin{equation}\begin{split}
				\|M\|_\infty&=\frac{1}{{\binom{d_A+2}{3}}}\left\|\left(P_{\text{sym}}^{(d_A,3)}\right)^{T_1}\right\|_\infty\\
				&\leq\frac{1}{(d_A+2)(d_A+1)d_A}\sum_{\pi\in S_3}\|V_\pi^{T_1}\|_\infty.
		\end{split}\end{equation}
		where $V_\pi$ is the permutation unitary defined by
		\begin{equation}\begin{split}
				V_\pi(\ket{\psi_1}\ket{\psi_2}\ket{\psi_3})=\ket{\psi_{\pi^{-1}(1)}}\ket{\psi_{\pi^{-1}(2)}}\ket{\psi_{\pi^{-1}(3)}}
		\end{split}\end{equation}
		for every $\ket{\psi_1},\ket{\psi_2},\ket{\psi_3}\in {\cal H}_A$. It remains to bound $\|V_\pi\|_\infty$ for each $\pi\in S_3$.
		
		For the case when $\pi(1)=1$, $V_\pi$ can be written as $V_\pi=I_1\otimes W_{23}$ for some operator $W_{23}$. Hence,
		\begin{equation}\begin{split}
				\|V_\pi^{T_1}\|_\infty=\|I_1^T\otimes W_{23}\|_\infty=\|I_1\otimes W_{23}\|_\infty=\|V_\pi\|_\infty=1.
		\end{split}\end{equation}
		
		For $\pi=(1\ 2)$, we have
		\begin{equation}\begin{split}
				V_{(1\ 2)}^{T_1}&=\left(\sum_{i,j\in [d_A]}\ket{i}_1\ket{j}_2\bra{j}_1\bra{i}_2\right)^{T_1}\otimes I_3\\
				&=\left(\sum_{i,j\in [d_A]}\ket{j}_1\ket{j}_2\bra{i}_1\bra{i}_2\right)\otimes I_3\\
				&=(\ketbra{\Phi})_{12}\otimes I_3.
		\end{split}\end{equation}
		Here, $\ket{\Phi}=\sum_{i\in[d_A]} \ket i\ket i$ is the unnormalized maximally entangled state. Therefore, 
		\begin{equation}\begin{split}
				\|V_{(1\ 2)}^{T_1}\|_\infty=\|\ketbra{\Phi}\|_\infty=d_A.
		\end{split}\end{equation}
		Similarly, we also have $\|V_{(1\ 3)}^{T_1}\|_\infty=d_A$.
		
		For $\pi=(1\ 2\ 3)$, we have
		\begin{equation}\begin{split}
				V_{(1\ 2\ 3)}^{T_1}&=\left(\sum_{i,j,t\in[d_A]}\ket{i}_1\ket{j}_2\ket{t}_3\bra{t}_1\bra{i}_2\bra{j}_3\right)^{T_1}\\
				&=\sum_{i,j,t\in[d_A]}\ket{t}_1\ket{j}_2\ket{t}_3\bra{i}_1\bra{i}_2\bra{j}_3\\
				&=V_{(2\  3)}\sum_{i,j,t\in[d_A]}\ket{t}_1\ket{t}_2\ket{j}_3\bra{i}_1\bra{i}_2\bra{j}_3\\
				&=V_{(2\ 3)}(\ketbra{\Phi})_{12}\otimes I_3.
		\end{split}\end{equation}
		Hence, 
		\begin{equation}\begin{split}
				\norm{V_{(1\ 2\ 3)}^{T_1}}_\infty=\norm{\ketbra{\Phi}}_\infty=d_A.
		\end{split}\end{equation}
		Similarly, we also have $\|V_{(1\ 3\ 2)}^{T_1}\|_\infty=d_A$.
		
		Combining the above bounds, we have
		\begin{equation}\begin{split}
				\EE{\phi\sim \mathrm{Haar}(n_A)}[\bra{\phi}^{\ot 2} \Lambda(\phi)\ket{\phi}^{\ot 2}]
				&\leq d_A\cdot \frac{1}{(d_A+2)(d_A+1)d_A}\sum_{\pi\in S_3}\|V_\pi^{T_1}\|_\infty\\
				&=\frac{2+4d_A}{(d_A+2)(d_A+1)}.\\
		\end{split}\end{equation}
	\end{proof}
	Combining the above lemmas, we can now proceed to prove Theorem \ref{thm:average_case_state_detailed}.
	\begin{proof}[Proof of Theorem \ref{thm:average_case_state_detailed}]
		For any channel $\Lambda$ such that
		\begin{equation}
			\EE{\psi\sim\Haar(n)}\TV{\Lambda(\psi^{\ot t})}{\sigma^{(2)}_{\psi}}\leq \delta,
		\end{equation}
		we have
		\begin{equation}\begin{split}
				\EE{\psi\sim\Haar(n)}\TV{\Lambda(\Phi^{(t)}_\psi)}{\sigma^{(2)}_{\psi}}&\leq \EE{\psi\sim\Haar(n)}\TV{\Lambda(\Phi^{(t)}_\psi)}{\Lambda(\psi^{\otimes t})}+\TV{\Lambda(\psi^{\otimes t})}{\sigma^{(2)}_{\psi}}\\
				&\leq\EE{\psi\sim\Haar(n)}\TV{\Phi^{(t)}_\psi}{\psi^{\ot t}}+\delta\\
				&\leq \delta+\sqrt{\frac{t(t-1)}{2}\frac{d_A+1}{d_Ad_B+1}}.
		\end{split}\end{equation}
		where the inequalities follow from the triangle inequality, data processing, and Lemma \ref{lem:collision}, respectively. 
		
		Next, using Lemmas~\ref{lem:dephasing}, \ref{lem:reduction}, and ~\ref{lem:symmetric}, there exists a channel $\Lambda'$ such that
		\begin{equation}
			\EE{\phi\sim \mathrm{Haar}(n_A)}\left[\bra{\phi}^{\ot 2} \Lambda'(\phi)\ket{\phi}^{\ot 2}\right]
			\geq 1-\delta - \sqrt{\frac{t(t-1)}{2}\frac{d_A+1}{d_Ad_B+1}}.
		\end{equation}
		Finally, using Lemma \ref{lem:no_cloning}, we have
		\begin{equation}
			1-\delta - \sqrt{\frac{t(t-1)}{2}\frac{d_A+1}{d_Ad_B+1}}
			\leq \frac{2+4d_A}{(d_A+2)(d_A+1)},
		\end{equation}
		which implies that the average error $\delta$ must satisfy
		\begin{equation}
			\delta \geq 1-\sqrt{\frac{t(t-1)}{2}\frac{d_A+1}{d_Ad_B+1}} -\frac{2+4d_A}{(d_A+2)(d_A+1)}.
		\end{equation}
	\end{proof}
	
	\subsection{Further generalizations}\label{subsec:further_remarks}
	Finally, we remark that while we focus on ensembles induced by Haar-random purifications, our proof techniques and results readily extend to other ensemble distributions. We discuss two such generalizations below.
	
	\begin{enumerate}
		\item \emph{Distribution $p(\bm z)$ over projected states.} 
		
		In our main result, $p(\bm z)$ is induced by measuring a Haar-random state, but the same argument applies more generally to any distribution $\mu$ with small collision probability.
		
		Specifically, the distribution of $p(\bm z)$ enters our proof in only two places.
		(1) In Lemma~\ref{lem:reduction}, the constructed channel $\Lambda'$ simulates this distribution internally. Thus, one can replace the distribution of $p_0$ generated by $\Lambda'$ with $\mu$.
		(2) In Lemma~\ref{lem:collision}, if we sample a random distribution $p \sim \mu$ instead of the Haar-induced distribution $p_\psi$, then the collision term in Eq.~\eqref{eq:collision_union} is replaced by
		\begin{equation}
			C_{\mu} \coloneqq \EE{p\sim \mu}\sum_{\bmz} p(\bmz)^2.
		\end{equation}
		Therefore, the same proof gives
		\begin{equation}
			\mathbb{E}_{\psi}\ \TV{\Lambda(\psi^{\otimes t})}{\sigma_\psi^{(2)}} \ge 1 - \sqrt{\frac{t(t-1)}{2}C_{\mu}} - \frac{2+4d_A}{(d_A+2)(d_A+1)},
		\end{equation}
		where $\psi$ is an unknown random purification of the ensemble.
		
		\item
		\emph{Distribution of each projected state.}
		
		In our proof, each projected state $\psi_{\bm z}$ is Haar-randomly distributed. The same argument readily extends to more general distributions in which the projected states $\{\psi_{\bmz}\}$ are independently drawn from some ensemble $\mathcal{E}$, provided that $\mathcal{E}$ is invariant under some unitary $1$-design and that cloning an unknown state drawn from $\mathcal{E}$ is hard on average.
		
		More specifically, the distribution of $\psi_{\bm z}$ is used in two places.
		(1) In Lemma~\ref{lem:symmetric}, Eq.~(\ref{eq:twirling}) requires the distribution of $\psi_{\bm z}$ to be invariant under the twirling operation used in the proof. There, we apply Haar twirling so that the twirled states in Eq.~(\ref{eq:mu_i_twirling2}) and Eq.~(\ref{eq:trace_preseve}) become the maximally mixed state $I_A/d_A$. In fact, any unitary $1$-design already suffices for this purpose. Hence, as long as the distribution $\mathcal{E}$ of $\psi_{\bm z}$ is invariant under some unitary $1$-design, Lemma~\ref{lem:symmetric} continues to apply.
		(2) In Lemma~\ref{lem:no_cloning}, we require that cloning an unknown state $\psi_{\bm z}$ drawn from $\mathcal{E}$ is hard on average. Denote 
		\begin{equation}
			F_{\cE}^{\star} := \sup_{\Lambda} \EE{\phi\sim\mathcal E} \left[ \bra{\phi}^{\otimes 2} \Lambda(\phi) \ket{\phi}^{\otimes 2} \right].
		\end{equation}
		Combining these generalizations, we obtain
		\begin{equation}
			\mathbb{E}_{\psi}\ \TV{\Lambda(\psi^{\otimes t})}{\sigma_\psi^{(2)}} \ge 1 - \sqrt{\frac{t(t-1)}{2}C_{\mu}} - F_{\cE}^*,
		\end{equation}
		where $\psi$ is an unknown random purification of the ensemble $\{p(\bmz), \psi_{\bmz}\}$.
	\end{enumerate}
	
	\suppnote{Hardness of estimating nonlinear properties} \label{app:proof_universal}
	
	We present a detailed theorem that gives the explicit tradeoff between sample complexity and measurements for the estimation task, corresponding to the second part of Theorem~\ref{thm:summary}.
	
	\begin{theorem}[No-go on estimating nonlinear properties of quantum ensembles]\label{thm:universal_property}
		Let $0\leq\varepsilon< \tfrac{1}{20}$ and $0\leq\delta<\tfrac1{100}$ be fixed constants, and $n_A \ge 1$. Suppose there exists a quantum algorithm $\mathcal{A}$ such that given classical descriptions of any observable $O$ on $\mathcal{H}_A^{\otimes 2}$ with $\norm{O}_{\infty} \le 1$, and $t$ copies of an unknown input state $\ket{\psi}$, it estimates $o_{\psi}^{(2)}$ such that $\abs{\mathcal{A}\bigl(\ket{\psi}^{\otimes t}, O\bigr) - o_{\psi}^{(2)}} < \varepsilon$
		with probability at least $1-\delta$. Then, the number of copies required is $t = \Omega(2^{n_B / 4})$.
	\end{theorem}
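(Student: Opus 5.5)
We reduce to a distinguishing game between two ensembles of global $n$-qubit states that are statistically indistinguishable with few copies, yet whose projected ensembles differ by a constant in $o_\psi^{(2)}$ for a suitable fixed observable $O$. The two ensembles are:
\begin{itemize}
\item $\mathcal{D}_0$: Haar-random $\ket{\psi}\sim\Haar(n)$;
\item $\mathcal{D}_1$: random subset states $\ket{\psi_S}=|S|^{-1/2}\sum_{x\in S}\ket{x}$, where $S\subseteq\{0,1\}^n$ is uniformly random of size $k$.
\end{itemize}
The choice of $k$ is made below. We invoke the known bound (from the subset-state literature cited in the main text) that for $t$ copies,
\begin{equation}
\norm{\EE{S}\psi_S^{\otimes t}-\EE{\psi\sim\Haar(n)}\psi^{\otimes t}}_1 = O\!\left(\frac{t^2}{k}+\frac{tk}{2^n}\right).
\end{equation}

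\textbf{Choice of observable.} Take the two-copy swap test on a single qubit of $A$, i.e.\ $O=\mathbb{S}_{L_1L_2}\otimes I$ with $|L|=1$, or simply $O=\mathbb{S}_{A_1A_2}$ composed with a projector. Any bounded $O$ separating the two cases works; the purity of a one-qubit marginal is the cleanest.

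\textbf{The $\mathcal{D}_1$ case.} Write $x=(a,\bmz)$. The projected state on outcome $\bmz$ is the uniform superposition over $\{a:(a,\bmz)\in S\}$. If $k\ll 2^{n_B/2}$, then with high probability no two elements of $S$ share the same $\bmz$ (birthday bound). Every projected state is then a computational-basis state, every marginal is pure, and $o^{(2)}=1$.

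\textbf{The $\mathcal{D}_0$ case.} For Haar $\psi$, each projected state $\psi_{\bmz}$ is Haar-random on $A$ (up to the weighting $p(\bmz)$). The expected one-qubit marginal purity is therefore $\tfrac{1}{2}+O(1/d_A)$-ish; more precisely, with $n_A\ge 1$ a one-qubit marginal of a Haar state on $n_A$ qubits has expected purity $(2+d_A/2\cdot\ldots)$, which is $2/3$ for $n_A=1$ when measuring $\tr\psi^2$ with $O=$ the projector onto $\ket{\psi}^{\otimes 2}$.

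For $n_A=1$ I would instead pick $O=\tfrac{1}{2}(\mathbb{S}+I)$ composed with a dephasing test, e.g.\ $O=\sum_a\ketbra{a}\otimes\ketbra{a}$ (the collision observable). This gives $o^{(2)}=\EE{\bmz}\sum_a|\langle a|\psi_{\bmz}\rangle|^4$, which equals $1$ under $\mathcal{D}_1$ and has expectation $2/(d_A+1)\le 2/3$ under $\mathcal{D}_0$. This $O$ is diagonal with $\norm{O}_\infty\le1$ and works uniformly for all $n_A\ge1$.

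A concentration or Markov argument then shows that under $\mathcal{D}_0$ we have $o^{(2)}\le 0.8$ with constant probability. Since $o^{(2)}$ is a $p(\bmz)$-weighted average, either bound its variance via a fourth-moment Haar computation, or apply Markov to $1-o^{(2)}$, which has mean $\ge 1/3$ and is bounded by $1$, giving $\Pr[1-o^{(2)}\ge 1/6]\ge 1/5$.

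\textbf{Distinguisher and parameters.} Run $\mathcal{A}$ with this $O$ and output ``subset'' iff the estimate exceeds $0.9$. Since $\varepsilon<1/20$ and $\delta<1/100$, the advantage is a constant: at least roughly $1/5-2\delta-$(collision failure probability). Hence the trace distance above must be $\Omega(1)$, which forces $t^2/k+tk/2^n=\Omega(1)$.

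Choose $k=2^{n_B/2}/c$ for a large constant $c$, so that the collision-free event holds with probability $\ge 1-O(k^2/2^{n_B})$. The bound then gives $t^2\gtrsim k$, i.e.\ $t=\Omega(2^{n_B/4})$. The second term $tk/2^n$ is negligible because $k\le 2^{n_B/2}\le 2^{n/2}$.

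\textbf{Main obstacle.} The delicate step is handling the non-uniform weights $p(\bmz)$ in the Haar case so that the constant gap holds with constant probability rather than just in expectation. The Markov trick on $1-o^{(2)}$ is the route I would try first.
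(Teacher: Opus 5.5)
Your proposal is correct and is essentially the paper's proof. It uses the same subset-state versus Haar distinguishing game, the same collision observable $O=\sum_{\bmx}(\ketbra{\bmx}{\bmx})^{\otimes 2}$ (value $1$ on collision-free subsets, mean $2/(d_A+1)$ under Haar), a Markov-type step, and $k\sim 2^{n_B/2}$. The differences are cosmetic: you condition directly on the collision-free event, and you quote the subset-state bound as $O(t^2/k+tk/2^n)$ rather than the paper's $O(t^2/d+t/\sqrt{m}+mt/d)$; both bounds give $t=\Omega(\sqrt{k})=\Omega(2^{n_B/4})$.

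Delete the swap-test detour, since for $n_A=1$ it gives purity $1$ in both ensembles. Also, the $p(\bmz)$-weighting is not actually delicate: invariance of the Haar measure under $U_A\otimes I_B$ gives $\mathbb{E}\,o^{(2)}_\psi=2/(d_A+1)$ exactly.
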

	\begin{proof}
		We reduce the task of estimating $o^{(2)}_{\psi}$ to an ensemble–distinguishing problem between
		$\mathcal{E}_0$ and $\mathcal{E}_1 \coloneqq \mathrm{Haar}(n)$, where $\mathrm{Haar}(n)$ denotes the Haar distribution over pure $n$-qubit states.
		A referee samples a pure state $\ket{\phi}$ either from $\mathcal{E}_0$ or from $\mathcal{E}_1$, sends $\ket{\phi}^{\otimes t}$ to a prover, and asks the prover to identify which ensemble was used.
		
		Let $d \coloneqq 2^n$ and write $[d] = \{0, \ldots, d-1\}$. We identify $[d]$ with the set of $n$-bit strings $\{0,1\}^n = \{0,1\}^{n_A} \times \{0,1\}^{n_B}$, so that each element $\bm u \in [d]$ is viewed as a pair $(\bmx,\bmz)$ with $\bmx \in \{0,1\}^{n_A}$ labeling subsystem $A$ and $\bmz \in \{0,1\}^{n_B}$ labeling subsystem $B$. For a subset $S \subseteq [d]$ of size $m$, define the subset state~\cite{Tudor2023SubsetStates, Fernando2024SubsetStates}
		\begin{equation}
			\ket{S} := \frac{1}{\sqrt{|S|}} \sum_{\bm u \in S} \ket{\bm u}.   
		\end{equation}
		Let $\binom{[d]}{m}$ denote all size-$m$ subsets of $[d]$. The ensemble $\mathcal{E}_0$ is obtained by drawing $S \sim \binom{[d]}{m}$ uniformly at random and outputting $\ket{S}$.
		The following fact bounds the distance between the average $t$-copy states of $\mathcal{E}_0$ and $\mathcal{E}_1$.
		\begin{fact}[{\cite[Theorem 1.1]{Fernando2024SubsetStates}}]\label{fact:subset_state}
			For $t,m \in \mathbb{N}^+$,
			\begin{equation}
				D\left(
				\EE{S \sim \binom{[d]}{m}} \bigl[\ketbra{S}{S}^{\otimes t}\bigr],\;
				\EE{\phi \sim \mathrm{Haar}(n)} \bigl[\phi^{\otimes t}\bigr]
				\right)
				\le \mathcal{O}\left(\frac{t^2}{d} + \frac{t}{\sqrt{m}} + \frac{mt}{d}\right).
			\end{equation}
		\end{fact}
		
		By data processing, any quantum algorithm $\mathcal{B}$ acting on $t$ copies has a distinguishing advantage at most this trace distance. Hence
		\begin{equation}\label{eq:statistical_distance}
			\Bigl|
			\Pr_{S \sim \binom{[d]}{m}}\bigl[\mathcal{B}(\ket{S}^{\otimes t})=0\bigr]
			- \Pr_{\phi\sim \mathrm{Haar}(n)}\bigl[\mathcal{B}(\ket{\phi}^{\otimes t})=0\bigr]
			\Bigr|
			\le
			\mathcal{O}\left(\frac{t^2}{d} + \frac{t}{\sqrt{m}} + \frac{mt}{d}\right).
		\end{equation}
		Thus, no algorithm can reliably distinguish the ensembles unless $t$ is large.
		
		Now, assume we have an algorithm $\mathcal{A}$ that, given $t$ copies of any state $\ket{\psi}$ and an observable $O$, estimates $o^{(2)}_{\psi}$ to within error $0 \leq \varepsilon\le\frac{1}{20}$ with success probability $1-\delta\ge0.99$. We use $\mathcal{A}$ to construct a distinguisher $\mathcal{B}$ for the ensembles $\mathcal{E}_0$ and $\mathcal{E}_1$.
		
		The key is to find an observable $O$ for which the value of $o^{(2)}_{\psi}$ is characteristically different for the two ensembles. Let $d_A=2^{n_A}$ and $d_B=2^{n_B}$. We choose the observable 
		\begin{equation}\label{eq:measurable_observable}
			O \coloneqq \sum_{\bm{x}\in\{0,1\}^{n_A}} (\ketbra{\bm{x}})^{\otimes 2}
		\end{equation}
		on $\mathcal{H}_A^{\otimes 2}$. The following lemmas, whose proofs are deferred, show that the expected value of $o^{(2)}_{\psi}$ is well-separated for the two ensembles.
		\begin{lemma}\label{lem:o_2_subset}
			For $1 \le n_A \le n$ and $1 \le m \le d$,
			\begin{equation}
				\EE{\psi \sim \mathcal{E}_0}\left[o^{(2)}_{\psi}\right] \geq 1-\frac{m(m-1)}{2 d_B}.
			\end{equation}
		\end{lemma}
		\begin{lemma} \label{lem:o_2_Haar}
			For $1 \le n_A \le n$, 
			\begin{equation}
				\EE{\psi\sim \mathrm{Haar}(n)} [o^{(2)}_{\psi}] \leq \frac{2}{d_A+1}.
			\end{equation}
		\end{lemma}
		
		These lemmas show that for states from $\mathcal{E}_0$, $o^{(2)}_{\psi}$ is typically close to 1, while for states from $\mathcal{E}_1$, it is typically away from 1. Let $a = 1 - 5m(m-1)/d_B$ and $b = 8/(3(d_A+1))$. Then, by Markov's inequality, 
		\begin{equation}\label{eq:prob_bounds}
			\begin{split}
				\Pr_{\psi\sim \mathcal{E}_0}\left[o^{(2)}_{\psi}<a\right] &\le \frac{1}{10},  \\
				\Pr_{\psi\sim \mathcal{E}_1}\left[o^{(2)}_{\psi}>b\right] &\le \frac{3}{4}.
			\end{split}
		\end{equation}
		Our distinguisher $\mathcal{B}$ is defined as follows:
		\begin{enumerate}
			\item Set a decision threshold $\eta^* = (a+b)/2$.
			\item Given $t$ copies of an unknown state $\ket\psi$, run $\mathcal{A}(\ket{\psi}^{\otimes t}, O)$ and obtain an output $\eta$.
			\item If $\eta > \eta^*$, output $0$. Otherwise, output $1$.
		\end{enumerate}
		
		We choose parameters to ensure a large gap between $a$ and $b$. Set $m = c \cdot 2^{n_B/2}$ for a sufficiently small constant $c>0$. Then for large $n_B$, we have $a \approx 1$ and $b \le 8/9$, ensuring the gap $a-b$ is a constant larger than $2\varepsilon \le 0.01$.
		If the state $\ket{\psi}$ is from $\mathcal{E}_0$, then with probability at least $9/10$, $o^{(2)}_{\psi} \ge a$ (Eq.~\eqref{eq:prob_bounds}). In this case, since $\mathcal{A}$ succeeds with probability $1-\delta$, its output $\eta$ will satisfy $\eta \ge a-\varepsilon \ge \eta^*$. Thus, 
		\begin{equation}
			\Pr_{S \sim \binom{[d]}{m}}\bigl[\mathcal{B}(\ket{S}^{\otimes t})=0\bigr] \ge \frac{9}{10} (1-\delta)
		\end{equation}
		Similarly, if the state $\ket{\psi}$ is from $\mathcal{E}_1$, the probability that $\mathcal{B}$ incorrectly outputs 0 is bounded:
		\begin{equation}
			\Pr_{\phi\sim \mathrm{Haar}(n)}\bigl[\mathcal{B}(\ket{\phi}^{\otimes t})=0\bigr] \le \frac{3}{4} + \frac{\delta}{4}.
		\end{equation}
		Therefore,
		\begin{equation}
			\Bigl|
			\Pr_{S \sim \binom{[d]}{m}}\bigl[\mathcal{B}(\ket{S}^{\otimes t})=0\bigr]
			- \Pr_{\phi\sim \mathrm{Haar}(n)}\bigl[\mathcal{B}(\ket{\phi}^{\otimes t})=0\bigr]
			\Bigr|
			\ge \frac{9}{10} (1-\delta) - \frac{3}{4} - \frac{\delta}{4} \ge \frac{1}{10}.
		\end{equation}
		Combining with Eq.~\eqref{eq:statistical_distance}, 
		\begin{equation}\label{equation:precision}
			\frac{t^2}{d} + \frac{t}{\sqrt{m}} + \frac{mt}{d} = \Omega(1).
		\end{equation}
		which gives the claimed lower bound $t = \Omega(\sqrt{m}) = \Omega(2^{n_B / 4})$.
		
	\end{proof}
	
	\begin{proof}[Proof of Lemma \ref{lem:o_2_subset}]
		Let
		\begin{equation}
			\mathcal T:=\big\{S\in\binom{[d]}{m}:\ \text{the $B$-labels of the elements of $S$ are pairwise distinct}\big\}.
		\end{equation}
		Here, each element $\bm u \in [d]$ can be written as $\bm u=(\bmx,\bmz)$, where the $B$-label is the bit string $\bmz$ on subsystem $B$.
		Then
		\begin{equation}\begin{split}
				\EE{S \sim \binom{[d]}{m}}\left[o^{(2)}_{\ket{S}}\right] &= \EE{\substack{S\sim \binom{[d]}{m}\\ \psi_{\bmz} \sim \mathcal{E}_{S}}} [\tr(O \psi_{\bmz}^{\otimes 2})] \\
				&=\frac{1}{\binom{d}{m}}\sum_{S\in\binom{[d]}{m}}\EE{\psi_\bmz\sim {\cal E}_S}[\tr(O\psi_\bmz^{\otimes 2})]\\
				&\geq\frac{1}{\binom{d}{m}}\sum_{S\in\mathcal T}\EE{\psi_\bmz\sim {\cal E}_S}[\tr(O\psi_\bmz^{\otimes 2})].
		\end{split}\end{equation}
		
		For any $S\in\mathcal T$, each projected state $\psi_{\bm z}$ of $\ket{S}$ is a computational–basis product state $\ket{\bm x}$ on $A$, as for each $B$-label $\bm z$ there is at most one pair $(\bm x,\bm z)\in S$. Hence $\tr(O\psi_\bmz^{\otimes 2})=\tr(O(\ketbra{\bmx}{\bmx})^{\otimes 2})=1$ and 
		\begin{equation}
			\frac{1}{\binom{d}{m}}\sum_{S\in\mathcal T}\EE{\psi_\bmz\sim {\cal E}_S}[\tr(O\psi_\bmz^{\otimes 2})] = \Pr_{S\sim\binom{[d]}{m}}[S\in\mathcal T].
		\end{equation}
		
		It remains to lower bound $\Pr[S\in\mathcal T]$, i.e., the probability of no collisions among the $B$-labels. Label the elements of $S$ as $(\bm x_i,\bm z_i)_{i=1}^m$. For any fixed pair $1\le i<j\le m$, we have $\Pr[\bmz_i=\bmz_j]\leq 1/d_B$ by uniformity over $[d]$. Then, by the union bound,
		\begin{equation}\begin{split}
				\Pr_{S\sim \binom{[d]}{m}}[S\in{\cal T}] \geq& 1-\sum_{1<i<j\leq m}\Pr[\bmz_i=\bmz_j]\\
				\geq&1- \frac{m(m-1)}{2d_B},
		\end{split}\end{equation}
		which completes the proof. 
	\end{proof}
	
	\begin{proof}[Proof of Lemma \ref{lem:o_2_Haar}]
		By unitary invariance, if $\psi_{AB}$ is Haar–random on $\mathcal{H}_A\otimes\mathcal{H}_B$ then, for any fixed computational–basis outcome $\bm z$ on $B$, the post–measurement state $\psi_{\bm z}$ is Haar–distributed on $\mathcal{H}_A$. Hence 
		\begin{equation}\begin{split}
				\EE{\substack{\psi\sim \mathrm{Haar}(n)\\ \psi_{\bmz} \sim \mathcal{E}_{\psi}}} [\tr(O\psi_{\bmz}^{\otimes 2})]
				&=\EE{\psi \sim \mathrm{Haar}(n_A)} [\sum_{\bmx\in\{0,1\}^{n_A}}\bra{\bmx}^{\otimes 2}\psi^{\otimes 2}\ket{\bmx}^{\otimes 2}]\\
				&=d_A\EE{\psi \sim \mathrm{Haar}(n_A)} [\bra{0}^{\otimes 2}\psi^{\otimes 2}\ket{0}^{\otimes 2}]\\
				&=\frac{1}{d_A+1}[\bra{0}^{\otimes 2} (I+\mathbb{S})\ket{0}^{\otimes 2}]\\
				&=\frac{2}{d_A+1}.
		\end{split}\end{equation}
	\end{proof}
	
	\subsection{Hardness of observing measurement-induced quantum phenomena}
	
	Having established the hardness of estimating second-order quantities, we now show that this hardness extends to any quantity that distinguishes ensembles of computational-basis states from projected ensembles of Haar-random states.
	
	\begin{corollary}[Hardness of observing measurement-induced quantum phenomena, formal version of Corollary~1]
		Fix $n_A \ge 1$, and let $f$ be a $[0,1]$-valued function on quantum ensembles over $\mathcal{H}_A$, representing some property of interest. Suppose that there exists a constant $\Delta>0$ such that
		\begin{equation}
			\abs{f(\tilde \cE)-\EE{\psi\gets \Haar(n)}f(\cE_{\psi,A})}>\Delta
		\end{equation}
		for every ensemble $\tilde{\cE}$ supported on computational-basis states.
		Then there exists some constants $\delta,\varepsilon>0$ such that, for any quantum algorithm $\mathcal{A}_f$ that estimates $f(\cE)$ for $t$ copies of a purification $\psi$ of $\mathcal E$, and satisfies
		\begin{equation}
			\Pr_{\mathcal A_f}\left[\abs{\mathcal{A}_f\bigl(\psi^{\otimes t}\bigr) - f(\cE)} < \varepsilon\right]>1-\delta,
		\end{equation}
		the required number of copies must satisfy $t = \Omega(2^{n_B / 4})$.
	\end{corollary}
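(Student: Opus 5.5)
We mirror the proof of Theorem~\ref{thm:universal_property}, replacing the specific observable $O$ by the property $f$. We play the same distinguishing game between $\mathcal{E}_0$ (uniform subset states $\ket{S}$ with $|S| = m$) and $\mathcal{E}_1 = \mathrm{Haar}(n)$. The game is hard by Fact~\ref{fact:subset_state}: any $t$-copy distinguisher has advantage at most $\mathcal{O}(t^2/d + t/\sqrt{m} + mt/d)$. We choose $m = c\cdot 2^{n_B/2}$ with a small constant $c$. Then the advantage is $\mathcal{O}(t/\sqrt{m}) + o(1)$ whenever $t \le \sqrt{d}$, so a constant advantage forces $t = \Omega(2^{n_B/4})$.

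\emph{Step 1: the subset side.} As in the proof of Lemma~\ref{lem:o_2_subset}, with probability at least $1 - m(m-1)/(2d_B) \ge 1 - c^2/2$ the $B$-labels of $S$ are pairwise distinct. On this event every projected state of $\ket{S}$ is a computational-basis state $\ket{\bmx}$ on $A$. The projected ensemble $\mathcal{E}_{S,A}$ is therefore supported on computational-basis states, and the hypothesis gives $\abs{f(\mathcal{E}_{S,A}) - \mu_1} > \Delta$, where $\mu_1 := \EE{\psi\gets\Haar(n)} f(\mathcal{E}_{\psi,A})$.

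\emph{Step 2: the Haar side.} The hypothesis controls $\mu_1$ only on average, not the typical value, so we use Markov's inequality on $\abs{f(\mathcal{E}_{\psi,A}) - \mu_1} \in [0,1]$. There are two options. The first is to argue concentration of $f$ under Haar. The second, which is cleaner, uses the fact that $f$ takes values in $[0,1]$ and $\mu_1$ is a fixed known number. The distinguisher runs $\mathcal{A}_f$, obtains $\eta$, and outputs $\abs{\eta - \mu_1}$ as a $[0,1]$-valued score. It then accepts ``subset'' with probability equal to this score, or by thresholding.

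Under $\mathcal{E}_0$ the expected score is at least $(1-c^2/2)(1-\delta)(\Delta - \varepsilon)$. Under $\mathcal{E}_1$ it is at most $\EE\abs{f(\mathcal{E}_{\psi,A}) - \mu_1} + \varepsilon + \delta$.

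\emph{Step 3: the main obstacle.} The difficulty is that $\EE\abs{f - \mu_1}$ under Haar need not be small without a concentration assumption. To handle this, I would interpret the hypothesis so that the gap $\Delta$ holds uniformly for every computational-basis ensemble. A uniform gap forces $\mu_1$ to be separated from the entire range $f(\tilde{\mathcal{E}})$ of basis ensembles. Two cases follow.

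If $f(\tilde{\mathcal{E}}) > \mu_1 + \Delta$ for all $\tilde{\mathcal{E}}$, thresholding at $\mu_1 + \Delta/2$ works. The reason is that under Haar, $\Pr[f > \mu_1 + \Delta/2] \le (\mu_1)/(\mu_1 + \Delta/2) \le 1/(1+\Delta/2)$ by Markov, using $f \ge 0$. This is bounded away from $1$.

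The case $f < \mu_1 - \Delta$ is symmetric: apply Markov to $1 - f$.

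Choosing $\varepsilon < \Delta/4$ and $\delta$, $c$ small enough relative to $\Delta$ then yields a constant distinguishing advantage of roughly $\Delta/(2+\Delta) - \mathcal{O}(\delta + c^2)$. The only remaining subtlety is that different basis ensembles might lie on different sides of $\mu_1$. If so, I would restrict to the side containing the basis ensembles realized by subset states of a fixed $S$-type, or simply run both threshold tests. Combining this constant advantage with Fact~\ref{fact:subset_state} gives $t = \Omega(\sqrt{m}) = \Omega(2^{n_B/4})$.
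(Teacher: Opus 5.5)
Your route matches the paper's. You use the same distinguishing game between random subset states and $\Haar(n)$, the same no-collision event that makes the projected ensemble of $\ket{S}$ a computational-basis ensemble, and Fact~\ref{fact:subset_state} with $m=\Theta(2^{n_B/2})$. A small correction there: the $t^2/d$ term is not $o(1)$ all the way up to $t=\sqrt{d}$. What a constant advantage actually forces is $t=\Omega(\min\{\sqrt{d},\sqrt{m},d/m\})=\Omega(2^{n_B/4})$, which is still what you need. When all basis ensembles lie on one side of $\mu_1$, your threshold test with Markov applied to $f$ or $1-f$ is correct. It is also more explicit than the paper, which only asserts $|\mathbb{E}_{\cE_0}f-\mathbb{E}_{\cE_1}f|\ge\Delta-m(m-1)/d_B$ and defers to Theorem~\ref{thm:universal_property}. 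In that case a simpler test also works: accepting with probability equal to the clipped output $\eta$ turns the expectation gap directly into an advantage of at least $\Delta-m(m-1)/d_B-2(\varepsilon+\delta)$.

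The genuine gap is your final step. Neither of your remedies for basis ensembles lying on both sides of $\mu_1$ works. Take $f(\cE)=\mathbf{1}[\tr(\rho_{\cE}Z_1)>0]$, where $\rho_{\cE}$ is the average state of the ensemble.
\begin{enumerate}
\item Conjugating by $X_1$ gives $\mu_1=1/2$ (ties have measure zero). Since $f\in\{0,1\}$, the hypothesis holds for every $\Delta<1/2$, and basis ensembles sit on both sides.
\item Under $\Haar(n)$, $f$ is $0$ or $1$ with probability $1/2$ each. So the two-sided test $|\eta-\mu_1|>\Delta/2$ accepts Haar states as often as subset states.
\item A subset state's projected ensemble, uniform over $m$ basis states, has $f=1$ with probability about $1/2$. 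So each one-sided test also has vanishing advantage.
\end{enumerate}
The paper's proof has the same blind spot: its expectation-gap inequality fails for this $f$. So the paper does not supply this step either.

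The way to close it is to show the two-sided case is vacuous. Write $f(\psi)$ for $f$ of the projected ensemble of $\psi$. The purifications $\sum_{\bmz}\sqrt{p(\bmz)}\ket{\bmx_{\bmz}}\ket{\bmz}$ of basis ensembles form a path-connected set. To get from one to another, move all weight onto one label, relabel the weight-zero labels freely, move the weight to a second label, relabel the first, and spread the weight back out.

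Suppose $f$ took values both above $\mu_1+\Delta$ and below $\mu_1-\Delta$ on this set. Connectedness of $[0,1]$ along such a path then gives purifications $\psi,\psi'$ with $\norm{\ket{\psi}-\ket{\psi'}}$ arbitrarily small but $|f(\psi)-f(\psi')|>2\Delta$. An estimator with $\varepsilon\le\Delta$ and $\delta<1/2$ would distinguish $\psi^{\otimes t}$ from $\psi'^{\otimes t}$ with advantage at least $1-2\delta$. That is impossible for any fixed finite $t$.

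Hence in the two-sided case no estimator exists at all, and the bound holds trivially. You may therefore assume one-sidedness, and there your argument is complete.
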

	\begin{proof}
		
		The proof is similar to that of Theorem~\ref{thm:universal_property}.
		By assumption, $f$ has a constant gap $\Delta$ between the projected ensembles generated by $\Haar(n)$ and any ensemble supported on computational-basis states. By the collision-probability analysis in the proof of Lemma~\ref{lem:o_2_subset}, the projected ensemble induced by $\cE_0$ contains non-computational-basis states with probability at most $\frac{m(m-1)}{d_B}$. 
		Therefore, 
		\begin{equation}
			\abs{\EE{\psi\gets \cE_0}f(\cE_{\psi,A})-\EE{\psi\gets \cE_1}f(\cE_{\psi,A})}\geq \Delta-\frac{m(m-1)}{d_B}.
		\end{equation}
		Setting $m=O(2^{n_B/2})$, this gap remains bounded below by a positive constant.
		The rest then follows exactly as the proof of Theorem~\ref{thm:universal_property}. 
	\end{proof}
	
	This corollary directly implies the worst-case hardness of observing measurement-induced entanglement, magic, complexity, and deep thermalization.
	
	For measurement-induced entanglement, define $f$ as the fraction of entangled states in a given ensemble:
	\begin{equation}
		f(\mathcal E)=\Pr_{\psi\sim \mathcal E}[\psi \text{ is entangled}].
	\end{equation}
	Since every computational-basis state is separable, we have $f(\tilde{\mathcal E})=0$ for any ensemble $\tilde{\mathcal E}$ supported on computational-basis states.
	On the other hand, for the projected ensembles of Haar-random states, the expectation of $f$ is exactly the fraction of entangled states among Haar-random $n_A$-qubit pure states. Since the set of separable states has Haar measure zero, this expectation equals $1$. Thus, the gap is $\Delta=1$, and it follows that the worst-case sample complexity for estimating $f$ is exponential in $n_B$.
	This argument also extends to more robust entanglement-related quantities, such as subsystem purity or entanglement entropy (after adjusting the range of $f$ accordingly), since Haar-random states are highly entangled with high probability~\cite{Liu2018EntanglementDesign}.
	
	Likewise, by taking $f$ to be the fraction of magic states, or the fraction of states whose circuit complexity exceeds a fixed threshold, one obtains the worst-case hardness of observing measurement-induced magic and circuit complexity.
	
	For deep thermalization, let $f_{k,\epsilon}(\cE)$ indicate whether $\cE$ forms an $\epsilon$-approximate $k$-design, for fixed $k=\cO(1)$ and constant $\epsilon>0$. When $n_B=\Omega(n_A)$, the projected ensemble of a Haar-random state forms such an approximate design with high probability~\cite{Cotler2023EmergentDesign}, whereas any ensemble of computational-basis states cannot form even an $\epsilon$-approximate $2$-design for some constant $\epsilon$. This again gives a constant gap $\Delta>0$, implying the hardness of verifying deep thermalization in the regime $n_B=\Omega(n_A)$, which is a standard regime where deep thermalization is expected to emerge~\cite{Cotler2023EmergentDesign}.

	\suppnote{Sample-efficient algorithm on bounded-complexity ensembles}\label{app:sample-efficient-physical}
	
	Here, we prove Theorem~\ref{thm:sample-efficient-learning}, which shows that it is possible to prepare the state $\sigma_{\psi}^{(k)}$ in a sample-efficient manner when the input state $\ket{\psi}$ is generated by a polynomial-size quantum circuit $U$. 
	
	The proof proceeds by explicitly constructing an algorithm that achieves this performance bound. 
	The algorithm has two stages: it first learns the input state via a tomography procedure tailored to states with bounded gate complexity~\cite{Zhao2024BoundedGateComplexity}. Then it uses the resulting classical description to efficiently prepare a quantum state approximating $\sigma_{\psi}^{(k)}$. 
	The tomography algorithm provides the following performance guarantee:
	\begin{fact}[{\cite[Theorem~1]{Zhao2024BoundedGateComplexity}}]\label{fact:sample-efficient}
		Given $0 < \varepsilon < 1$ and $N$ copies of an $n$-qubit input state $\ket{\psi} = U\ket{0}^{\otimes n}$, where the unitary $U$ consists of $G$ two-qubit gates, one can obtain a classical description of a pure state $\ket{\phi}$ such that $D(\psi, \phi) \le \varepsilon$ provided that $N = \widetilde{\Theta}\left(\frac{G}{\varepsilon^2}\right)$. Here, $\widetilde{\Theta}(\cdot)$ hides logarithmic factors.
	\end{fact}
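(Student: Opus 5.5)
The plan is to prove this by \emph{covering plus hypothesis selection}: discretize the set of states with gate complexity at most $G$ into a finite net, and then use copies of $\ket{\psi}$ to pick a net element close to $\psi$. The sample complexity of the selection step is logarithmic in the net size. The net size in turn is exponential in $G$ only up to logarithmic factors, which gives $N=\widetilde{O}(G/\varepsilon^2)$. The matching lower bound (the $\Theta$) comes from a packing argument combined with Holevo's bound.

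\textbf{Step 1 (net).} A circuit with $G$ two-qubit gates is specified by two data. The first is its architecture, i.e.\ which pair of qubits each gate acts on, giving at most $\binom{n}{2}^G \le n^{2G}$ choices. The second is the gates themselves, elements of $U(4)$. Take an operator-norm $\eta$-net of $U(4)$ of size $(C/\eta)^{16}$. Replacing each gate by its nearest net element changes $U$ by at most $G\eta$ in operator norm, by the telescoping/triangle inequality. Hence $U\ket{0}^{\otimes n}$ moves by at most $G\eta$ in trace distance. Choosing $\eta=\varepsilon/(2G)$ yields a finite set $\mathcal{N}$ of pure states with
\[
\log|\mathcal{N}| = O\!\left(G\log n + G\log(G/\varepsilon)\right) = \widetilde{O}(G),
\]
such that every target state lies within $\varepsilon/2$ of some element of $\mathcal{N}$.

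\textbf{Step 2 (selection).} Here I will invoke a quantum hypothesis-selection routine. Concretely, this is a tournament (Scheffé-type) procedure over pairs $(\phi_i,\phi_j)$ of elements of $\mathcal{N}$. For each pair, the Helstrom measurement $\Pi_{ij}$ onto the positive part of $\phi_i-\phi_j$ is estimated on $\psi$ to accuracy $O(\varepsilon)$. The procedure outputs a $\phi\in\mathcal{N}$ whose statistics are consistent with all these estimates. This is the step I expect to be the main obstacle. The naive approach estimates all $|\mathcal{N}|^2$ two-outcome observables, which needs $O(\log|\mathcal{N}|/\varepsilon^2)$ copies only if the copies can be reused across measurements. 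I would handle this first with classical shadows: since each $\Pi_{ij}$ has rank at most $2$ for pure states, its shadow norm is $O(1)$ under random Clifford measurements. A union bound over all pairs then gives $N=O(\log|\mathcal{N}|/\varepsilon^2)$ copies. If the shadow-norm bound fails for some reason, the fallback is quantum threshold search / gentle sequential measurement, which attains the same $\log|\mathcal{N}|/\varepsilon^2$ scaling up to polylogarithmic factors. The standard Scheffé analysis then shows the selected state is within $O(\varepsilon)$ of $\psi$; rescaling $\varepsilon$ by a constant absorbs the loss.

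\textbf{Step 3 (lower bound).} For tightness, I will construct a family of $2^{\Omega(G)}$ states of gate complexity at most $G$ that are pairwise $\Omega(\varepsilon)$-separated yet have small overlap-induced mutual information per copy. One candidate is a product of $\Theta(G)$ small, locally perturbed single-qubit rotations by angles of order $\varepsilon$ about a fixed state. Any learner succeeding at accuracy $\varepsilon$ identifies the member of the family, so Fano's inequality requires $\Omega(G)$ bits of information. Holevo's bound gives each copy at most $O(\varepsilon^2)$ bits about the family index, hence $N=\Omega(G/\varepsilon^2)$ up to logarithmic factors.
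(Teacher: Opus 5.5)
The paper gives no proof of this Fact. It imports it from the cited work and uses only the sufficiency direction, in Theorem~\ref{thm:sample-efficient-learning}. Your Steps~1--2 are correct and follow the same covering-net plus quantum hypothesis-selection strategy as that work.

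\textbf{Step 2 works via classical shadows.} The step you flagged as the main obstacle does close. For distinct pure $\phi_i,\phi_j$, the positive part of $\phi_i-\phi_j$ is a rank-one projector $\Pi_{ij}$. Under random Clifford measurements its squared shadow norm is therefore at most $3\tr(\Pi_{ij}^2)=3$. Median-of-means with a union bound over the $|\mathcal N|^2$ projectors then costs $O(\log(|\mathcal N|/\delta)/\varepsilon'^2)$ copies. The minimum-distance rule $\hat k=\arg\min_k\max_{i,j}|\tr(\Pi_{ij}\phi_k)-\hat e_{ij}|$ gives $D(\phi_{\hat k},\psi)\le 3\eta+2\varepsilon'$.

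\textbf{Your fallback is not equivalent.} Quantum threshold search costs $O(\log^2 m/\varepsilon^2)$, and generic quantum hypothesis selection costs $\widetilde O(\log^3 m/\varepsilon^2)$. Since $\log|\mathcal N|=\widetilde\Theta(G)$, these lose a polynomial factor in $G$, not a polylogarithmic one. The linear scaling in $G$ genuinely relies on the low rank of the pure-state Helstrom projectors.

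\textbf{Step 3 has a genuine gap.} Suppose $K=\Theta(G)$ qubits are each rotated by an angle $\theta=\Theta(\varepsilon)$. For uniformly random signs, the one-copy average state is $\bigotimes_{i=1}^{K}\mathrm{diag}(\cos^2\theta,\sin^2\theta)$. Its entropy is $K\,h(\sin^2\theta)=\Theta(G\varepsilon^2\log(1/\varepsilon))$, where $h$ is the binary entropy. So Holevo bounds the information per copy by order $G\varepsilon^2$, not $O(\varepsilon^2)$. Fano then yields only $N=\Omega(1/(\varepsilon^2\log(1/\varepsilon)))$, with no factor of $G$. The symptom is that members differing on $\Omega(G)$ qubits are at distance about $\min(1,\sqrt G\,\varepsilon)$, not $\Theta(\varepsilon)$.

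\textbf{How to repair Step 3.} The total perturbation must have size $\varepsilon$. Take $\theta=c\varepsilon/\sqrt K$, with sign patterns from a binary code of relative distance at least $1/4$ and $2^{\Omega(K)}$ codewords. Two members differing on $k\ge K/4$ qubits have fidelity $\cos^{2k}(2\theta)\le e^{-c^2\varepsilon^2}$, hence trace distance $\Omega(c\varepsilon)$. Subadditivity bounds the per-copy Holevo quantity by $K\,h(\sin^2\theta)=O(\varepsilon^2\log(K/\varepsilon^2))$. This gives $N=\Omega(K/(\varepsilon^2\log(K/\varepsilon^2)))$.

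Even this product construction needs $K\le n$, i.e.\ $G=O(n)$. For $G\gg n$, the regime relevant to this paper, put the family on about $\log_2 G$ qubits instead. Use states $\sqrt{1-c^2\varepsilon^2}\,\ket{0}+c\varepsilon\ket{v_x}$, where $\{v_x\}$ is a $2^{\Omega(d)}$-size packing of unit vectors orthogonal to $\ket{0}$ with $d\approx G$. Each member is preparable with $O(d)$ two-qubit gates. The same Fano--Holevo count, with the pinching bound $S\le h(c^2\varepsilon^2)+c^2\varepsilon^2\log d$, gives $\Omega(d/(\varepsilon^2\log(d/\varepsilon^2)))=\widetilde\Omega(G/\varepsilon^2)$.
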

	
	We now show that if the learned state $\ket{\phi}$ is close to the target state $\ket{\psi}$, then their corresponding higher-order projected states $\sigma^{(k)}_{\phi}$ and $\sigma^{(k)}_{\psi}$ are also close.
	\begin{lemma}\label{lem:bound-high-order-distance}
		Let $k \ge 1$. For two pure states $\ket{\phi}$ and $\ket{\psi}$ satisfying $D(\phi,\psi) \le \varepsilon$, we have $D(\sigma^{(k)}_{\psi}, \sigma_{\phi}^{(k)}) \le (2k+1) \varepsilon$.
	\end{lemma}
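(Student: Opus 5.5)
We need to bound $D(\sigma^{(k)}_\psi,\sigma^{(k)}_\phi)$, where $\sigma^{(k)}_\psi=\sum_{\bmz} p_\psi(\bmz)\,\psi_{\bmz}^{\otimes k}\otimes\ketbra{\bmz}$. Since both states are block diagonal in the classical register,
\begin{equation}
D(\sigma^{(k)}_\psi,\sigma^{(k)}_\phi)=\tfrac12\sum_{\bmz}\bigl\|p_\psi(\bmz)\psi_{\bmz}^{\otimes k}-p_\phi(\bmz)\phi_{\bmz}^{\otimes k}\bigr\|_1 .
\end{equation}
It is convenient to work with the unnormalized projected vectors $\ket{\tilde\psi_{\bmz}}\coloneqq(I_A\otimes\bra{\bmz})\ket{\psi}$, so that $p_\psi(\bmz)\psi_{\bmz}^{\otimes k}=\tilde\psi_{\bmz}^{\otimes k}/p_\psi(\bmz)^{k-1}$. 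For $k=1$ this is just the output of the measurement channel on $B$, and data processing gives $D\le\varepsilon$. For general $k$ we will fix the global phase of $\ket\phi$ so that $\braket{\psi}{\phi}\ge 0$. Then $\|\ket\psi-\ket\phi\|_2^2=2(1-|\braket{\psi}{\phi}|)\le 2(1-|\braket{\psi}{\phi}|^2)=2\varepsilon^2$, i.e.\ $\|\ket\psi-\ket\phi\|_2\le\sqrt2\,\varepsilon$, and consequently
\begin{equation}
\sum_{\bmz}\|\ket{\tilde\psi_{\bmz}}-\ket{\tilde\phi_{\bmz}}\|_2^2\le 2\varepsilon^2 .
\end{equation}

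In each block we split the difference with a triangle inequality:
\begin{equation}
\bigl\|p_\psi\psi_{\bmz}^{\otimes k}-p_\phi\phi_{\bmz}^{\otimes k}\bigr\|_1\le |p_\psi(\bmz)-p_\phi(\bmz)|+p_\phi(\bmz)\,\bigl\|\psi_{\bmz}^{\otimes k}-\phi_{\bmz}^{\otimes k}\bigr\|_1 .
\end{equation}
The first term, summed over $\bmz$, is the total variation distance between the computational-basis measurement distributions on $B$. By data processing it is at most $2\varepsilon$, contributing $\varepsilon$ to $D$.

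For the second term we telescope, $\|\psi_{\bmz}^{\otimes k}-\phi_{\bmz}^{\otimes k}\|_1\le k\|\psi_{\bmz}-\phi_{\bmz}\|_1$, and for pure states $\|\psi_{\bmz}-\phi_{\bmz}\|_1\le 2\|\ket{\psi_{\bmz}}-\ket{\phi_{\bmz}}\|_2$ (for any phase choice). Writing $a=\ket{\tilde\psi_{\bmz}}$ and $b=\ket{\tilde\phi_{\bmz}}$, the standard normalization estimate gives
\begin{equation}
\|b\|\,\bigl\|a/\|a\|-b/\|b\|\bigr\|\le 2\|a-b\| .
\end{equation}
Hence $p_\phi(\bmz)\|\psi_{\bmz}-\phi_{\bmz}\|_1\le 4\sqrt{p_\phi(\bmz)}\,\|a-b\|$. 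Summing over $\bmz$ and applying Cauchy--Schwarz yields $\sum_{\bmz}\sqrt{p_\phi(\bmz)}\,\|a-b\|\le\sqrt{2}\,\varepsilon$. Putting this together gives a bound of the form $D\le \varepsilon+ck\varepsilon$.

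The main obstacle is the constant. The crude chain above produces roughly $2\sqrt2\,k$ rather than $2k$, so we will tighten two places. First, we will use the sharper pure-state identity $\tfrac12\|\psi_{\bmz}-\phi_{\bmz}\|_1=\sqrt{1-|\langle\psi_{\bmz}|\phi_{\bmz}\rangle|^2}$. Second, we will split $p_\psi\psi^{\otimes k}_{\bmz}-p_\phi\phi^{\otimes k}_{\bmz}$ symmetrically, using $\sqrt{p_\psi p_\phi}$ weights, and bound $\sum_{\bmz}\sqrt{p_\psi(\bmz)p_\phi(\bmz)}\sqrt{1-|\langle\psi_{\bmz}|\phi_{\bmz}\rangle|^2}$ by Cauchy--Schwarz. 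This reduces to $\sqrt{1-(\sum_{\bmz}|\langle\tilde\psi_{\bmz}|\tilde\phi_{\bmz}\rangle|)^2}\le\sqrt{1-|\braket{\psi}{\phi}|^2}=\varepsilon$, since $\sum_{\bmz}|\langle\tilde\psi_{\bmz}|\tilde\phi_{\bmz}\rangle|\ge|\braket{\psi}{\phi}|$. With this, each of the $k$ telescoped factors contributes at most $2\varepsilon$ in trace norm, i.e.\ $\varepsilon$ to $D$ per factor after halving, and the mismatch between the $p_\psi$, $\sqrt{p_\psi p_\phi}$ and $p_\phi$ weights costs a total-variation term of order $\varepsilon$. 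If the bookkeeping still gives a constant slightly above $2k+1$, I would accept that, since only $O(k\varepsilon)$ matters for Theorem~\ref{thm:sample-efficient-learning}. I would then recheck whether absorbing the probability mismatch into the first term via Fuchs--van de Graaf recovers exactly $2k+1$.
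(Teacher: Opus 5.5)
Your tightened route is correct and proves the statement, but it takes a genuinely different route from the paper. Write $w_{\bmz}=\sqrt{p_\psi(\bmz)p_\phi(\bmz)}$ and $F_{\bmz}=|\braket{\psi_{\bmz}}{\phi_{\bmz}}|$. The bookkeeping you left open closes exactly. Since $w_{\bmz}$ lies between $p_\psi(\bmz)$ and $p_\phi(\bmz)$, the two weight-mismatch terms of your symmetric split sum to exactly $|p_\psi(\bmz)-p_\phi(\bmz)|$, which contributes $\mathrm{TV}(p_\psi,p_\phi)\le\varepsilon$ to $D$. The middle term contributes at most $k\sum_{\bmz}w_{\bmz}\sqrt{1-F_{\bmz}^2}\le k\varepsilon$. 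This step needs Cauchy--Schwarz twice, including $\sum_{\bmz}w_{\bmz}F_{\bmz}^2\ge(\sum_{\bmz}w_{\bmz}F_{\bmz})^2/\sum_{\bmz}w_{\bmz}$, together with $\sum_{\bmz}w_{\bmz}\le 1$; write it out. You therefore get $D(\sigma^{(k)}_\psi,\sigma^{(k)}_\phi)\le(k+1)\varepsilon$. The hedge about accepting a constant above $2k+1$ and the Fuchs--van de Graaf recheck are unnecessary. Only your first crude chain, which gives $(2\sqrt2\,k+1)\varepsilon$, would fall short on its own.

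The paper never leaves the level of density operators. It inserts the hybrid $\tilde\sigma^{(k)}_\phi=\sum_{\bmz}p_\psi(\bmz)\phi_{\bmz}^{\otimes k}\otimes\ketbra{\bmz}$, which is $\mathrm{TV}(p_\psi,p_\phi)\le\varepsilon$ away from $\sigma^{(k)}_\phi$ for every $k$. It then obtains $\sum_{\bmz}p_\psi(\bmz)D(\phi_{\bmz},\psi_{\bmz})=D(\tilde\sigma^{(1)}_\phi,\sigma^{(1)}_\psi)\le2\varepsilon$ from the $k=1$ case, using one more triangle inequality plus data processing under the dephasing $\Delta_B$. That factor $2$ is exactly the origin of $2k+1$.

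The paper's argument is shorter and uses only data processing and triangle inequalities. It never uses purity of the global state, so it carries over verbatim to mixed inputs. Yours exploits the overlap decomposition $\braket{\psi}{\phi}=\sum_{\bmz}\langle\tilde\psi_{\bmz}|\tilde\phi_{\bmz}\rangle$ to bound the geometric-mean-weighted average of $D(\psi_{\bmz},\phi_{\bmz})$ directly by $D(\psi,\phi)$. This avoids the factor-$2$ loss and yields the sharper constant $k+1$.
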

	
	\begin{proof}
		First, note that $\sigma_{\psi}^{(1)} = \Delta_B(\psi)$, where $\Delta_B$ denotes the dephasing channel acting on subsystem $B$. By the data-processing inequality, 
		\begin{equation}
			D(\sigma^{(1)}_{\psi}, \sigma_{\phi}^{(1)}) \le \varepsilon.
		\end{equation} 
		Consequently,
		\begin{equation}
			\mathrm{TV}(p_{\psi}, p_{\phi}) = D(\tr_A(\sigma_{\psi}^{(1)}), \tr_A(\sigma_{\phi}^{(1)})) \le  D(\sigma_{\psi}^{(1)}, \sigma_{\phi}^{(1)}) \le \varepsilon.
		\end{equation}
		where $\mathrm{TV}$ denotes the total variation distance. 
		
		Define the state $\tilde{\sigma}_{\phi}^{(k)} \coloneqq \sum_{\bmz}p_{\psi}(\bmz) \phi_{\bmz}^{\otimes k} \otimes \ketbra{\bmz}$, which replaces the distribution $p_{\phi}$ in $\sigma_{\phi}^{(k)}$ by $p_{\psi}$. 
		Then, 
		\begin{equation}
			D(\tilde{\sigma}_{\phi}^{(k)}, \sigma_{\phi}^{(k)}) = \mathrm{TV}(p_{\psi}, p_{\phi}) \le \varepsilon.
		\end{equation}
		It follows that
		\begin{equation}
			D(\tilde{\sigma}_{\phi}^{(1)}, \sigma_{\psi}^{(1)}) \le  D(\tilde{\sigma}_{\phi}^{(1)}, \sigma_{\phi}^{(1)}) +  D(\sigma_{\phi}^{(1)}, \sigma_{\psi}^{(1)}) \le 2 \varepsilon. 
		\end{equation}
		\begin{equation}
			D(\tilde{\sigma}_{\phi}^{(k)},\sigma_{\psi}^{(k)}) = \sum_{\bmz} p_{\psi}(\bmz) D(\phi_{\bmz}^{\otimes k}, \psi_{\bmz}^{\otimes k}) \le k\sum_{\bmz} p_{\psi}(\bmz) D(\phi_{\bmz}, \psi_{\bmz}) = k D(\tilde{\sigma}_{\phi}^{(1)}, \sigma_{\psi}^{(1)}) \le 2k\varepsilon,
		\end{equation}
		where the first inequality comes from
		\begin{equation}
			D(\phi_{\bmz}^{\otimes k}, \psi_{\bmz}^{\otimes k}) \le \sum_{i=1}^{k} D(\phi_{\bmz}^{\otimes k-i+1} \otimes \psi_{\bmz}^{i-1}, \phi^{k-i}_{\bmz} \otimes \psi_{\bmz}^{\otimes i}) = \sum_{i=1}^k D(\phi_{\bmz}, \psi_{\bmz}) = kD(\phi_{\bmz}, \psi_{\bmz}).
		\end{equation}
		Finally,
		\begin{equation}
			\begin{split}
				D(\sigma^{(k)}_{\psi}, \sigma_{\phi}^{(k)}) \le   D(\tilde{\sigma}_{\phi}^{(k)}, \sigma_{\phi}^{(k)}) + D(\tilde{\sigma}_{\phi}^{(k)},\sigma_{\psi}^{(k)}) \le (2k + 1) \varepsilon.
			\end{split}
		\end{equation}
	\end{proof}

	\begin{proof}[Proof of Theorem~\ref{thm:sample-efficient-learning}]
		We begin by setting $\varepsilon' = \frac{\varepsilon}{2k + 1}$ and applying the tomography protocol described in Fact~\ref{fact:sample-efficient} to obtain a classical description of a pure state $\phi$ satisfying $D(\phi, \psi) \le \varepsilon'$ using $N = \widetilde{\Theta}\left(\frac{G}{\varepsilon'^2}\right) = \widetilde{\Theta}\left(\frac{k^2G}{\varepsilon^2}\right)$ copies of $\ket{\psi}$. 
		This procedure also yields a classical description of the corresponding state $\sigma_{\phi}^{(k)}$. 
		Then, by Lemma~\ref{lem:bound-high-order-distance}, we have $D(\sigma_{\phi}^{(k)}, \sigma_{\psi}^{(k)}) \le (2k+1)\varepsilon' = \varepsilon$, which completes the proof.
	\end{proof}

	\suppnote{Hardness of estimating nonlinear properties with prior knowledge}
	\label{app:no-go-prior} 
	We now present detailed statements and proofs of the two parts of Theorem~\ref{thm:no-go-prior}. We begin by introducing the definition of QPRF and QPRP, which are function families that cannot be distinguished from truly random ones by polynomial-time quantum algorithms, as well as the LWE and DLWE problems, which are standard computational assumptions in post-quantum cryptography.
	
	\begin{definition}[Quantum-secure pseudo-random functions, QPRF \cite{Zhandry2021QuantumRandomFunctions}]
		Let $\lambda$ be the security parameter, $n,l =\cO(\poly(\lambda))$ be the input length and output length, respectively. Let $\mathcal{K}_{\lambda}$ be key space.
		A family of efficient classical functions $\{f_{k,n}: \{0,1\}^n \rightarrow \{0,1\}^l \}_{\lambda, k \in \mathcal{K_{\lambda}}}$ is a family of quantum-secure pseudo-random functions if for any $\poly(\lambda)$-time quantum algorithm $\mathcal{A}$ with superposition query to the oracle, we have that
		\begin{equation}
			\abs{\Pr_{k} [\mathcal{A}^{f_{k,n}}(1^\lambda) = 1] -  \Pr_{r}[\mathcal{A}^{r}(1^\lambda) = 1]} \le \negl(\lambda)
		\end{equation}
		where $r:\{0,1\}^n\to\{0,1\}^l$ is a truly random function.
	\end{definition}
	
	\begin{definition}[Quantum-secure pseudo-random permutations, QPRP \cite{Zhandry2025notequantumsecure}]
		Let $\lambda$ be the security parameter, $n=\cO(\poly(\lambda))$ be the input length and $\mathcal{K}_{\lambda}$ be key space.
		A family of efficient classical permutations $\{f_{k,n}: \{0,1\}^n \rightarrow \{0,1\}^n \}_{\lambda, k \in \mathcal{K_{\lambda}}}$ is a family of quantum-secure pseudo-random permutations if for any $\poly(\lambda)$-time quantum algorithm $\mathcal{A}$ with superposition query to the oracle, we have that
		\begin{equation}
			\abs{\Pr_{k} [\mathcal{A}^{f_{k,n},f_{k,n}^{-1}}(1^\lambda) = 1] -  \Pr_{r}[\mathcal{A}^{r,r^{-1}}(1^\lambda) = 1]} \le \negl(\lambda)
		\end{equation}
		where $r:\{0,1\}^n\to\{0,1\}^n$ is a truly random permutation (i.e. uniformly sampled from all $n$-bit permutations).
	\end{definition}
	
	We denote $\cU(S)$ as the uniform distribution over a set $S$. Let $\mathbb Z_q$ be the ring of integers modulo $q$. The addition and multiplication of two elements over $\mathbb Z_q$ are under modulo $q$. Denote $\cD_{\mathbb Z_q,\beta^2}$ as the discrete Gaussian distribution with `variance' $\beta^2$ over $\mathbb Z_q$. In particular, for every $x\in \mathbb Z_q$,
	\begin{equation}
		\cD_{\mathbb Z_q,\beta^2}(x)=\frac{e^{-\frac{ |x|^2}{2\beta^2}}}{\sum_{z\in \mathbb Z_q}e^{-\frac{|z|^2}{2\beta^2}}}
	\end{equation}
	where $|x|:=\min(x,q-x)$. For a vector $\bmx=(x_1,x_2,\cdots,x_m)\in \mathbb Z_q^m$, define the infinity norm $\|\bmx\|_\infty$ of $\bmx$ as
	\begin{equation}
		\|\bmx\|_\infty:=\max_{i=1}^m|x_i|.
	\end{equation}
	
	\begin{definition}[Learning with errors problem (LWE) \cite{regev2009lattices}]
		In learning with errors problem with parameter $n,m(n),q(n),\beta(n)$ (denoted as $\LWE_{n,m,q,\beta}$), there is a secret $\bm s\gets \cU(\mathbb Z_q^n)$. 
		We are given a uniform random matrix $A\gets \cU(\mathbb Z^{n\times m}_q)$ and a vector $\bm y=A^T\bms+\bm e$ with $\bm e\gets \cD_{\mathbb Z_q,\beta^2}^m$. The goal is to recover the secret $\bms$.
	\end{definition}
	
	With these definitions in place, we now state the detailed versions of Theorem~\ref{thm:no-go-prior}.
	
	\begin{theorem}\label{thm:no-go-physical}
		Let $0\leq\varepsilon< \tfrac{1}{10}$ and $0\leq\delta<\tfrac1{3}$ be fixed constants, and suppose $n_A \ge 1, n_B = \omega(\log n)$. 
		Assume the existence of QPRF and QPRP.  
		Suppose there exists an algorithm $\mathcal{A}$ such that given classical descriptions of an efficiently measurable observable $O$, and access to copies of an efficiently preparable unknown input state $\ket{\psi}$, it estimates $o_{\psi}^{(2)}$ such that
		\begin{equation}
			\abs{\mathcal{A}\bigl(\ket{\psi}^{\otimes t}, O\bigr) - o_{\psi}^{(2)}} < \varepsilon
		\end{equation}
		with probability at least $1-\delta$. Then, $\mathcal{A}$ requires $n^{\omega(1)}$ computational time. 
	\end{theorem}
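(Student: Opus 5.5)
We mirror the proof of Theorem~\ref{thm:universal_property}, replacing the statistical indistinguishability of random subset states from Haar-random states (Fact~\ref{fact:subset_state}) with a computational version built from QPRF and QPRP. We define two efficiently preparable families. The first is a \emph{pseudorandom subset state} $\ket{S_k}=\frac{1}{\sqrt m}\sum_{\bm u\in S_k}\ket{\bm u}$, where $S_k=\{f_k(\bm y):\bm y\in\{0,1\}^{\log m}\}$ and $f_k$ is a QPRP on $n$ bits. It is prepared by forming the uniform superposition over $\bm y$, computing $f_k(\bm y)$ into a fresh register, and uncomputing $\bm y$ with $f_k^{-1}$. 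The second is a \emph{pseudorandom phase state} $\ket{\phi_k}=2^{-n/2}\sum_{\bm u}\omega^{g_k(\bm u)}\ket{\bm u}$, where $g_k$ is a QPRF; by the standard pseudorandom-state results this is computationally indistinguishable from Haar-random for any polynomial number of copies. The proof then rests on a hybrid chain: pseudorandom subset states $\approx_c$ truly random subset states $\approx_s$ Haar-random states $\approx_c$ pseudorandom phase states.

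The steps are as follows.

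\emph{Step 1: the subset hybrid.} Replacing $f_k$ by a truly random permutation changes the acceptance probability of any polynomial-time algorithm by at most $\negl$, since the state preparation only makes oracle queries to $f,f^{-1}$. A random permutation's image of $\{0,1\}^{\log m}$ is a uniformly random $m$-subset.

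\emph{Step 2: the statistical hybrid.} With $m=n^{\omega(1)}$ and $m\le 2^{n_B/2}/\poly$, which is possible because $n_B=\omega(\log n)$, Fact~\ref{fact:subset_state} gives distance $\cO(t^2/d+t/\sqrt m+mt/d)=\negl$ for $t=\poly(n)$.

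\emph{Step 3: the phase hybrid.} Haar-random states are indistinguishable from the QPRF phase states, again by an oracle hybrid: a random phase function gives states whose $t$-copy moments are close to Haar.

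\emph{Step 4: the observable gap.} We take $O=\sum_{\bm x}(\ketbra{\bm x})^{\otimes 2}$ as in Eq.~\eqref{eq:measurable_observable}. It is efficiently measurable with $U=I$ and $f(\bm x_1,\bm x_2)=[\bm x_1=\bm x_2]$. Lemma~\ref{lem:o_2_subset} gives $o^{(2)}\ge 1-m^2/(2d_B)=1-\negl$ on truly random subset states, and so, with high probability via Markov's inequality, on the pseudorandom ones by Step 1. We can argue Step 1 directly: the collision-freeness of $B$-labels is a property of the set, and its probability under a QPRP differs from the random case by $\negl$. If that is awkward, we instead fix the estimator gap through the distinguisher itself.

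For phase states we compute $o^{(2)}$ directly rather than via Lemma~\ref{lem:o_2_Haar}. Each projected state is $\ket{\phi_{k,\bm z}}=d_A^{-1/2}\sum_{\bm x}\omega^{g(\bm x,\bm z)}\ket{\bm x}$, so $\tr(O\phi_{\bm z}^{\otimes2})=\sum_{\bm x}|\langle\bm x|\phi_{\bm z}\rangle|^4=1/d_A\le 1/2$ deterministically, since $n_A\ge1$. This gives a constant gap larger than $2\varepsilon$ for $\varepsilon<1/10$.

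\emph{Step 5: the distinguisher.} We threshold $\mathcal A$'s output at $3/4$. Because $\delta<1/3$, the distinguishing advantage between pseudorandom subset states and pseudorandom phase states is at least a constant. If $\mathcal A$ ran in polynomial time, counting each copy access as one unit, it would use $t=\poly(n)$ copies, each preparable in polynomial time. Summing the negligible hybrid gaps of Steps 1--3 then contradicts the bound, so $\mathcal A$ needs $n^{\omega(1)}$ time.

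The main obstacle is making the security parameter and scaling rigorous. QPRF/QPRP security is only against $\poly(\lambda)$ adversaries, so we set $\lambda=n$, choose $m=2^{\Theta(\sqrt{n_B}\log n)}$ or a similar function so that both $1/\sqrt m$ and $m^2/d_B$ are negligible, and verify that the QPRP/QPRF hybrids only require oracle-query access during state preparation. A secondary subtlety is the Haar-to-phase-state hybrid: if no earlier bound is available, we would instead compare truly random phase states directly with random subset states via Fact~\ref{fact:subset_state}, passing through Haar using the known $\cO(t^2/d)$ bound for random phase states.
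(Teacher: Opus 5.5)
\textbf{The gap is in Step 4, on the high side of the observable gap.} For a subset state, $o^{(2)}_{\ket{S}}=\#\{\text{distinct $B$-labels in } S\}/m$. So you need the first $m$ outputs of your QPRP $f_k$ on all $n$ bits to have essentially no $B$-label collisions.

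Your justification is that this ``property of the set'' has the same probability under a QPRP as under a random permutation. That does not follow from QPRP security, which only constrains polynomial-time tests. Detecting $B$-collisions among $m=n^{\omega(1)}$ points is not such a test in general:
\begin{itemize}
\item with forward queries only, it is collision finding on a super-polynomial domain;
\item with $f_k^{-1}$, it means searching all $d_A$ possible $A$-labels for a preimage in $\{0,1\}^{\log m}$, which is super-polynomial once $n_A=\omega(\log n)$.
\end{itemize}
The theorem must cover every $n_A\ge 1$. Nothing in the QPRP definition rules out a legitimate QPRP whose image of $\{0,1\}^{\log m}$ has many $B$-collisions, which would push $o^{(2)}$ below your threshold of $3/4$.

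Your fallback of fixing the gap ``through the distinguisher itself'' does not help. $\mathcal{A}$ is only promised to be correct on efficiently preparable states, and truly random $m$-subset states with $m=n^{\omega(1)}$ are not efficiently preparable. So the hybrid chain only yields $\Pr[\text{accept on pseudorandom subset states}]\approx\Pr[\text{accept on phase states}]\le\delta$. That is no contradiction if $o^{(2)}$ happens to be small on the former ensemble.

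\textbf{How the paper avoids this.} It applies the QPRP only to the $B$ register and generates the $A$-label with a separate QPRF:
\[
\ket{\psi_{f,g}}=m^{-1/2}\sum_{x<m}\ket{g(x)}_A\ket{f(x)}_B,
\]
with $f$ a permutation of $\{0,1\}^{n_B}$. Because $f$ is injective, the $B$-labels are distinct for every key. Every projected state is then a computational-basis state, and $o^{(2)}=1$ deterministically. The collision bound from Lemma~\ref{lem:o_2_subset} is used only between two information-theoretic ensembles (random permutation plus random function, versus uniform random $m$-subsets), where no efficiency is required.

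\textbf{Your low side is fine.} Pseudorandom phase states with $o^{(2)}=1/d_A$ exactly are a valid alternative to the paper's trick of applying $H^{\otimes n_A}$ to the same state and invoking Haar invariance. Your route costs an external pseudorandom-state theorem that the paper does not need.

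\textbf{Minor parameter point.} The choice $m=2^{\Theta(\sqrt{n_B}\log n)}$ can exceed $d_B$ when $n_B$ is only slightly superlogarithmic, e.g.\ $n_B=(\log n)^{3/2}$. A choice such as $m=2^{n_B/4}$ makes both $1/\sqrt m$ and $m^2/d_B$ negligible.
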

	
	\begin{theorem}\label{thm:no-go-physical-prior}
		Let $0\leq\varepsilon< \tfrac{1}{10}$ and $0\leq\delta<\tfrac1{3}$ be fixed constants, and suppose $n_A,n_B = \Omega(n^c)$ for some constant $0 < c \le 1$. 
		Assuming the $\LWE_{n,m,q,\beta}$ cannot be solved by any polynomial-time quantum algorithm for some $m=2n\log q,q=\poly(n),\beta\leq\frac{q}{32n^{1.5}\log q}$ with prime $q$.
		Suppose there exists an algorithm $\mathcal{A}$ such that given classical descriptions of a polynomial-sized quantum circuit $C$ for preparing a state $\ket\psi = C\ket{0}^{\otimes n}$, and an efficiently measurable observable $O$, it estimates $o_{\psi}^{(2)}$ such that
		\begin{equation}
			\abs{\mathcal{A}\bigl(C, O\bigr) - o_{\psi}^{(2)}} < \varepsilon
		\end{equation}
		with probability at least $1-\delta$. Then, $\mathcal{A}$ requires $n^{\omega(1)}$ computational time. 
	\end{theorem}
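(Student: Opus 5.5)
The plan is to reduce the decision version of LWE (DLWE) to estimating $o^{(2)}_\psi$ for a circuit we construct ourselves. Given an instance $(A,\bm y)$, where $\bm y$ is either $A^T\bms+\bm e$ or uniform on $\mathbb Z_q^m$, the reduction writes down a polynomial-size circuit $C_{A,\bm y}$ and a fixed efficiently measurable observable $O$. It then runs $\mathcal A(C_{A,\bm y},O)$ and thresholds the output. For prime $q=\poly(n)$ the standard search-to-decision equivalence turns a DLWE distinguisher into an LWE solver, so a polynomial-time $\mathcal A$ would contradict the assumption. Because the constants satisfy $\delta<1/3$, we can repeat the run and take a majority vote, so it suffices to have a constant gap larger than $2\varepsilon$ in $o^{(2)}_\psi$ between the two cases.

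The construction follows the claw-free/2-to-1 function paradigm. Let
\begin{equation}
\ket{\psi_{A,\bm y}}=\sum_{b\in\{0,1\}}\sum_{\bmx\in\mathbb Z_q^n}\sum_{\bm e'}\sqrt{\frac{\cD'(\bm e')}{2q^n}}\,\ket{b,\bmx,\bm e'}_{\mathsf A}\ket{A^T\bmx+b\bm y+\bm e'}_{\mathsf B},
\end{equation}
where $\cD'$ is a truncated discrete Gaussian of width $\beta'\gg\beta$ (smudging), so the state is efficiently preparable. The register $\mathsf A$ holds $(b,\bmx,\bm e')$ and the register $\mathsf B$ holds the image. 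Padding by idle qubits places $n_A,n_B=\Omega(n^c)$ within polynomial total size; the scaling $m=2n\log q$ fixes the ratio before padding. As in Theorem~\ref{thm:universal_property}, we take $O=\sum_{\bm w}(\ketbra{\bm w})^{\otimes 2}$ over computational basis strings of $\mathsf A$. This observable is diagonal, so it is efficiently measurable, and $o^{(2)}_\psi=\sum_{\bm z}p(\bm z)\sum_{\bm w}|\langle \bm w|\psi_{\bm z}\rangle|^4$ is the average collision probability of the projected states.

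\emph{Uniform case.} For $m=2n\log q$ and uniform $(A,\bm y)$, the lattice generated by the columns of $[A^T\mid\bm y]$ has, with overwhelming probability, no nonzero vector of $\ell_\infty$-norm below $q/4$ (a counting/union bound). Since $\|\bm e'\|_\infty\le \beta'\sqrt{n}\log n<q/8$ after truncation, the map $(b,\bmx,\bm e')\mapsto A^T\bmx+b\bm y+\bm e'$ is injective. Every projected state is then a computational basis state, and $o^{(2)}_\psi=1-\negl(n)$.

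\emph{LWE case.} An outcome $\bm z$ has preimages $(0,\bmx,\bm e')$ and $(1,\bmx-\bms,\bm e'-\bm e)$. Their amplitudes are $\sqrt{\cD'(\bm e')}$ and $\sqrt{\cD'(\bm e'-\bm e)}$. Because $\|\bm e\|$ is small compared with $\beta'$ (this is where $\beta\le q/(32n^{1.5}\log q)$ enters, leaving room for $\beta'$ with $\beta\sqrt m\ll\beta'$ and $\beta'\sqrt{m}\log n\ll q$), a constant fraction of the probability mass lands on $\bm z$ whose two amplitudes are within a constant factor of each other. These projected states have collision probability at most some $c<1$, so $\bE\, o^{(2)}_\psi\le 1-\Omega(1)$. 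Injectivity of $A^T$ on the remaining directions again holds with overwhelming probability, so there are no further preimages. The resulting gap is a constant, and the truncated Gaussian parameters can be tuned to make it exceed $2\varepsilon$ for $\varepsilon<1/10$.

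The main obstacle is the quantitative LWE case. We must simultaneously (i) choose $\beta'$ so that the Gaussian ratio $\cD'(\bm e'-\bm e)/\cD'(\bm e')$ is $\Theta(1)$ with constant probability over $\bm e'$, (ii) keep the truncation radius below $q/8$ so injectivity and the clean two-preimage structure survive, and (iii) handle boundary effects of truncation. I would first try the standard smudging estimate: the statistical distance between $\cD'$ and $\cD'+\bm e$ is $O(\|\bm e\|_2/\beta')$. Choosing $\beta'=\Theta(\sqrt m\,\beta)$ up to a constant, so that this distance is a fixed constant below $1$, should already give a constant-fraction balanced region. This holds while $\beta'\sqrt m\log n<q/8$, which is guaranteed by the stated bound on $\beta$. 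Combined with the majority-vote amplification and search-to-decision, this would complete the proof of Theorem~\ref{thm:no-go-physical-prior}.
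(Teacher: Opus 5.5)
Your proposal is correct. It shares the paper's skeleton: reduce $\DLWE$ to estimating $o^{(2)}_\psi$ for the claw superposition $\ket{b,\bmx,\text{noise}}_A\ket{A^T\bmx+\text{noise}+b\bm y}_B$. For uniform $\bm y$ its projected states are computational-basis states; for LWE $\bm y$ they are two-term superpositions $\alpha_0\ket{0,\bmx,\bm e'}+\alpha_1\ket{1,\bmx-\bms,\bm e'-\bm e}$. It differs from the paper in three ingredients.

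(i) Noise. The paper floods with uniform box noise $\bm u\in[-\eta,\eta]^m$, $\eta=\lceil 2\beta m\sqrt n+1\rceil$. Whenever $\bm u+\bm e$ stays in the box the two amplitudes are exactly equal, so $o^{(2)}_\psi\ge\frac12(1-\frac{1}{2m})^m\ge\frac14$ is a one-line count, with no smudging lemma and nothing to tune. Your Gaussian only needs $\beta'\gtrsim\norm{\bm e}_2\approx\beta\sqrt m$, instead of a width of order $\beta m\sqrt n$, so it would survive somewhat larger noise rates. The price is that the amplitudes are only approximately balanced. The cleanest way to close your step is the exact identity (for good $A$) $o^{(2)}_\psi=1-\frac12\sum_{\bm e'}\frac{2\cD'(\bm e')\cD'(\bm e'-\bm e)}{\cD'(\bm e')+\cD'(\bm e'-\bm e)}\le\frac12+\frac12\mathrm{SD}(\cD',\cD'+\bm e)$, which holds because a harmonic mean dominates the minimum. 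The gap is then at least $\frac12(1-\mathrm{SD})$ and at most $\frac12$. You therefore need $\mathrm{SD}<1-4\varepsilon$ (e.g.\ $\mathrm{SD}\le\frac35$), i.e.\ $\beta'$ a large enough constant multiple of $\norm{\bm e}_2$; this still fits under your truncation radius $<q/8$. Approximate preparation of Gaussian amplitudes is harmless by Lemma~\ref{lem:bound-high-order-distance}.

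(ii) Observable. Your collision observable and the paper's ``$b$-bits disagree'' observable are complementary here. On two-term states they give $1-2|\alpha_0|^2|\alpha_1|^2$ versus $2|\alpha_0|^2|\alpha_1|^2$, and on basis states $1$ versus $0$, so the gap merely switches sides.

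(iii) Uniform case. You obtain exact injectivity, hence $o^{(2)}_\psi=1$, for all but a $q^{-\Omega(n)}$ fraction of $(A,\bm y)$. This comes from a union bound over the $q^{n+1}$ vectors $A^T\bmx+b\bm y$, using $2^{-m}=q^{-2n}$. The paper instead bounds $\bE_{\bm y}\,o^{(2)}_\psi$ by a union bound over $\bmx'$ and then applies Markov. Both suffice for the distinguisher.
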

	
	The parameters we choose are standard for LWE. Regev’s quantum reduction~\cite{regev2009lattices} implies that this LWE parameter regime is at least as hard as $\tilde \Theta(n^{2.5})$-approximation of the shortest vector problem, which does not have a sub-exponential time algorithm currently.
	
	\subsection{Proof of Theorem \ref{thm:no-go-physical}} \label{app:proof_physical_state}
	
	Now, we construct two ensembles of efficiently preparable states that are computationally indistinguishable by every polynomial-time algorithm, and show that they can be distinguished efficiently if $o^{(2)}$ can be estimated efficiently.
	Since each copy can be prepared in polynomial time, one may equivalently regard the algorithm as having unit-time access to each copy, which does not affect the resulting super-polynomial lower bound.
	This leads to the conclusion that estimating $o^{(2)}$ requires super-polynomial time.
	
	Fix a parameter $m = n^{\omega(1)}$ such that
	\begin{equation}\label{eq:choose_nB}
		n_B = \Omega\left\{\log [m^2 n^{\omega(1)}] \right\}.
	\end{equation}
	This can be done since $n_B=\omega(\log n)$.
	Let $f$ be drawn from a family of QPRPs on $n(\lambda) = n_B$, and let $g$ be drawn from a family of QPRFs with $n(\lambda) = n_B$ and $l(\lambda) = n_A$.  Consider the following two states:
	\begin{equation}
		\ket{\psi_{f,g}}  = \frac{1}{\sqrt{m}} \sum_{x=0}^{m-1}  \ket{g(x)}_A \ket{f(x)}_B
	\end{equation}
	and
	\begin{equation}
		\ket{\phi_{f,g}}= (H^{\otimes n_A}\otimes I^{\otimes n_B}) \ket{\psi_{f,g}} = \frac{1}{\sqrt{m}} \sum_{x=0}^{m-1} \left(H^{\otimes n_A} \ket{g(x)}_A\right) \ket{f(x)}_B,
	\end{equation}
	where $H$ denotes the Hadamard gate.
	The two ensembles are given by $\mathcal{E}_{\psi} \coloneqq \{\ket{\psi_{f,g}}\}$ and $\mathcal{E}_{\phi} \coloneqq \{\ket{\phi_{f,g}}\}$.
	
	Both kinds of states are efficiently preparable. Taking $\ket{\psi_{f,g}}$ as an example, one can first prepare the superposition $1/\sqrt{m}\sum_{x=0}^{m-1}\ket{x}$, and then compute $g$ to prepare $1/\sqrt{m}\sum_{x=0}^{m-1}\ket{g(x)}\ket{x}$. Applying the invertible function $f$ to the second register then obtains $\ket{\psi_{f,g}}$. The state $\ket{\phi_{f,g}}$ can be prepared by applying $H^{\otimes n_A}$ to $\ket{\psi_{f,g}}$.
	
	Now, we show that the two ensembles have distinct $o^{(2)}$ values for a certain efficiently measurable observable. The observable is chosen as in Eq.~\eqref{eq:measurable_observable}. 
	This observable is efficiently measurable because the underlying function $f(\bmx_1, \bmx_2) = \mathbf{1}[\bmx_1 = \bmx_2]$ can be computed efficiently.
	
	\begin{lemma}\label{lem:physical_o_value}
		Given the ensembles $\mathcal{E}_{\psi}$ and $\mathcal{E}_{\phi}$ defined above, we have
		\begin{equation}
			\begin{split}
				o_{\psi}^{(2)} = 1, \quad &\forall \psi \in \mathcal{E}_{\psi}, \\
				o_{\phi}^{(2)} = \frac{1}{d_A}, \quad &\forall \phi \in \mathcal{E}_{\phi}.
			\end{split}
		\end{equation}
	\end{lemma}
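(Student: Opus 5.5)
The plan is to compute the projected ensembles of both states directly and then evaluate $o^{(2)}$ with the observable $O=\sum_{\bmx}(\ketbra{\bmx})^{\otimes 2}$ from Eq.~\eqref{eq:measurable_observable}.

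\textbf{Step 1: the projected ensemble of $\ket{\psi_{f,g}}$.} Since $f$ is a permutation of $\{0,1\}^{n_B}$, the labels $f(0),\dots,f(m-1)$ are pairwise distinct. Here $m\le d_B$ is implicit in the choice of $n_B$ in Eq.~\eqref{eq:choose_nB}. Measuring $B$ in the computational basis therefore yields an outcome $\bmz=f(x)$ for some $x<m$, each with probability $p(\bmz)=1/m$. The corresponding projected state is exactly the computational-basis state $\ket{g(x)}_A$. No superposition appears on $A$, because the preimage of $\bmz$ under $f$ restricted to $\{0,\dots,m-1\}$ is a single point. Outcomes outside $f(\{0,\dots,m-1\})$ occur with probability zero.

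\textbf{Step 2: evaluate $o^{(2)}$ for $\ket{\psi_{f,g}}$.} For a basis state $\ket{\bm y}$ we have
\[
\tr\bigl[O\,(\ketbra{\bm y})^{\otimes 2}\bigr]=\sum_{\bmx}|\langle \bmx|\bm y\rangle|^4=1 .
\]
Averaging this over the ensemble gives $o_{\psi}^{(2)}=1$.

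\textbf{Step 3: the projected ensemble of $\ket{\phi_{f,g}}$.} The operator $H^{\otimes n_A}$ acts only on $A$, so it commutes with the measurement on $B$. The outcome distribution is therefore the same as in Step 1, and the projected states become $H^{\otimes n_A}\ket{g(x)}$. Each of these has $|\langle\bmx|H^{\otimes n_A}\ket{g(x)}|^2=1/d_A$ for every $\bmx$. Consequently
\[
\tr\bigl[O\,\phi_{\bmz}^{\otimes 2}\bigr]=\sum_{\bmx} d_A^{-2}=1/d_A ,
\]
and averaging gives $o_{\phi}^{(2)}=1/d_A$.

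There is no real obstacle. The only point needing care is the injectivity of $f$, which guarantees that every projected state is a single basis vector rather than a superposition; this holds because $f$ is a permutation. The values of $g$ are irrelevant, even when $g$ has collisions.
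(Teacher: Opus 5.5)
Your proof is correct and follows the same approach as the paper. Injectivity of the permutation $f$ makes every projected state a single computational-basis vector $\ket{g(x)}$, or $H^{\otimes n_A}\ket{g(x)}$ for $\mathcal{E}_{\phi}$. Evaluating $\tr[O\,\cdot^{\otimes 2}]$ with $O=\sum_{\bmx}(\ketbra{\bmx})^{\otimes 2}$ then gives $1$ and $1/d_A$ respectively.
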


	\begin{proof}
		For any state $\ket{\psi} \in \mathcal{E}_{\psi}$ and $\phi = H^{\otimes n_A} \ket{\psi}$, 
		after measuring the $B$ subsystem, each projected state $\psi_{\bm{z}}$ has the form $\psi_{\bm{z}} = \ket{x_{\bm{z}}}$ for a certain $x_{\bm{z}}$, as there is no collision on the $B$ labels due to $f$ being a permutation. 
		Meanwhile, each state $\phi_{\bm{z}}$ has the form $\phi_{\bm{z}} = H^{\otimes n_A} \ket{x_{\bm{z}}}_A$. The former has $\tr(\psi_{\bm{z}}^{\otimes 2} O) = 1$, and the latter has $\tr(\phi_{\bm{z}}^{\otimes 2} O) = \frac{1}{d_A}$. Therefore, every state from $\mathcal{E}_{\psi}$ has $o_{\psi}^{(2)} = 1$, and every state from $\mathcal{E}_{\phi}$ has $o_{\phi}^{(2)} = \frac{1}{d_A}$.
	\end{proof}
	
	Next, we prove that these two ensembles are computationally indistinguishable.
	
	\begin{lemma}
		\label{lem:indistinguishable}
		For every $t=\poly(n)$ and any $\poly(n)$-time quantum algorithm $\mathcal A$,
		\begin{equation}
			\left|\Pr_{\psi\sim\cE_\psi}[\mathcal A(\psi^{\otimes t})=1]-\Pr_{\phi\sim\cE_\phi}[\mathcal A(\phi^{\otimes t})=1]\right|\leq \negl(n).
		\end{equation}
	\end{lemma}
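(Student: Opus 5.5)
We prove Lemma~\ref{lem:indistinguishable} by a hybrid argument. The pseudo-random objects $f,g$ are replaced by truly random ones, and the resulting ideal ensembles are then compared statistically.

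\emph{Hybrid 1 (replace $f$).} Replace the QPRP $f$ by a truly random permutation $r$ on $\{0,1\}^{n_B}$. Any $\poly(n)$-time distinguisher $\mathcal A$ acting on $t=\poly(n)$ copies of $\ket{\psi_{f,g}}$ or $\ket{\phi_{f,g}}$ yields an oracle algorithm with quantum query access to $f$. It prepares each copy coherently by querying $f$ once, sampling $g$ itself from the QPRF family. Its advantage in distinguishing $f$ from $r$ is therefore at least the change in $\mathcal A$'s acceptance probability, which is $\negl(n)$ by QPRP security. Here $\lambda$ is taken polynomially related to $n$, which is allowed since $n_B=\omega(\log n)$ and $n_B\le n$.

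\emph{Hybrid 2 (replace $g$).} In the same way, replace the QPRF $g$ by a truly random function $h:\{0,1\}^{n_B}\to\{0,1\}^{n_A}$, with the reduction simulating $r$ itself. A truly random permutation cannot be sampled efficiently. I would handle this by performing the $g$-swap first, while $f$ is still the efficient QPRP, and the $f$-swap second. Alternatively, Zhandry's result can be used: a $2q$-wise independent function is perfectly indistinguishable from a random one against $q$ queries.

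After both swaps the two ideal states are
\[
\ket{\psi_{r,h}}=\tfrac{1}{\sqrt m}\sum_{x<m}\ket{h(x)}\ket{r(x)}, \qquad \ket{\phi_{r,h}}=(H^{\otimes n_A}\otimes I)\ket{\psi_{r,h}}.
\]
Since $r$ is a uniform permutation, its image set $T=r(\{0,\dots,m-1\})$ is a uniform $m$-subset of $\{0,1\}^{n_B}$. The labels $h(x)$ are i.i.d.\ uniform over $\{0,1\}^{n_A}$, and the indexing by $x$ is irrelevant. So the $t$-copy mixture of $\psi_{r,h}$ is $\EE{T,h}\bigl(\tfrac1m\sum_{\bm z,\bm z'\in T}\ket{h_{\bm z}}\bra{h_{\bm z'}}\otimes\ket{\bm z}\bra{\bm z'}\bigr)^{\otimes t}$. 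In the expectation over $h$, every term in which some $\bm z$ appears an unequal number of times on the ket and bra side vanishes. This is exactly the conjugation invariance under the random diagonal sign/phase group $\{Z^{\bm a}\}$ on $A$, correlated with $\bm z$. For the $\phi$ ensemble the same holds with $X$-type shifts: $H^{\otimes n_A}\ket{\bm x}$ for uniform $\bm x$ is invariant under $Z^{\bm a}$ applied after $H$. The cleaner route is to compare both ideal states to a common third ensemble.

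\emph{Common target.} I would show that for $t$ copies with $t^2\ll m$ and $m^2\ll d_B$, both ideal ensembles are $O(t^2/m)$-close to the state obtained by sampling $t$ \emph{distinct} labels from $T$ with independent uniform $A$-states. Using the collision-free reduction idea from Lemma~\ref{lem:collision}, with collision probability $\binom t2/m$, only permutation-related index tuples survive the averaging over $h$. The symmetrized surviving terms then depend on $h$ only through $\EE{\bm x}\ket{\bm x}\bra{\bm x}^{\otimes}$-type averages on distinct positions. For distinct positions these are products of $\EE{\bm x}\ket{\bm x}\bra{\bm x}=I/d_A$ in the $\psi$ case and $\EE{\bm x}H\ket{\bm x}\bra{\bm x}H=I/d_A$ in the $\phi$ case. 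The off-diagonal permutation terms require $\EE{\bm x}\ket{\bm x}\bra{\bm x}\otimes\ket{\bm x}\bra{\bm x}$-type cross terms between different copies, which also involve only a single $\bm x$ per $\bm z$. In the $\phi$ case these cross terms are obtained from those of the $\psi$ case by conjugating with the same $H^{\otimes n_A}$ on each copy. Hence the two $t$-copy mixtures are exactly related by the global unitary $(H^{\otimes n_A})^{\otimes t}$ on the collision-free part, and this unitary commutes with the averaged operator. The averaged operator is diagonal-in-$\bm z$ structured with all $A$ parts of the form $\sum_{\bm x}(\ket{\bm x}\bra{\bm x})^{\otimes k}$. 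These must then be checked to be $H$-invariant up to the overall mixing over distinct labels.

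This last step is the main obstacle. $\sum_{\bm x}\ket{\bm x}^{\otimes 2}\bra{\bm x}^{\otimes 2}$ is \emph{not} $H^{\otimes 2}$-invariant, and indeed the observable $O$ detects exactly this difference. I therefore expect the argument to require that each $\bm z$ label in the symmetrized state carries only a single $A$-copy, with repeated labels excluded by collision-freeness across copies. One then needs $\Pr[\text{two copies' measured labels coincide}]\approx t^2/m$, which is $\negl$ once $m=n^{\omega(1)}$, matching Eq.~\eqref{eq:choose_nB}. Combining the hybrids gives total advantage $\negl(n)+O(t^2/m)+O(m^2/d_B)=\negl(n)$.
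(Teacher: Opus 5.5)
Your computational hybrids are fine, and doing the $g$-swap while $f$ is still the efficient QPRP (or simulating the random function by a $2q$-wise independent one) is essentially what the paper does for $\mathcal{E}_{\psi,1}\to\mathcal{E}_{\psi,2}$. The gap is in the statistical step where you compare $\mathbb{E}\,\psi_{r,h}^{\otimes t}$ and $\mathbb{E}\,\phi_{r,h}^{\otimes t}$ directly.

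\textbf{The vanishing claim fails.} You assert that after averaging over $h$, every term whose label tuples $\vec z,\vec z'$ are not permutations of each other vanishes. That is false here. In the no-cloning section this vanishing came from the random relative phases of a Haar-random purification. The ideal $\psi$ ensemble has no such phases: its $A$-states are uniformly random computational-basis states. So for $z\neq z'$ you get $\mathbb{E}_h\ket{h_z}\bra{h_{z'}}=d_A^{-1}\ket{+^{n_A}}\bra{+^{n_A}}\neq 0$, whereas in the $\phi$ ensemble the same term becomes $d_A^{-1}\ket{0^{n_A}}\bra{0^{n_A}}$. Already at $t=1$, where everything is collision-free, this gives
\[
\rho_\psi-\rho_\phi=\frac{m-1}{d_Ad_B(d_B-1)}\bigl(\ket{+^{n_A}}\bra{+^{n_A}}-\ket{0^{n_A}}\bra{0^{n_A}}\bigr)\otimes(J_B-I_B)\neq 0,
\]
with $J_B$ the all-ones matrix on $B$. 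Hence the two collision-free mixtures are \emph{not} exactly related by $(H^{\otimes n_A})^{\otimes t}$. Your observation that the permutation-related, distinct-label terms are $d_A^{-t}$ times permutation operators, and hence $H$-invariant, is correct. But the non-permutation terms remain and differ between the two ensembles. They are small (here $O(m/d)$), yet proving this for $t=\poly(n)$ copies is exactly the nontrivial content of the subset-state pseudorandomness theorem. There are super-polynomially many index patterns, so a term-by-term triangle inequality does not obviously suffice. Your final $O(m^2/d_B)$ term is never derived from anything in your argument.

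\textbf{The missing idea (the paper's route).} The ideal ensemble $\{\ket{\psi_{r,h}}\}$ is exactly the uniform distribution over subset states $\ket{S}$ with $S\in\mathcal{T}$, i.e.\ with distinct $B$-labels. This is within total variation $m(m-1)/(2d_B)$ of a uniformly random $m$-subset state. By Fact~\ref{fact:subset_state}, the $t$-copy mixture of that ensemble is $O(t^2/d+t/\sqrt m+mt/d)$-close to the Haar $t$-copy mixture. Applying $H^{\otimes n_A}\otimes I$ to every hybrid gives the $\phi$-chain, whose endpoint coincides with Haar by unitary invariance. This avoids tracking which terms survive the $h$-average. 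A minor point: the collision-free projection costs $O(t/\sqrt m)$ in trace distance, not $O(t^2/m)$, since $D=\sqrt{1-F}$ for pure states; this is still negligible.
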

	\begin{proof}
		Select $\lambda=n$ as the security parameter.
		We first define the following ensembles. Let
		\begin{itemize}
			\item $\cE_{\psi,0} \coloneqq \cE_\psi$;
			\item $\cE_{\psi,1} \coloneqq \{\ket{\psi_{r,g}}\}$ where $r$ is a truly random permutation and $g$ is sampled from the family of QPRFs as in $\cE_\psi$.
			\item $\cE_{\psi,2} \coloneqq \{\ket{\psi_{r_1,r_2}}\}$ where $r_1$ is a truly random permutation and $r_2$ is a truly random function;
			\item $\cE_{\psi,3} \coloneqq \{\ket{S}\}$ be the ensemble of subset state where $S\sim \binom{[d]}{m}$. 
			\item $\cE_{\psi,4} \coloneqq \mathrm{Haar}(n)$ be the ensemble of Haar-random states;
			\item $\cE_{\phi,i}=\{(H_A^{\otimes n_A}\otimes I_B)\ket{\psi}: \psi \sim \mathcal{E}_{\psi,i}\}$ for each $i\in[0,4]$.
		\end{itemize}
		
		By the unitary invariance of Haar measure, $\cE_{\psi,4}$ and $\cE_{\phi,4}$ are identical. To show that $\cE_\psi=\cE_{\psi,0}$ and $\cE_\phi=\cE_{\phi,0}$ are computationally indistinguishable, we only need to show that every $\cE_{\psi,i}$ and $\cE_{\psi,i+1}$ are computationally indistinguishable given $t$ samples and apply a hybrid argument. 
		
		We first show that $\cE_{\psi,0}$ and $\cE_{\psi,1}$ are computationally indistinguishable for any $\poly(n)$ time adversary. Suppose, for contradiction, that there exists a $\poly(n)$-time adversary $\mathcal{B}$ that can distinguish them with non-negligible advantage by taking $t$ copies. Then, we can construct a polynomial-time algorithm $\mathcal{A}$ to break the pseudo-random property of $f$ as follows:
		\begin{equation}
			\mathcal{A}^{f,f^{-1}}(1^\lambda) = \mathcal{B}(\ket{\psi_{f,g}}^{\otimes t}),
		\end{equation}
		where $g$ is sampled from the QPRF ensemble. This contradicts the pseudo-random property of $f$, and thus $\mathcal{E}_{\psi,0}$ and $\mathcal{E}_{\psi,1}$ are computationally indistinguishable.
		
		To show that $\cE_{\psi,1}$ and $\cE_{\psi,2}$ are computationally indistinguishable, we again assume the existence of a $\poly(n)$-time distinguisher $\cB$, and construct $\cA$ as follows:
		\begin{equation}
			\cA^{g}(1^\lambda)= \mathcal B(\ket{\psi_{f,g}}^{\otimes t}),
		\end{equation}
		where $f$ is sampled from the QPRP ensemble.
		By the pseudo-random property of $f$, the output distribution will be close to $\mathcal B(\ket{\psi_{r_1,g}}^{\otimes t})$
		where $r_1$ is a truly random permutation. Therefore, $\mathcal{A}^{g}(1^\lambda)$ and $\mathcal{A}^{r_2}(1^\lambda)$ are distinguishable. This contradicts the pseudo-random property of $g$, and thus $\mathcal{E}_{\psi,1}$ and $\mathcal{E}_{\psi,2}$ are computationally indistinguishable.
		
		For $\mathcal{E}_{\psi,2}$ and $\mathcal{E}_{\psi,3}$, recall that in the proof of Lemma~\ref{lem:o_2_subset} we defined $\mathcal{T} \subseteq \binom{[d]}{m}$ as the collection of $m$-element subsets whose elements carry pairwise distinct $B$-labels. The ensemble $\mathcal{E}_{\psi,2}$ is exactly the subset states uniformly distributed over $\mathcal{T}$. 
		The statistical distance between $\mathcal{E}_{\psi,2}$ and $\mathcal{E}_{\psi,3}$ is at most the total variation distance between the uniform distribution over $\binom{[d]}{m}$ and over $\mathcal{T}$, which is $1 - |\mathcal{T}|/\binom{d}{m}$. In the proof of Lemma~\ref{lem:o_2_subset}, we have shown that $1 - |\mathcal{T}|/\binom{d}{m} \leq \frac{m(m-1)}{2d_B}$, which is negligible for the choice of $m$ in Eq.~\eqref{eq:choose_nB}. Therefore, $\mathcal{E}_{\psi,2}$ and $\mathcal{E}_{\psi,3}$ are statistically indistinguishable and hence computationally indistinguishable.

		By Fact \ref{fact:subset_state}, $\cE_{\psi,3}$ and $\cE_{\psi,4}$ are statistically indistinguishable for $t=\poly(n)$, and hence they are computationally indistinguishable.

		We have established that the ensembles $\cE_{\psi,i}$ and $\cE_{\psi,i+1}$ are computationally indistinguishable for each $i$. This implies that each $\cE_{\phi,i}$ and $\cE_{\phi,i+1}$ also are computationally indistinguishable. Now we apply a standard hybrid argument. Consider any $\poly(n)$-time adversary $\mathcal A$. The advantage of $\mathcal{A}$ distinguishing the initial hybrid $\mathcal{E}_\psi = \mathcal{E}_{\psi,0}$ and the final hybrid $\mathcal{E}_\phi = \mathcal{E}_{\phi,0}$ is bounded by 
		\begin{equation}\begin{split}
				\left|\Pr_{\psi\sim\cE_\psi}[\mathcal A(\psi^{\otimes t})=1]-\Pr_{\phi\sim\cE_\phi}[\mathcal A(\phi^{\otimes t})=1]\right|
				&\leq \sum_{i=0}^3\left|\Pr_{\psi \sim\cE_{\psi,i}}[\mathcal A(\psi^{\otimes t})=1]-\Pr_{\psi\sim\cE_{\psi,i+1}}[\mathcal A(\psi^{\otimes t})=1]\right|\\
				&\quad +\sum_{i=0}^3\left|\Pr_{\phi\sim\cE_{\phi,i+1}}[\mathcal A(\phi^{\otimes t})=1]-\Pr_{\phi\sim\cE_{\phi,i}}[\mathcal A(\phi^{\otimes t})=1]\right|\\
				&\leq \negl(\lambda).
		\end{split}\end{equation}
		This concludes the proof.
	\end{proof}
	
	\begin{proof}[Proof of Theorem \ref{thm:no-go-physical}]
		We construct an algorithm $\mathcal{B}$ to distinguish $\cE_\psi$ and $\cE_\phi$ given $t$ copies as follows. It first obtains a number $a = \mathcal{A}(\ket{\psi}^{\otimes t}, O)$ for $O$ given in Eq.~\eqref{eq:measurable_observable}. Then, it outputs $1$ if $a \geq 3/4$, and $0$ otherwise. 
		
		By Lemma \ref{lem:physical_o_value} and the definition of $\mathcal{A}$ in Theorem \ref{thm:no-go-physical}, we have that 
		\begin{equation}
			\begin{split}
				\Pr_{\psi\sim\cE_\psi}[\mathcal B(\psi^{\otimes t})=1] \ge 1-\delta, \\
				\Pr_{\phi\sim\cE_\phi}[\mathcal B(\phi^{\otimes t})=0] \ge 1-\delta.
			\end{split}
		\end{equation}
		This gives that 
		\begin{equation}
			\left|\Pr_{\psi\sim\cE_\psi}[\mathcal B(\psi^{\otimes t})=1] -  \Pr_{\phi\sim\cE_\phi}[\mathcal B(\phi^{\otimes t})=1] \right| \ge 1 - 2\delta,
		\end{equation}
		By Lemma \ref{lem:indistinguishable}, $\mathcal{B}$ requires $\omega(\poly(n))$ computational time, and therefore so does $\mathcal{A}$.
	\end{proof}
	
	\subsection{Proof of Theorem \ref{thm:no-go-physical-prior}} \label{app:no-go-physical-prior}
	We first introduce the decisional version of the LWE problem.
	\begin{definition}[Decisional learning with errors problem (DLWE) \cite{regev2009lattices}]
		In the decisional learning with errors problem with parameter $n,m(n),q(n),\beta(n)$ (denoted as $\DLWE_{n,m,q,\beta}$), we are given a uniform random matrix $A\gets \cU(\mathbb Z^{n\times m}_q)$ and a vector $\bm y\in\mathbb Z_q^m$. We are asked to distinguish which distribution $y$ is drawn from:
		\begin{itemize}
			\item LWE distribution: $\bm y=A^T\bms+\bm e$ for $\bms\gets \cU(\mathbb Z_q^n)$ and $\bm e\gets \cD_{\mathbb Z_q,\beta^2}^m$;
			\item Uniform distribution: $\bm y\gets \cU(\mathbb Z_q^m)$.
		\end{itemize}
	\end{definition}
	The following fact shows that DLWE is as hard as LWE.
	\begin{fact}[Decision to search reduction for LWE {\cite[Lemma 4.2]{regev2009lattices}}]
		\label{fact:desicion-to-search}
		For $2\leq q\leq \poly(n)$ which is a prime, if there exists a classical/quantum $T(n)$-time algorithm $\cA$ to solve $\DLWE_{n,m,q,\beta}$ with failure probability exponentially small in $n$, then there exists a $T(n)\cdot \poly(n)$-time classical/quantum algorithm to solve $\LWE_{n,m,q,\beta}$ with failure probability exponentially small in $n$.
	\end{fact}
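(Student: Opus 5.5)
The plan is Regev's standard coordinate-by-coordinate guess-and-check reduction: recover $\bms=(s_1,\dots,s_n)$ one coordinate at a time, using the $\DLWE$ distinguisher $\cA$ as an oracle that certifies each guess. Fix a coordinate $i\in[n]$ and a candidate value $g\in\mathbb Z_q$. From the given $\LWE$ instance $(A,\bm y)$, with columns $\bm a_j$ and entries $y_j$, I will build a transformed instance as follows. Draw fresh uniform $r_1,\dots,r_m\in\mathbb Z_q$ and set
\begin{equation}
\bm a_j' \coloneqq \bm a_j + r_j \bm e_i,\qquad y_j' \coloneqq y_j + r_j g .
\end{equation}
Since $\bm a_j'^{T}\bms=\bm a_j^{T}\bms+r_j s_i$, this gives
\begin{equation}
y_j'=\bm a_j'^{T}\bms+e_j+r_j(g-s_i).
\end{equation}
Because $\bm a_j$ is uniform and independent of $r_j$, the column $\bm a_j'$ is uniform and independent of $r_j$. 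Hence the new matrix $A'$ is uniform, and the $r_j$ remain uniform conditioned on $A'$.

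There are two cases. If $g=s_i$, then $(A',\bm y')$ is distributed exactly as an $\LWE$ sample with the same secret and error. If $g\neq s_i$, then $g-s_i$ is invertible because $q$ is prime, so $r_j(g-s_i)$ is uniform and independent of everything else. Thus $\bm y'$ is uniform on $\mathbb Z_q^m$ and independent of $A'$, which is exactly the uniform branch of $\DLWE$.

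The hypothesis on $\cA$ holds on average over the secret, so I also re-randomize the secret. Draw $\bm s'\gets\cU(\mathbb Z_q^n)$ and replace $\bm y'$ by $\bm y'+A'^{T}\bm s'$. In the correct-guess case this is an $\LWE$ instance with uniform secret $\bms+\bm s'$, independent of everything else. In the wrong-guess case $\bm y'$ stays uniform. Each test is therefore a faithful $\DLWE$ instance, and $\cA$ errs on it with probability exponentially small in $n$.

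The reduction loops over all $n\cdot q=\poly(n)$ pairs $(i,g)$, runs $\cA$ on the corresponding transformed instance, and sets $s_i$ to the unique $g$ labelled ``$\LWE$''. A union bound over the $nq$ tests shows that every coordinate is recovered with failure probability $nq\cdot 2^{-\Omega(n)}$, which is still exponentially small. The total running time is $nq\cdot(T(n)+\poly(n))=T(n)\cdot\poly(n)$. The transformation is classical, so the reduction is classical or quantum exactly when $\cA$ is.

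The main obstacle is sample reuse. We hold only the single given instance of $m$ samples, so the $nq$ transformed instances all derive from the same $(A,\bm y)$ and are correlated. My approach is to argue marginally: for each fixed $(i,g)$, the transformed instance has exactly the correct $\DLWE$ distribution, averaged over the original instance and the fresh randomness $(r_j,\bm s')$. A union bound needs only these marginal failure bounds, not independence across tests.
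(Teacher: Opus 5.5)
Your proposal is correct and is essentially the argument of Regev's Lemma~4.2, which the paper cites without giving a proof of its own. The core is the same: add $r_j$ to the $i$-th entry of each column and $r_j g$ to $y_j$. This keeps the LWE distribution when $g=s_i$ and makes the instance uniform otherwise, because $q$ is prime, and one then tests all $nq=\poly(n)$ guesses.

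The one real difference is sample reuse. Regev has a sample oracle and draws fresh samples for every test, whereas the paper's statement uses the same $m$ for both $\DLWE$ and $\LWE$. Your observation is exactly what justifies the Fact as the paper states it: each of the $nq$ tests is marginally a correctly distributed $\DLWE$ instance with exponentially small failure probability, so a union bound suffices.

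Your secret re-randomization is harmless but not strictly necessary. The probability that $s_i=g$ and $\cA$ errs is already at most $\cA$'s failure probability on the LWE branch.
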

	We now show that an efficient algorithm to estimate $o^{(2)}$ can be used to solve DLWE with some standard parameters.
	To avoid confusion, we denote $n_{AB}$ as the size of the system to estimate $o^{(2)}$, and $n$ as the size parameter for $\DLWE$ in the following.
	\begin{lemma}
		\label{lemma:reduce-lwe}
		Let $0\leq\varepsilon< \tfrac{1}{10}$ and $0\leq\delta<\tfrac1{3}$ be fixed constants, and let $n_A,n_B\geq S(n_{AB})$ for some non-decreasing function $S$. Assume there exists an efficient algorithm $\cA$ such that given classical descriptions of a polynomial-sized quantum circuit $C$ for preparing a state $\ket\psi = C\ket{0}^{\otimes n_{AB}}$, and an efficiently measurable observable $O$, it estimates $o_{\psi}^{(2)}$ such that
		\begin{equation}
			\abs{\mathcal{A}\bigl(C, O\bigr) - o_{\psi}^{(2)}} < \varepsilon
		\end{equation}
		with probability at least $1-\delta$, then there exists a $\poly(S^{-1}(\cO(n\log^2 n)))$-time algorithm for $\DLWE_{n,m,q,\beta}$ with $m=2n\log q,q=\poly(n),\beta\leq\frac{q}{32n^{1.5}\log q}$ for prime $q$. Here $S^{-1}$ is the inverse function of $S$.
	\end{lemma}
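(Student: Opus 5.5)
We reduce $\DLWE$ to estimating $o^{(2)}$ by building, from a $\DLWE$ instance $(A,\bm y)$, a polynomial-size circuit $C_{A,\bm y}$ whose projected ensemble depends sharply on which case holds. Take $A$ to index the secret register and $B$ to hold a syndrome-like label. One candidate is the state obtained by preparing a superposition over $(\bm s',\bm e', b)$, with $\bm s'\in\mathbb Z_q^n$ uniform, $\bm e'$ a (truncated) discrete Gaussian sampled coherently, and $b\in\mathbb Z_q$. We then compute into register $B$ the vector $A^T\bm s'+\bm e'+b\bm y$ (or a compressed hash of it), and keep $(\bm s',b)$ in register $A$, suitably padded.

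The two cases should behave as follows.
\begin{itemize}
\item \emph{LWE case.} Here $A^T\bm s'+\bm e'+b\bm y=A^T(\bm s'+b\bm s)+(\bm e'+b\bm e)$. Each outcome $\bm z$ on $B$ is then consistent with several values of $b$, each paired with a shifted secret $\bm s'=\bm s_0-b\bm s$. So the projected state on $A$ is a superposition over $b$: a highly non-computational-basis state.
\item \emph{Uniform case.} For $m=2n\log q$, a uniform $\bm y$ lies far from the lattice $\{A^T\bm s\}+\mathbb Z_q\bm y$ except with negligible probability. The map $(\bm s',\bm e',b)\mapsto \bm z$ is therefore injective on the Gaussian-ball support (bounded-distance decoding uniqueness using $\beta\le q/(32n^{1.5}\log q)$). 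Each projected state is then a single computational-basis state $\ket{\bm s',b}$, up to the uncomputable $\bm e'$, which we keep in $B$ or make recoverable.
\end{itemize}

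To make these cases distinguishable, we choose the efficiently measurable observable $O=\sum_{\bm x}(\ketbra{\bm x})^{\otimes 2}$ from Eq.~\eqref{eq:measurable_observable}, possibly conjugated by a Fourier transform on the $b$ register. This yields $o^{(2)}\approx 1$ in the uniform case and $o^{(2)}\le 1/2$ (or $O(1/q)$) in the LWE case, a constant gap exceeding $2\varepsilon$. Running $\mathcal A$ on $(C_{A,\bm y},O)$ and thresholding at the midpoint gives a distinguisher with advantage at least $1-2\delta>1/3$. Repetition and majority vote then drive the failure probability exponentially small, as Fact~\ref{fact:desicion-to-search} requires.

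The main obstacle is the size accounting together with the injectivity and collision analysis. The circuit acts on roughly $n_A\approx (n+1)\log q$ qubits and $n_B\approx m\log q=O(n\log^2 n)$ qubits. To meet $n_A,n_B\ge S(n_{AB})$, we pad both registers with idle qubits in $\ket0$ until the total size $n_{AB}$ satisfies $S(n_{AB})\ge \cO(n\log^2 n)$, that is, $n_{AB}=S^{-1}(\cO(n\log^2 n))$. This is exactly where the running time $\poly(S^{-1}(\cO(n\log^2 n)))$ comes from, since padding in $\ket0$ leaves $o^{(2)}$ unchanged.

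The delicate analytic step is proving that the projected states in the uniform case are basis states with probability $1-\negl$ over $A$ and $\bm y$, and that in the LWE case the $b$-branches genuinely coincide on the same $\bm z$. The latter needs the noise $\bm e'+b\bm e$ to be statistically close to $\bm e'$, so that the amplitudes for different $b$ are comparable. I would first try to handle this by using a Gaussian width for $\bm e'$ much larger than $q\beta$ (noise flooding), restricting $b$ to a small range, while checking that flooding does not break the uniform-case uniqueness. If it does, I would replace $b\in\mathbb Z_q$ by $b\in\{0,1\}$, for which the flooding ratio needed is only polynomial.
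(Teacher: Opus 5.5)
Your fallback construction is essentially the paper's proof: $b\in\{0,1\}$ with polynomial noise flooding, injectivity from a good $A$ and a uniform $\bm y$, and padding to $n_{AB}=S^{-1}(\cO(n\log^2 n))$; the paper floods with $\bm u$ uniform on the box $[-\eta,\eta]^m$, $\eta=\lceil 2\beta m\sqrt n+1\rceil$, kept in register $A$, and note that keeping the flooding noise in $B$ (one of your options) would destroy the LWE-case superpositions, so it must sit in $A$ or be uncomputed from $(\bm z,\bm s',b)$. The paper's observable tests whether the two copies' $b$-bits differ, and on these one- or two-branch states its value is exactly one minus that of your collision observable, so your LWE-case value is $1-\tfrac12\Pr[\text{two-branch outcome}]$ (at most $3/4$ with the paper's box, not $\le\tfrac12$), which still leaves the required gap above $2\varepsilon$.
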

	\begin{proof}
		Suppose we are given a matrix $A$ and a vector $\bm y$.
		Set $\eta=\lceil 2\beta m\sqrt n+1\rceil$. For $b\in\{0,1\},\bmx\in\mathbb Z_q^n,\bm u\in\mathbb Z_q^m$, let $f(b,\bmx,\bm u)=A^T\bmx+\bm u+b\bm y$. Construct the state
		\begin{equation}\label{eq:state_LWE}
			\ket{\psi}_{AB}=\frac{1}{\sqrt{2q^n(2\eta+1)^m}}\sum_{b\in\{0,1\}}\sum_{\bmx\in\mathbb Z_q^n}\sum_{\bm u\in\mathbb [-\eta,\eta]^m}\ket{b,\bmx,\bm u}_A\ket{f(b,\bmx,\bm u)}_B
		\end{equation}
		and the observable
		\begin{equation}
			O=(\ketbra{0}{0}\otimes I_{n_A-1})\otimes(\ketbra{1}{1}\otimes I_{n_A-1})+(\ketbra{1}{1}\otimes I_{n_A-1})\otimes(\ketbra{0}{0}\otimes I_{n_A-1}).
		\end{equation}
		The state and the observable both have explicit polynomial-size quantum circuits to prepare or measure.
		
		We say a matrix $A$ is good, if for every non-zero $\bmx\in \mathbb Z_q^n$, $\|A^T\bmx\|_\infty>2\eta$. In this case, for every $b\in\{0,1\}$ and distinct pair $(\bmx,\bm u),(\bmx',\bm u')\in \mathbb Z_q^n\times [-\eta,\eta]^m$, $f(b,\bmx,\bm u)$ and $f(b,\bmx',\bm u')$ will be different.
		We next show that for all but exponentially small probability, $A$ is good. Fix an non-zero $\bmx\in \mathbb Z_q^n$. Suppose the $i$-th coordinate $x_i$ of $\bmx$ is non-zero, by the randomness of the $i$-th row of $A$, $A^T\bmx$ forms the uniform distribution over $\mathbb Z_q^m$. So
		\begin{equation}
			\Pr_{A}[\|A^T\bmx\|_\infty\leq 2\eta]= \left(\frac{4\eta+1}{q}\right)^m.
		\end{equation}
		Taking the union bound over all non-zero $\bmx$,
		\begin{equation}\begin{split}
				\Pr_{A}[\exists \bmx\in\mathbb Z_q^n,\bmx\neq \bm 0\land \|A^T\bmx\|_\infty\leq 2\eta]
				&\leq(q^n-1)\left(\frac{4\eta+1}{q}\right)^m\\
				&\leq(q^n-1)\left(\frac{16\beta n^{1.5}\log q+O(1)}{q}\right)^{2n\log q}\\
				&\leq(q^n-1)2^{-(2n-O(1))\log q}\\
				&\leq 2^{-(n-O(1))\log q}\\
				&=2^{-\Omega(n)}.
		\end{split}\end{equation}
		Here the second inequality follows from $\eta=\lceil 2\beta\sqrt nm+1\rceil$ and the third follows from $\beta\leq\frac{q}{32n^{1.5}\log q}$.
		
		Now we analysis the quantity $o^{(2)}$ for a good $A$. Suppose we measure the system $B$ and obtain outcome $\bmz$. Since $A$ is good, the post-measurement state $\psi_\bmz$ is either in form
		$\ket{b,\bmx,\bm u}$
		or in form
		$\frac{1}{\sqrt 2}(\ket{0,\bmx,\bm u}+\ket{1,\bmx',\bm u'})$.
		For the former case, $\tr(O\psi_\bmz^{\otimes 2})=0$; for the latter case, $\tr(O\psi_\bmz)^{\otimes 2}=\frac{1}{2}$. So
		\begin{equation}\begin{split}
				o^{(2)}_\psi&=\frac{1}{2}\Pr_{b,\bmx,\bm u}[\exists (\bmx',\bm u'),f(b,\bmx,\bm u)=f(1-b,\bmx',\bm u')]\\
				&=\frac{1}{2}\Pr_{\bmx,\bm u}[\exists (\bmx',\bm u'),f(0,\bmx,\bm u)=f(1,\bmx',\bm u')].
		\end{split}\end{equation}
		
		If $\bm y$ is drawn from the LWE distribution, we will have
		\begin{equation}\begin{split}
				f(1,\bmx,\bm u)&=A^T\bmx+\bm u+\bm y\\
				&=A^T(\bmx+\bms)+\bm u+\bm e\\
				&=f(0,\bmx+\bms,\bm u+\bm e).
		\end{split}\end{equation}
		Note that $\bmx+\bms\in \mathbb Z_q^n$ always holds. Moreover, if $\|\bm u+\bm e\|_\infty\leq \eta$ also holds, then choosing $\bmx'=\bmx+\bms,\bm u'=\bm u+\bm e$ yields $f(0,\bmx,\bm u)=f(1,\bmx',\bm u')$. Now we want to estimate $\|\bm e\|_\infty$ to show that $\|\bm u+\bm e\|_\infty\leq \eta$ with high probability.
		
		\begin{claim}
			\label{claim:gaussian}
			For every $t\in \mathbb N$,
			\begin{equation}
				\Pr_{\bm e\gets \cD_{\mathbb Z_q,\beta^2}^m}[\|\bm e\|_\infty>t]\leq \frac{2\beta^2 m}{t}e^{-t^2/2\beta^2}.
			\end{equation}
		\end{claim}
		
		Plug in $t=\beta\sqrt{n}$, then the probability that $\|\bm e\|_\infty>\beta\sqrt n$ is at most $\frac{2\beta m}{\sqrt n}e^{-n/2}$, which exponentially small in $n$. When $\|\bm e\|_\infty\leq \beta\sqrt{n}$, for every $\|\bm u\|_\infty\leq \eta-\beta\sqrt{n}$, we always have $\|\bm u+\bm e\|_\infty\leq \eta$. For such $\bm e$,
		\begin{equation}
			\begin{split}
				o^{(2)}_\psi&\geq \frac{1}{2}\Pr_{\bm u}[\|\bm u\|_\infty\leq \eta-\beta\sqrt{n}]\\
				&=\frac{1}{2}\left(\frac{2\eta-2\beta\sqrt n+1}{2\eta+1}\right)^m\\
				&\geq\frac{1}{2}\left(1-\frac{1}{2m}\right)^m\\
				&\geq\frac{1}{4}.
			\end{split}
		\end{equation}
		
		Now consider the case that $\bm y$ is drawn from the uniform distribution. Fix a pair of $(\bmx,\bm u)$. For every $\bmx'$,
		\begin{equation}\begin{split}
				\Pr_{\bm y\gets\cU(\mathbb Z_q^m)}[\exists \bm u',f(0,\bmx,\bm u)=f(1,\bmx',\bm u')]&=\Pr_{\bm y\gets \cU(\mathbb Z_q^m)}[\|f(0,\bmx,\bm u)-A^T\bmx'-\bm y\|_\infty \leq \eta]\\
				&=\Pr_{\bm y'\gets \cU(\mathbb Z_q^m)}[\|\bm y'\|_\infty \leq \eta]\\
				&=\left(\frac{2\eta+1}{q}\right)^m.
			\end{split}
		\end{equation}
		Taking a union bound over all $\bmx'$, we have
		\begin{equation}\begin{split}
				\EE{\bm y\gets \cU(\mathbb Z_q^m)}o^{(2)}_\psi&\leq\frac{1}{2}q^n\left(\frac{2\eta+1}{q}\right)^m\\
				&\leq 2^{-n\log q}.
		\end{split}\end{equation}
		By a Markov bound, for all but $2^{-(n\log q)/2}$ ratio of $\bm y$, we have $o^{(2)}_\psi\leq 2^{-(n\log q)/2}$.
		
		For $1-2^{-\Omega(n)}$ ratio of inputs, we can invoke $\cA$ and check whether the output is greater than $\frac{1}{8}$ to distinguish the two distributions with success probability $1-\delta$. Repeat this process for polynomially many times, we can boost the success probability to $1-2^{-\Omega(n)}$.
		
		The sizes $n_A$ and $n_B$ of the two systems should be at least $2m\log q$. Since $n_A,n_B\geq S(n_{AB})$, we can select $n_{AB}=S^{-1}(2m\log q)$ such that the state $\ket{\psi}_{AB}$ can be placed into the systems $A$ and $B$.
		Therefore, $\cA$ runs in $\poly(n_{AB})=\poly(S^{-1}(2m\log q))=\poly(S^{-1}(\cO(n\log^2 n)))$ time.
	\end{proof}
	\begin{proof}[Proof of Claim \ref{claim:gaussian}]
		For each coordinate $e_i$,
		\begin{equation}\begin{split}
				\Pr_{e_i\gets \cD_{\mathbb Z_q,\beta^2}}[|e_i|>t]&=\frac{2\sum_{x=t+1}^{q/2}e^{-x^2/2\beta^2}}{\sum_{z=-q/2}^{q/2}e^{-z^2/2\beta^2}}\\
				&\leq 2\sum_{x=t+1}^{q/2}e^{-x^2/2\beta^2}\\
				&\leq 2\int_{t}^{q/2}e^{-x^2/2\beta^2}\mathrm d x\\
				&\leq 2\int_{t}^{\infty}\frac{x}{t}e^{-x^2/2\beta^2}\mathrm d x\\
				&=\frac{2\beta^2}{t}e^{-t^2/2\beta^2}.
		\end{split}\end{equation}
		The first inequality holds because $e^{-z^2/2\beta^2}=1$ when $z=0$. The second inequality comes from the monotonicity of $e^{-x^2/2\beta^2}$.
		A union bound over all $m$ coordinates completes the proof.
	\end{proof}
	
	\begin{proof}[Proof of Theorem~\ref{thm:no-go-physical-prior}]
		Substituting $S(n_{AB})=\Omega(n_{AB}^c)$ into Lemma \ref{lemma:reduce-lwe}, the algorithm for DLWE runs in $\poly(n)$-time. Combining with Fact \ref{fact:desicion-to-search}, we complete the proof of Theorem \ref{thm:no-go-physical-prior}.
	\end{proof}
	
	\subsection{Proof of Corollary~\ref{corollary:entanglement}}
	
	For a pure state $\ket{\psi}_{AB}$, we consider the entanglement of post-measurement states on subsystem $A$.
	Specifically, we consider the bipartition $L|R$, where $L$ denotes the first qubit of $A$ and $R$ denotes the remaining qubits. We define the entanglement probability
	\begin{equation}
		e_\psi:=\Pr_{z\gets p_\psi}[\ket{\psi_z}_A\text{ is entangled across the cut }L|R]    
	\end{equation}
	as the probability that the post-measurement state exhibits entanglement across this partition.
	
	To establish the computational hardness of this task, it suffices to prove the following lemma, which demonstrates that estimating $e_\psi$ reduces to solving the DLWE problem. Recall that $n_{AB}$ denotes the total system size, while $n$ represents the size parameter for $\DLWE$.
	\begin{lemma}
		\label{lemma:entanglement-reduce-lwe}
		Let $0\leq\varepsilon< \tfrac{1}{10}$ and $0\leq\delta<\tfrac1{3}$ be fixed constants, and let $n_A,n_B\geq S(n_{AB})$ for some non-decreasing function $S$. Assume there exists an efficient algorithm $\cA$ such that given classical descriptions of a polynomial-sized quantum circuit $C$ for preparing a state $\ket\psi = C\ket{0}^{\otimes n_{AB}}$, it estimates $e_{\psi}$ such that
		\begin{equation}
			\abs{\mathcal{A}\bigl(C) - e_\psi} < \varepsilon
		\end{equation}
		with probability at least $1-\delta$, then there exists a $\poly(S^{-1}(\cO(n\log ^2 n)))$-time algorithm for $\DLWE_{n,m,q,\beta}$ with $m=2n\log q,q=\poly(n),\beta\leq\frac{q}{32n^{1.5}\log q}$ for prime $q$. Here $S^{-1}$ is the inverse function of $S$.
	\end{lemma}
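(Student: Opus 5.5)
We reuse the construction from the proof of Lemma~\ref{lemma:reduce-lwe}. Given a DLWE instance $(A,\bm y)$, we build the same state $\ket{\psi}_{AB}$ of Eq.~\eqref{eq:state_LWE}, with $A$-register $\ket{b,\bmx,\bm u}$ and $B$-register $\ket{f(b,\bmx,\bm u)}$. We then order the qubits of $A$ so that the first qubit $L$ holds the bit $b$ and $R$ holds $(\bmx,\bm u)$. This state has an explicit polynomial-size preparation circuit $C$, and we feed $C$ to the hypothesized estimator $\cA$ of $e_\psi$. The core of the proof is to show that $e_\psi$ equals exactly the same collision probability that governed $o^{(2)}_\psi$, up to the factor $1/2$.

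\textbf{Step 1 (structure of projected states).} Condition on $A$ being good, which holds except with probability $2^{-\Omega(n)}$ as shown in the proof of Lemma~\ref{lemma:reduce-lwe}. Then each outcome $\bmz$ has one of two projected states:
\begin{itemize}
\item a product state $\ket{b}_L\ket{\bmx,\bm u}_R$, which is separable across $L|R$;
\item $\frac{1}{\sqrt2}\bigl(\ket{0}_L\ket{\bmx,\bm u}_R+\ket{1}_L\ket{\bmx',\bm u'}_R\bigr)$.
\end{itemize}
In the second case $(\bmx,\bm u)\neq(\bmx',\bm u')$, so the $R$-states are orthogonal and the state is maximally entangled across $L|R$ (a Bell-like state). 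To justify $(\bmx,\bm u)\neq(\bmx',\bm u')$: equality would require $\bm y=\bm 0$. A nonzero $\bm y$ is the generic case and can be enforced; if $\bm y=\bm 0$, the reduction can simply rerandomize it or output a trivial answer.

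\textbf{Step 2 (computing $e_\psi$).} From Step~1, $e_\psi=\Pr_{b,\bmx,\bm u}[\exists(\bmx',\bm u'):f(b,\bmx,\bm u)=f(1-b,\bmx',\bm u')]$, which equals $2o^{(2)}_\psi$ in the notation of that proof.
\begin{itemize}
\item \emph{LWE case.} The bound there gives $e_\psi\ge 1/2$, using Claim~\ref{claim:gaussian} for $\|\bm e\|_\infty\le\beta\sqrt n$ with probability $1-2^{-\Omega(n)}$.
\item \emph{Uniform case.} The union bound over $\bmx'$ gives $\mathbb E_{\bm y}e_\psi\le q^n((2\eta+1)/q)^m\le 2\cdot 2^{-n\log q}$. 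By Markov's inequality, $e_\psi\le 2^{-\Omega(n)}$ for all but a $2^{-\Omega(n)}$ fraction of $\bm y$.
\end{itemize}
The gap between $1/2$ and $\approx0$ exceeds $2\varepsilon$ since $\varepsilon<1/10$. Thresholding $\cA$'s output at $1/4$ therefore distinguishes the two cases with probability at least $1-\delta>2/3$, and majority voting over polynomially many repetitions boosts this to $1-2^{-\Omega(n)}$.

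\textbf{Step 3 (parameters).} As before, we need $n_A,n_B\ge 2m\log q=\cO(n\log^2 n)$, so we take $n_{AB}=S^{-1}(\cO(n\log^2n))$, giving running time $\poly(S^{-1}(\cO(n\log^2 n)))$. Fact~\ref{fact:desicion-to-search} then converts this into an LWE solver. For Corollary~\ref{corollary:entanglement}, $S(n_{AB})=\widetilde\Theta(\log n_{AB})$ makes $S^{-1}$ of $\cO(n\log^2 n)$ equal to $2^{\widetilde O(n)}$, contradicting subexponential hardness when $n_A,n_B$ are polylogarithmic; $\cA$ must be superpolynomial.

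\textbf{Main obstacle.} The delicate point is Step~1: every projected state must be exactly either a product state or a state with genuine $L|R$ entanglement, so that $e_\psi$, which is a discontinuous indicator-type quantity, is determined exactly by the collision event. Goodness of $A$ rules out more than two preimages per outcome, and $\bm y\neq \bm 0$ rules out degenerate superpositions that collapse to product states. Both failure events have exponentially small probability, so they can be folded into the DLWE error.
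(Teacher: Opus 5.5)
Your proposal is correct and follows the paper's proof essentially verbatim. You reuse the state of Eq.~\eqref{eq:state_LWE} with $L$ as the $b$-register, and use goodness of $A$ together with $\bm y\neq\bm 0$ to show that each projected state is either a product state or a Schmidt-rank-2 entangled state, so that $e_\psi=2o^{(2)}_\psi$. You then inherit the LWE/uniform gap and the parameter choice from Lemma~\ref{lemma:reduce-lwe}, exactly as the paper does.

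One small caveat concerns only your aside on Corollary~\ref{corollary:entanglement}, not the lemma itself. A running time of $2^{\widetilde O(n)}$ does not contradict sub-exponential hardness. To make $S^{-1}(\cO(n\log^2 n))=2^{o(n)}$, $S$ must carry an extra polylogarithmic factor, e.g.\ $S(x)=\log x(\log\log x)^{2.1}$ as in the paper.
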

	\begin{proof}
		Suppose we are given a matrix $A$ and a vector $\bm{y}$ as an instance of the DLWE problem. First, consider the case where $\bm{y}=0$. Since the probability of drawing the all-zero vector from the uniform distribution over $\mathbb{Z}_q^m$ is $2^{-\Omega(m)}$, we can conclude with high probability that $\bm{y}$ originates from the LWE distribution. We proceed assuming $\bm{y} \neq 0$.
		
		We construct the state $\ket{\psi}_{AB}$ following the same procedure as in the proof of Lemma~\ref{lemma:reduce-lwe}. 
		Let $L$ be the register for $b$ and $R$ be the register for $\bmx$ and $\bm u$ in Eq.~\eqref{eq:state_LWE}. We have shown that for a good $A$, each post-measurement state $\psi_\bmz$ is either in form $\ket{b}_L\ket{\bmx,\bm u}_R$ or in form $\frac{1}{\sqrt 2}(\ket{0}_L\ket{\bmx,\bm u}_R+\ket{1}_L\ket{\bmx',\bm u'}_R)$. The former state is separable. For the latter state, since $\bm y\neq 0$, the states $\ket{\bmx,\bm u}_R$ and $\ket{\bmx',\bm u'}_R$ in $R$ system cannot be the same. Hence, the state is entangled. Therefore, when $A$ is good and $\bm y$ is non-zero, $e_\psi=2o_\psi^{(2)}$ holds for the quantity $o_\psi^{(2)}$ defined in the proof of Lemma \ref{lemma:reduce-lwe}. One can use $\cA$ to estimate $o_\psi^{(2)}$ and hence solve DLWE.
	\end{proof}
	
	\begin{proof}[Proof of Corollary~\ref{corollary:entanglement}]
		Setting $S(n_{AB})=\log n_{AB}(\log\log n_{AB})^{2.1}\in \widetilde{\Theta}(\log n_{AB})$ gives $S^{-1}(x)=2^{\cO\left(\frac{x}{\log^{2.1} x}\right)}$. By Lemma \ref{lemma:entanglement-reduce-lwe}, we get an algorithm for DLWE runs in $\poly(S^{-1}(\cO(n\log ^2 n)))=2^{o(n)}$-time. Combining with Fact \ref{fact:desicion-to-search}, we complete the proof of Corollary~\ref{corollary:entanglement}.
	\end{proof}
\end{document}